\def\llncs{0}
\def\fullpage{1}
\def\anonymous{0}
\def\authnote{1}
\def\notxfont{0}
\def\submission{0}
\def\cameraready{0}

%%%%%%%%%%%% Define modes %%%%%%%%%%%%%
\ifnum\submission=1
\def\anonymous{1}
\def\llncs{1}
\def\authnote{0}
\else
\fi

\ifnum\cameraready=1
\def\submission{1}
\def\llncs{1}
\def\authnote{0}
\def\anonymous{0}
\else
\fi

\ifnum\anonymous=1
\def\llncs{1}
\def\authnote{0}
\else
\fi

\ifnum\llncs=1
	\documentclass[envcountsect,a4paper,runningheads,10pt]{llncs}
\else
	\documentclass[letterpaper,hmargin=1.05in,vmargin=1.05in]{article}
			\ifnum\fullpage=1
		\usepackage{fullpage}
		\fi
\fi

%=========  Preamble Part ===========
\usepackage[%
%  dvipdfmx, 
  colorlinks=true,
  citecolor=blue,
  pagebackref=true
]{hyperref}

\usepackage{amsmath, amsfonts, amssymb, mathtools,amscd}

\usepackage{amsthm}

\usepackage{lmodern}
\usepackage[T1]{fontenc}
\usepackage[utf8]{inputenc}

\usepackage{arydshln} % In order to use \hdashline
\usepackage{url}
\usepackage{ifthen}
\usepackage{bm}
\usepackage{multirow}
\usepackage[dvips]{graphicx}
\usepackage[usenames]{color}
\usepackage{xcolor,colortbl} % Leave out in case the usepackage cannot be found. Not critical
\usepackage{threeparttable}
\usepackage{comment}
\usepackage{paralist,verbatim}
\usepackage{cases}
\usepackage{booktabs}
\usepackage{braket}
\usepackage{cancel} 
\usepackage{ascmac} 
\usepackage{framed}
\usepackage{physics}
\usepackage{authblk}
\usepackage{pifont}
\usepackage{autonum}
\definecolor{darkblue}{rgb}{0,0,0.6}
\definecolor{darkgreen}{rgb}{0,0.5,0}
\definecolor{maroon}{rgb}{0.5,0.1,0.1}
\definecolor{dpurple}{rgb}{0.2,0,0.65}

\usepackage[capitalise,noabbrev]{cleveref}
\usepackage[absolute]{textpos}
\usepackage[final]{microtype}
\usepackage[absolute]{textpos}
\usepackage{everypage}

%%%%%%%%%%%%%%% Proof environment in non-LNCS style %%%%%%%%%%%%%%%%%%%
\newtheoremstyle{thicktheorem}%
{\topsep}
{\topsep}
{\itshape}{}%
{\bfseries}%
{.}
{ }%
{\thmname{#1}\thmnumber{ #2}%
		\thmnote{ (#3)}%
}

\newtheoremstyle{remark}%name
{\topsep}
{\topsep}
	{}%body font
	{}%indent amount
	{}%theorem head font
	{.}%punctuation after theorem head
	{ }%space after theorem head
	{\textit{\thmname{#1}}\thmnumber{ #2}%theorem head specs
			\thmnote{ (#3)}%
	}

\ifnum\llncs=0
	\theoremstyle{thicktheorem}
	\newtheorem{theorem}{Theorem}[section]
	\newtheorem{lemma}[theorem]{Lemma}
	
	\newtheorem{proposition}[theorem]{Proposition}
	\newtheorem{definition}[theorem]{Definition}
	\newtheorem{game}[theorem]{Game}

	\theoremstyle{remark}
	
	\newtheorem{remark}[theorem]{Remark}

\else
\fi
%%%%%%%%%%%%%%%%%%%%%%%%%%%%%%%%%%%%%%%%%%%%%%%%%%%%%%%%%%%%%%%%%%%%%%%%%%%%

	\crefname{theorem}{Theorem}{Theorems}
	\crefname{assumption}{Assumption}{Assumptions}
	\crefname{construction}{Construction}{Constructions}
	\crefname{corollary}{Corollary}{Corollaries}
	\crefname{conjecture}{Conjecture}{Conjectures}
	\crefname{definition}{Definition}{Definitions}
	\crefname{exmaple}{Example}{Examples}
	\crefname{experiment}{Experiment}{Experiments}
	\crefname{counterexample}{Counterexample}{Counterexamples}
	\crefname{lemma}{Lemma}{Lemmata}
	\crefname{observation}{Observation}{Observations}
	\crefname{proposition}{Proposition}{Propositions}
	\crefname{remark}{Remark}{Remarks}
	\crefname{claim}{Claim}{Claims}
	\crefname{fact}{Fact}{Facts}
	\crefname{note}{Note}{Notes}

\ifnum\llncs=1
 \crefname{appendix}{App.}{Appendices}
 \crefname{section}{Sec.}{Sections}
\else
\fi

\ifnum\llncs=1
\pagestyle{plain}
\renewcommand*{\backref}[1]{}
\else
	\renewcommand*{\backref}[1]{(Cited on page~#1.)}
	\ifnum\notxfont=1
	\else
		\usepackage{newtxtext}
	\fi
\fi

\ifnum\authnote=0  %%%%% Remove comments %%%%%
\newcommand{\mor}[1]{}
\newcommand{\taiga}[1]{}
\newcommand{\ryo}[1]{}
\newcommand{\takashi}[1]{}

\else
\newcommand{\mor}[1]{$\ll$\textsf{\color{red} Tomoyuki: { #1}}$\gg$}
\newcommand{\taiga}[1]{$\ll$\textsf{\color{magenta} Taiga: { #1}}$\gg$}
\newcommand{\takashi}[1]{$\ll$\textsf{\color{orange} Takashi: { #1}}$\gg$}
\newcommand{\ryo}[1]{$\ll$\textsf{\color{darkgreen} Ryo: { #1}}$\gg$}

 % command to highlight revision
\fi

%%%Macros for quantum 
% \newcommand{\Tr}{\mathrm{Tr}}

%%%Macros for 5-sim QMA
\newcommand{\QMA}{\compclass{QMA}}
\newcommand{\hist}{\mathsf{hist}}
\newcommand{\simulator}{\mathsf{sim}}
\newcommand{\statement}{\mathsf{x}}
\newcommand{\yes}{\mathsf{yes}}
\newcommand{\no}{\mathsf{no}}
\newcommand{\witness}{\mathsf{w}}

%%% Macros for Bit Commitment

\newcommand{\Commit}{\algo{Commit}}
\newcommand{\Verify}{\algo{Verify}}

\newcommand{\cment}{\mathsf{com}}
\newcommand{\dment}{\mathsf{d}}

\newcommand{\Classical}{\algo{Classical}}

%%% Macros for Certified Deletion %%%
\newcommand{\Delete}{\algo{Del}}
\newcommand{\Cert}{\algo{Cert}}
\newcommand{\cert}{\mathsf{cert}}

%%% Macros for OW-CPA PKE

%%% Macros for NCE

%%% Macros for IO

%%% Macros for NIZK

\newcommand{\Sim}{\algo{Sim}}

\newcommand{\NP}{\compclass{NP}}

%%% Macros for NTCF

%%% Theorem environments END

%%% operators

\newcommand{\lrun}{\leftarrow}

\newcommand{\la}{\leftarrow}
\newcommand{\ra}{\rightarrow}

\newcommand{\seteq}{\coloneqq}

\newcommand{\cA}{\mathcal{A}}
\newcommand{\cB}{\mathcal{B}}

\newcommand{\cD}{\mathcal{D}}

\newcommand{\cM}{\mathcal{M}}

\newcommand{\cP}{\mathcal{P}}
\newcommand{\cQ}{\mathcal{Q}}

\newcommand{\cS}{\mathcal{S}}

\newcommand{\cV}{\mathcal{V}}

%%%%% hat Greek letter %%%%%%

% \newcommand{\tlx}{\widetilde{x}}
% \def\tlr{\widetilde{r}}
% \def\tlR{\widetilde{R}}
% \def\tlK{\widetilde{K}}
% \def\tlG{\widetilde{G}}
% \def\tlY{\widetilde{Y}}
% \def\tlP{\tilde{P}}
% \def\tlQ{\tilde{Q}}
% \def\tlU{\widetilde{U}}
% \def\tlV{\tilde{V}}
% \def\tlC{\widetilde{C}}
% \def\tly{\tilde{y}}
% \def\tlu{\widetilde{u}}
% \def\tlu{\widetilde{u}}
% \def\tlf{\widetilde{f}}

\def\makeuppercase#1{
\expandafter\newcommand\csname tl#1\endcsname{\widetilde{#1}}
}

\def\makelowercase#1{
\expandafter\newcommand\csname tl#1\endcsname{\widetilde{#1}}
}

%%% Sets
\newcommand{\N}{\mathbb{N}}

\newcommand{\R}{\mathbb{R}}

%%% spaces

\newcommand{\Ms}{\mathcal{M}}

\newcommand{\Cs}{\mathcal{C}}
\newcommand{\Ks}{\mathcal{K}}

%%% parameters

\newcommand{\secp}{\lambda}

%%% Parties

%%% abbreviated primitive name
%%% security definitions

%%% assumptions
%%% standard assumptions

%%% Games, Experiments
% \newcommand{\sfreal}[2]{\mathsf{Real}^{#1}\textrm{-}\mathsf{#2}}
% \newcommand{\sfsim}[2]{\mathsf{Sim}^{#1}\textrm{-}\mathsf{#2}}

\newcommand{\adva}[2]{\mathsf{Adv}_{#1}^{\mathsf{#2}}}
\newcommand{\advb}[3]{\mathsf{Adv}_{#1}^{\mathsf{#2} \mbox{-} \mathsf{#3}}}
\newcommand{\advc}[4]{\mathsf{Adv}_{#1}^{\mathsf{#2} \mbox{-} \mathsf{#3} \mbox{-} \mathsf{#4}}}

\newcommand{\expa}[3]{\mathsf{Exp}_{#1}^{ \mathsf{#2} \mbox{-} \mathsf{#3}}}
\newcommand{\expb}[4]{\mathsf{Exp}_{#1}^{ \mathsf{#2} \mbox{-} \mathsf{#3} \mbox{-} \mathsf{#4}}}

\newcommand{\sfhyb}[2]{\mathsf{Hyb}_{#1}^{#2}}

%Keys, messages, ciphertext, signatures
\newcommand*{\sk}{\keys{sk}}

\newcommand{\ck}{\keys{ck}}

\newcommand{\ct}{\keys{CT}}

\newcommand{\CT}{\keys{CT}}

\newcommand*{\msg}{\keys{msg}}

%%%%%%%%%%%%%%%%%%%%%%%%%%%%
% Utilities
%%%%%%%%%%%%%%%%%%%%%%%%%%%%
\newcommand*{\keys}[1]{\mathsf{#1}}

\newcommand*{\algo}[1]{\ensuremath{\mathsf{#1}}}

%{\mathsf{int}(#1)}

\newcommand{\compclass}[1]{\textbf{\textrm{#1}}}
%\newcommand*{\ahyph}{\attack{\mathchar`-}}
%\newcommand*{\security}[1]{\ensuremath{\textsc{#1}}}
%\newcommand*{\shyph}{\security{-}}
%\newcommand{\prob}[2][]{\mathcmd{\Pr\left[{#2}\if!#1!\else\;\middle\vert\;{#1}\fi\right]}} % probability

%%%%%%%%%%%%%%%%%%%%%%%%%%%%%%%%%%%%%%%%%%%%%%%%%%%%%%%%%%%%%%%%%%%%%%%%%%%

\newenvironment{boxfig}[2]{\begin{figure}[#1]\fbox{\begin{minipage}{0.97\linewidth}
                        \vspace{0.2em}
                        \makebox[0.025\linewidth]{}
                        \begin{minipage}{0.95\linewidth}
            {{
                        #2 }}
                        \end{minipage}
                        \vspace{0.2em}
                        \end{minipage}}}{\end{figure}}

%Example:
%\protocol{Header}{Caption}{label}{the protocol}

%%%%%%%%%%%%%%%%%%%%%%%%%%%%%%%%%%%%%%%%%%%%%%%%%%%%%%%%%%%%%%%%%%%%%%%%%%%

\newcommand{\bit}{\{0,1\}}

%%% Font Style

%%\newcommand{\hyphen}{\textrm{-}}

%%% Functions

%%% special messages

%%% Algorithms

\newcommand{\keygen}{\algo{KeyGen}}

\newcommand{\Enc}{\algo{Enc}}
\newcommand{\Dec}{\algo{Dec}}

\newcommand\SKE{\algo{SKE}}

\newcommand{\ske}{\mathsf{ske}}

%% All-but-one reduction and watermarking

%%% Indistinguishability

%\newcommand{\check}{\stackrel{{?}{=}}

\newcommand{\negl}{{\mathsf{negl}}}

%\newcommand{\adva}[2]{\mathsf{Adv}_{#1}^{\mathrm{#2}}}

%%----------------------

%%-------------------------------

%%%%%%%%%%%%%%%%%%%%%%% text macros

%mathop
%%%%%%%%%%%%%%%%%%%%%%% general useful macros

\newcommand{\poly}{{\mathrm{poly}}}

% \newcommand{\pr}[1]{\Pr\left[#1\right]}

%\newcommand{\getsr}{\mathbin{\stackrel{\mbox{\tiny R}}{\gets}}}

% \newcommand{\Exp}{\mathop{\mathrm{E}\displaylimits}}
% \newcommand{\Var}{\mathop{\mathrm{Var}\displaylimits}}

% \input{stydef}

%-------- START: LNCS branch ----------
\ifnum\llncs=1
%  See the following url for the meaning of the commands.   https://tex.stackexchange.com/questions/160053/vec-conflict-between-llncs-cls-and-mnsymbol
\let\oldvec\vec% Store \vec in \oldvec
\let\vec\oldvec% Restore \vec from \oldvec
%
% for page number llncs
%\pagestyle{headings} 
%
% make a proper TOC despite llncs
\setcounter{tocdepth}{2}
\makeatletter
\renewcommand*\l@author[2]{}
\renewcommand*\l@title[2]{}
\makeatletter

\else
\fi    
%-------- END: LNCS branch ----------

\theoremstyle{remark}

%================================
%=========  MAIN BODY ============    
%================================
\title{
\textbf{Certified Everlasting Zero-Knowledge Proof for QMA}
%%\thanks{}
}

%\date{}
\begin{document}
%\author{}
%\institute{}

\ifnum\anonymous=1
\author{\empty}\institute{\empty}
\else
%
%  For camera ready version.
%
\ifnum\llncs=1
\author{
	Taiga Hiroka\inst{1} \and Tomoyuki Morimae\inst{1,2} \and Ryo Nishimaki\inst{3} \and Takashi Yamakawa\inst{3}
}
\institute{
	Yukawa Institute for Theoretical Physics, Kyoto University, Japan \and PRESTO, JST, Japan \and NTT Secure Platform Laboratories
}
\else
%
%   For full/eprint version, etc.
%

\author[1]{Taiga Hiroka}
\author[1]{\hskip 1em Tomoyuki Morimae}
\author[2]{\hskip 1em Ryo Nishimaki}
\author[2]{\hskip 1em Takashi Yamakawa}
\affil[1]{{\small Yukawa Institute for Theoretical Physics, Kyoto University, Kyoto, Japan}\authorcr{\small \{taiga.hiroka,tomoyuki.morimae\}@yukawa.kyoto-u.ac.jp}}
\affil[2]{{\small NTT Corporation, Tokyo, Japan}\authorcr{\small \{ryo.nishimaki.zk,takashi.yamakawa.ga\}@hco.ntt.co.jp}}
\renewcommand\Authands{, }
\fi %%%%% END OF LNCS branch
\fi

\ifnum\llncs=1
\date{}
\else
\date{\today}
\fi

\maketitle

\begin{abstract}
In known constructions of classical zero-knowledge protocols for $\NP$, either of zero-knowledge or soundness holds only 
against computationally bounded adversaries.
Indeed, achieving both statistical zero-knowledge and statistical soundness at the same time with classical verifier 
is impossible for $\NP$ unless the polynomial-time hierarchy collapses,
and it is also believed to be impossible even with a quantum verifier.
In this work, we introduce a novel compromise, which we call
the certified everlasting zero-knowledge proof for $\QMA$.
It is a computational zero-knowledge proof for $\QMA$, but
the verifier issues a classical certificate that shows that the verifier has deleted its quantum information. 
If the certificate is valid, even unbounded malicious verifier can no longer learn anything beyond the validity of the statement.

We construct a certified everlasting zero-knowledge proof for $\QMA$.
For the construction, we introduce a new quantum cryptographic primitive, which we call
commitment with statistical binding and certified everlasting hiding,
where the hiding property becomes statistical once the receiver has issued a valid certificate that
shows that the receiver has deleted the committed information.
We construct commitment with statistical binding and certified everlasting hiding 
from quantum encryption with certified deletion by Broadbent and Islam [TCC 2020] (in a black box way),
and then combine it with the quantum sigma-protocol for $\QMA$ by Broadbent and Grilo [FOCS 2020]
to construct the certified everlasting zero-knowledge proof for $\QMA$.
Our constructions are secure in the quantum random oracle model.
Commitment with statistical binding and certified everlasting hiding itself is of independent interest, and there will be many other 
useful applications beyond zero-knowledge.
\end{abstract}

\if0
\begin{abstract}

\end{abstract}
\fi

\ifnum\cameraready=1
\else
\ifnum\llncs=1
\else
\newpage
  \setcounter{tocdepth}{2}      % sections in table if depth < i
  \setcounter{secnumdepth}{2}   % sections numbered if depth < i
  \setcounter{page}{0}          % set the table contents page as 0-th page
  \tableofcontents
  \thispagestyle{empty}
  \clearpage
\fi
\fi

% !TEX root = main.tex
% !TEX spellcheck = en-US

\section{Introduction}
\subsection{Background}\label{sec:background}

Zero-knowledge~\cite{SICOMP:GolMicRac89},
which roughly states that the verifier cannot learn anything beyond the validity of the statement,
is one of the most important concepts in cryptography and computer science. 
The study of zero-knowledge has a long history in classical cryptography, and recently there have been many results
in quantum cryptography.
In known constructions of classical zero-knowledge protocols for $\NP$, either of zero-knowledge or soundness holds only 
against computationally bounded adversaries.
Indeed, achieving both statistical zero-knowledge and statistical soundness at the same time with classical verifier 
is impossible for $\NP$ unless the polynomial-time hierarchy collapses~\cite{STOC:Fortnow87}.
It is also believed to be impossible even with a quantum verifier~\cite{MW18}.

%If achieving both of them at the same time is impossible, the next best plan is to weaken one property while keeping the other.
Broadbent and Islam~\cite{TCC:BroIsl20} recently suggested an idea of the novel compromise:
realizing ``everlasting zero-knowledge'' by using quantum encryption with certified deletion.
The everlasting security defined by Unruh~\cite{C:Unruh13} states that the protocol remains secure as long as 
the adversary runs in polynomial-time during the execution of the protocol.
Quantum encryption with certified deletion introduced by Broadbent and Islam~\cite{TCC:BroIsl20} is a new quantum cryptographic primitive where
a classical message is encrypted into a quantum ciphertext,
and the receiver in possession of a quantum ciphertext can generate a classical certificate that shows that
the receiver has deleted the quantum ciphertext.
If the certificate is valid, the receiver can no longer decrypt the message even if it receives the secret key.
Broadbent and Islam’s idea is to use quantum commitment with a similar certified deletion security
to encrypt the first message from the prover to the verifier in the standard $\Sigma$-protocol.
Once the verifier issues the deletion certificate for all commitments that are not opened by the verifier's challenge,
even an unbounded verifier can no longer access the committed values of the unopened commitments.
They left the formal definition and the construction as future works.

There are many obstacles to realizing their idea.
First, their quantum encryption with certified deletion cannot be directly used in a $\Sigma$-protocol because
it does not have any binding property. Their ciphertext consists of a classical and quantum part. The classical part is $m\oplus u\oplus H(r)$, 
where $m$ is the plaintext, $u$ and $r$ are random bit strings, and $H$ is a hash function. The quantum part is
a random BB84 states whose computational basis states encode $r$.
The decryption key is $u$ and the place of computational basis states that encode $r$, 
and therefore it is not binding: by changing $u$, a different message can be obtained.
We therefore need to extend quantum encryption with certified deletion
in such a way that the statistical binding property is included. 
%In other words, we have the following question:
%\begin{center}
%{\it Is it possible to construct bit commitment with statistical binding and certified everlasting hiding?}
%\end{center}
%Here, certified everlasting hiding means that the hiding property becomes statistical once a valid certificate
%that shows that the receiver has deleted the information has been issued.

Second, defining a meaningful notion of ``everlasting zero-knowledge proof'' itself is non-trivial.
In fact, everlasting zero-knowledge proofs for $\QMA$ or even for $\NP$ in the sense of Unruh's definition~\cite{C:Unruh13} 
are unlikely to exist.\footnote{We mention that everlasting zero-knowledge \emph{arguments}, which only satisfy computational soundness, can exist. Indeed, any statistical zero-knowledge argument is everlasting zero-knowledge argument. One may think that the computational soundness is fine since that ensures everlasting soundness in the sense of Unruh's definition~\cite{C:Unruh13}. For practical purposes, this may be true. On the other hand, we believe that it is theoretically interesting to pursue (a kind of) everlasting zero-knowledge without compromising the soundness as is done in this paper.}
%unless $\mathbf{NP}\subseteq\mathbf{QSZK}$, which is not believed to happen where $\mathbf{QSZK}$ is the class of languages that have quantum statistical zero-knowledge proofs  \cite{MW18}.
To see this, recall that the definition of quantum statistical zero-knowledge \cite{SIAM:Wat09,MW18} requires a simulator to simulate the view of a \emph{quantum polynomial-time} malicious verifier in a statistically indistinguishable manner. Therefore, everlasting zero-knowledge in the sense of Unruh's definition~\cite{C:Unruh13} is actually equivalent to quantum statistical zero-knowledge. On the other hand, as already mentioned, it is believed that quantum statistical zero-knowledge proofs for $\NP$ do not exist \cite{MW18}. In particular, Menda and Watrous~\cite{MW18} constructed an oracle relative to which quantum statistical zero-knowledge proofs for (even a subclass of) $\NP$ do not exist.

%{\color{blue}The reason is as follows.  
%If a protocol satisfies everlasting zero-knowledge, this in particular means that the protocol satisfies honest-verifier quantum statistical zero-knowledge \cite{FOCS:Watrous02} since an honest verifier runs in polynomial-time. 
%Moreover, Watrous showed that $\mathbf{QSZK_{HV}}=\mathbf{QSZK}$~\cite{SIAM:Wat09}, where $\mathbf{QSZK_{HV}}$ and $\mathbf{QSZK}$ are the classes of languages that have honest-verifier quantum statistical zero-knowledge proofs and those that have quantum statistical zero-knowledge (against malicious verifiers), respectively. }

However, we notice that this argument does not go through for \emph{certified everlasting zero-knowledge}, where
the verifier can issue a classical certificate that shows that the verifier has deleted its information.
Once a valid certificate has been issued, even unbounded malicious verifier can no longer learn anything
beyond the validity of the statement.
The reason is that certified everlasting zero-knowledge does not imply statistical zero-knowledge since it does not ensure any security against a malicious verifier that refuses to provide a valid certificate of deletion.  
%{\color{blue}The key point is that the honest-verifier quantum statistical zero-knowledge defined in \cite{FOCS:Watrous02} requires a simulator 
%to simulate honest verifier's internal state \emph{at any point} of the protocol execution. On the other hand, certified everlasting zero-knowledge does not require this. It only requires a simulator to simulate the (possibly malicious) verifier's internal state \emph{after completing the protocol} conditioned on that the prover accepts the certificate. }
%This is how the above impossibility is bypassed.
Therefore, we have the following question.
\begin{center}
\emph{
Is it possible to define and construct a certified everlasting zero-knowledge proof for $\QMA$?
}
\end{center}

\subsection{Our Results}
In this work, we define and construct the certified everlasting zero-knowledge proof for $\QMA$.
This goal is achieved in the following four steps.
\begin{enumerate}
\item
We define a new quantum cryptographic primitive, which we call
{\it commitment with statistical binding and certified everlasting hiding}. 
In this new commitment scheme, binding is statistical but hiding is computational.
However, the hiding property becomes statistical once the receiver has issued a valid certificate that
shows that the receiver has deleted the committed information.
\item
We construct commitment with statistical binding and certified everlasting hiding.
We use secret-key quantum encryption with certified deletion as the building block in a black box way .
This construction is secure in the quantum random oracle model~\cite{AC:BDFLSZ11}.
\item
We define a new notion of zero-knowledge proof, which we call \emph{the certified everlasting zero-knowledge proof for} $\QMA$.
It is a computational zero-knowledge proof for $\QMA$ with the following additional property. A verifier can issue a classical
certificate that shows that the verifier has deleted its information. If the certificate is valid,
even unbounded malicious verifier can no longer learn anything beyond the validity of the statement.
\item
We apply commitment with statistical binding and certified everlasting hiding 
to the quantum $\Sigma$-protocol for $\QMA$ by Broadbent and Grilo~\cite{FOCS:BroGri20}
to construct the certified everlasting zero-knowledge proof for $\QMA$.
\end{enumerate}

We have three remarks on our results.
First, although our main results are the definition and the construction of the certified everlasting zero-knowledge
proof for $\QMA$, our commitment with statistical binding and certified everlasting hiding itself is also of independent interest.
There will be many other useful applications beyond zero-knowledge.
In fact, it is known that binding and hiding cannot be made statistical at the same time
even in the quantum world~\cite{LC97,Mayers97},
and therefore our new commitment scheme provides a nice compromise.

Second, our new commitment scheme and the new zero-knowledge proof
are the first cryptographic applications of symmetric-key quantum encryption with certified deletion.
Although certified deletion is conceptually very interesting, there was no concrete construction of cryptographic 
applications when it was first introduced~\cite{TCC:BroIsl20}.
One reason why the applications are limited is that 
in cryptography it is not natural to consider the case when the receiver receives the private key later.
Hiroka et al.~\cite{EPRINT:HMNY21} recently extended the symmetric-key scheme by Broadbent and Islam~\cite{TCC:BroIsl20} to a public-key encryption scheme, an attribute-based encryption scheme, and a publicly verifiable scheme, which have opened many applications.
However, one disadvantage is that their security is the computational one unlike the symmetric-key scheme~\cite{TCC:BroIsl20}. 
Therefore it was open whether there is any cryptographic application of the information-theoretically secure certified deletion scheme.
Our results provide the first cryptographic applications of it.
Interestingly, the setup of the symmetric-key scheme~\cite{TCC:BroIsl20}, where the receiver does not have the private key in advance,
nicely fits into the framework of the $\Sigma$-protocol, because the verifier (receiver) in the $\Sigma$-protocol does not have 
the decryption key of the first encrypted message from the prover (sender).

Finally, note that certified everlasting zero-knowledge and certified everlasting hiding
seem to be impossible in the classical world, because a malicious adversary can copy its information.
In particular, certified everlasting zero-knowledge against classical verifiers clearly implies honest-verifier statistical zero-knowledge since an honest verifier runs in polynomial-time.\footnote{A similar argument does not work for quantum verifiers since the honest-verifier quantum statistical zero-knowledge~\cite{FOCS:Watrous02} requires a simulator 
to simulate honest verifier's internal state \emph{at any point} of the protocol execution. This is not implied by certified everlasting zero-knowledge, which only requires security after generating a valid deletion certificate.}    Moreover, it is known that $\compclass{HVSZK}=\compclass{SZK}$ where $\compclass{HVSZK}$ and $\compclass{SZK}$ are languages that have honest-verifier statistical zero-knowledge proofs and (general) statistical zero-knowledge proofs, respectively~\cite{STOC:GolSahVad98}.
Therefore, if certified everlasting zero-knowledge proofs for $\NP$ with classical verification exist, we obtain
$\NP\subseteq\compclass{HVSZK}=\compclass{SZK}$, which means the collapse of the polynomial-time hierarchy~\cite{STOC:Fortnow87}.
Though the above argument only works for protocols in the standard model, no construction of honest-verifier statistical zero-knowledge proofs for $\NP$ is known in the random oracle model either. 
Our results therefore add novel items to the list of quantum cryptographic primitives that can be achieved only in the quantum world.

%\ryo{Maybe it is better to mention that using QROM does not make the problem much easier.}
%{\color{blue}In particular, if a certified everlasting zero-knowledge proof with classical verification exists for $\NP$, we obtain
%$\NP\subseteq\mathbf{HVSZK}=\mathbf{SZK}$, which means the collapse of the polynomial-time hierarchy.}
%Our results therefore add novel items to the list of quantum cryptographic primitives that can be achieved only in the quantum world.

%The bit commitment~\cite{C:Blum81} is an important ingredient for many cryptographic applications.

\subsection{Technical Overview}
\paragraph{Certified everlasting zero-knowledge.}
As explained in \cref{sec:background}, everlasting zero-knowledge proofs for $\NP$ (and for $\QMA$) seem impossible even with quantum verifiers. Therefore, we introduce a relaxed notion of zero-knowledge which we call \emph{certified everlasting zero-knowledge} inspired by quantum encryption with certified deletion introduced by Broadbent and Islam~\cite{TCC:BroIsl20}.
Certified everlasting zero-knowledge ensures security against malicious verifiers that run in polynomial-time during the protocol and provide a valid certificate that  sensitive information is ``deleted''. 
(For the formal definition, see \cref{sec:def_everlasting_ZK}.)
The difference from everlasting zero-knowledge is that it does not ensure security against malicious verifiers that do not provide a valid certificate. We believe that this is still a meaningful security notion since if the verifier refuses to provide a valid certificate, the prover may penalize the verifier for cheating.

\paragraph{Quantum commitment with certified everlasting hiding.}
Our construction of certified everlasting zero-knowledge proofs is based on the idea sketched by Broadbent and 
Islam~\cite{TCC:BroIsl20}. 
(For the details of the construction, see \cref{sec:3_construction}.)
The idea is to implement a $\Sigma$-protocol using a commitment scheme with certified deletion. However, they did not give a construction or definition of commitment with certified deletion. First, we remark that the encryption with certified deletion in \cite{TCC:BroIsl20} cannot be directly used as a commitment. A natural way to use their scheme as a commitment scheme is to consider a ciphertext as a commitment. However, since different secret keys decrypt the same ciphertext into different messages, this does not satisfy the binding property as commitment.

A natural (failed) attempt to fix this problem is to add a classical commitment to the secret key of the encryption scheme with certified deletion making use of the fact that the secret key of the encryption with certified deletion in \cite{TCC:BroIsl20} is classical.
That is, a commitment to a message $m$ consists of 
$$
(\ct=\Enc(\sk,m),\mathsf{com}=\Commit(\sk))
$$
where $\Enc$ is the encryption algorithm of the scheme in \cite{TCC:BroIsl20}, $\sk$ is its secret key, and $\Commit$ is a statistically binding and computationally hiding classical commitment scheme. 
This resolves the issue of binding since the secret key is now committed by the classical commitment scheme. On the other hand, we cannot prove a hiding property that is sufficiently strong for achieving certified everlasting zero-knowledge. It is not difficult to see that what we need here is \emph{certified everlasting hiding}, which ensures that once a receiver generates a valid certificate that it deleted the commitment in a polynomial-time, the hiding property is ensured even if the receiver runs in unbounded-time afterwards. Unfortunately, we observe that the above generic construction seems insufficient for achieving certified everlasting hiding.\footnote{One may think that we can just use statistically hiding commitment. However, such a commitment can only satisfy computational binding, which is not sufficient for achieving certified everlasting zero-knowledge \emph{proofs} rather than arguments.}
The reason is as follows: We want to reduce the certified everlasting hiding to the certified deletion security of $\Enc$. However, the security of $\Enc$ can be invoked only if $\sk$ is information theoretically hidden before the deletion. On the other hand, $\sk$ is committed by a statistically binding commitment in the above construction, and thus $\sk$ is information theoretically determined from the commitment. 
Therefore, we have to somehow delete the information of $\sk$ from the commitment in some hybrid game in a security proof. A similar issue was dealt with by Hiroka et al. \cite{EPRINT:HMNY21} by using receiver non-committing encryption in the context of public key encryption with certified deletion. However, their technique inherently relies on the assumption that an adversary runs in polynomial-time \emph{even after the deletion}. Therefore, their technique is not applicable in the context of certified everlasting hiding.

%The reason is that the computational hiding property of $\Commit$ does not imply non-malleability, i.e., it may be possible to efficiently generate $\Commit(F(\sk))$ from $\Commit(\sk)$ for any algorithm $F$. In particular, by considering the case of $F(\sk)\seteq\Dec(\sk,\ct)$, it may be possible to efficiently generate $\Commit(m)$.\footnote{We do not claim that this can be done, but just claim that this does not contradict computational hiding.} Once such a classical commitment is obtained, the receiver can keep it to recover $m$ by brute-force search after the deletion of $\ct$.

To overcome the above issue, we rely on random oracles. We modify the above construction as follows:
$$
(\ct=\Enc(\sk,m),\mathsf{com}=\Commit(R),H(R)\oplus \sk)
$$
where $R$ is a sufficiently long random string and $H$ is a hash function modeled as a random oracle whose output length is the same as that of $\sk$. 
We give an intuition on why the above issue is resolved with this modification. As explained in the previous paragraph, we want to delete the information of $\sk$ from the commitment in some hybrid game. By the computational hiding of commitment, a polynomial-time receiver cannot find $R$ from $\Commit(R)$. Therefore, it cannot get any information on $H(R)$ since otherwise we can ``extract'' $R$ from one of receiver's queries.
This argument can be made rigorous by using the one-way to hiding lemma~\cite{JACM:Unruh15,C:AmbHamUnr19}. Importantly, we only have to assume that the receiver runs in polynomial-time \emph{before the deletion} and do not need to assume anything about the running time after the deletion  because we extract $R$ from one of the queries before the deletion. 
Since $\sk$ is masked by $H(R)$, the receiver cannot get any information on $\sk$ either. Thus, we can simulate the whole commitment $(\CT,\cment,H(R)\oplus \sk)$ without using $\sk$, which resolves the issue and enables us to reduce certified everlasting hiding to certified deletion security of $\Enc$.

We remark that quantum commitments in general cannot satisfy the binding property in the classical sense.
Indeed, if a malicious sender generates a superposition of valid commitments on different messages $m_0$ and $m_1$, it can later open to $m_0$ or $m_1$ with probability $1/2$ for each. Defining a binding property for quantum commitments is non-trivial, and there have been proposed various flavors of definitions in the literature, e.g., \cite{TCC:CDMS04,C:DamFehSal04,C:DFRSS07,Yan20a,BB21}. It might be possible to adopt some of those definitions.
However, we choose to introduce a new definition, which we call 
\emph{classical-extractor-based binding}, 
tailored to our construction because this is more convenient for our purpose. 
Classical-extractor-based binding captures the property of our construction that the randomness $R$ is information-theoretically determined by the classical part $\mathsf{com}=\Commit(R)$ of a commitment, and the decommitment can be done by using the rest part of the commitment and $R$.\footnote{For this definition to make sense, we need to require that $\mathsf{com}=\Commit(R)$ is classical. This can be ensured if the honest receiver measures it as soon as receiving it even if only quantum communication channel is available.}  
In particular, this roughly means that one can extract the committed message with an unbounded-time extractor before the sender decommits.%\footnote{Remark that the extracted message is not determined by the commitment due to the nature of quantum commitments.}  
This enables us to prove soundness for our certified everlasting zero-knowledge proofs in essentially the same manner as in the classical case.

The details of the construction and security proofs are explained in \cref{sec:bit_comm_construction}.

%By introducing the additional layer of XORing by the random oracle output, the above malleability issue is resolved. In the actual security proof, we do not go through the abstraction of non-malleability because it is unclear to us if we can generically reduce the certified everlasting security to some kind of non-malleability.\footnote{Note that we explain that malleability is an issue above, but do not say this is the only issue.} Instead, we directly prove certified everlasting hiding using the one-way to hiding lemma~\cite{JACM:Unruh15,C:AmbHamUnr19}, which is a common technique in analysis of schemes in the QROM. 

\paragraph{Certified everlasting zero-knowledge proof for QMA.}
Once we obtain a commitment scheme with certified everlasting hiding, the construction of certified everlasting zero-knowledge proofs is straightforward based on the idea sketched in \cite{TCC:BroIsl20}. Though they only considered a construction for $\NP$, we observe that the idea can be naturally extended to a construction for $\QMA$ since a ``quantum version'' of $\Sigma$-protocol for $\QMA$ called $\Xi$-protocol is constructed by Broadbent and Grilo~\cite{FOCS:BroGri20}.
%To construct the certified everlasting zero-knowledge proof for $\QMA$, 
%we apply the bit commitment with the certified everlasting hiding
%to the quantum $\Sigma$-protocol (what they call the $\Xi$-protocol) for $\QMA$ by Broadbent and Grilo~\cite{FOCS:BroGri20}.
%The quantum $\Sigma$-protocol is the same as the standard (classical) $\Sigma$-protocol except that the first message from the prover to the verifier is quantum.
Below, we sketch the construction for clarity. 
Let $A=(A_{\yes},A_{\no})$ be a promise problem in $\QMA$.
\cite{FOCS:BroGri20} showed that for any $\statement\in A_{\yes}$ and any corresponding witness $\witness$, 
it is possible to generate (in a quantum polynomial-time)
so-called the local simulatable history state $\rho_\hist$ from $\witness$, which 
satisfies the following two special properties (for details, see \cref{def:k-SimQMA}):
\begin{enumerate}
\item[(LS1)]
The verification can be done by measuring randomly chosen five qubits of $\rho_\hist$.
\item[(LS2)]
The classical description of any five-qubit reduced density matrix of $\rho_\hist$ can be obtained
in classical polynomial-time.
\end{enumerate}
With these properties, the quantum $\Sigma$-protocol of \cite{FOCS:BroGri20} is constructed as follows:
\begin{enumerate}
\item[{\bf 1. Commitment phase}:]
The prover randomly chooses $x,z\in\bit^n$,
and sends $(X^xZ^z \rho_\hist Z^zX^x)\otimes \cment(x,z)$ to the verifier, 
where 
$X^xZ^z\seteq \prod_{i=1}^n X_i^{x_i}Z_i^{z_i}$,  
$n$ is the number of qubits of $\rho_\hist$,
and $\cment(x,z)$ is a classical commitment of $(x,z)$. 
\item[{\bf 2. Challenge phase}:]
The verifier randomly chooses a subset $S\subset[n]$ of size $|S|=5$, and sends it to the prover.
\item[{\bf 3. Response phase}:]
The prover opens the commitment for $\{x_i,z_i\}_{i\in S}$.
\item[{\bf 4. Verification phase}:]
The verifier applies $\prod_{i\in S}X_i^{x_i}Z_i^{z_i}$ on the state and measures qubits in $S$.
\end{enumerate}
The correctness and the soundness come from the property (LS1),
and the zero-knowledge comes from the property (LS2).
If the classical commitment scheme used in the above construction is the one with statistical binding and
computational hiding, the quantum $\Sigma$-protocol is a computational zero-knowledge proof for $\QMA$,
because the unbounded malicious verifier can open the commitment of $\{x_i,z_i\}_{i \in [n]\setminus S}$, and therefore can obtain the entire $\rho_\hist$.
If more than five qubits of $\rho_\hist$ are available to the malicious verifier, the zero-knowledge property no longer holds.

We construct the certified everlasting zero-knowledge proof for $\QMA$ based on the quantum $\Sigma$-protocol.
Our idea is to use commitment with certified everlasting hiding and statistical binding
for the commitment of $(x,z)$ in the above construction of the quantum $\Sigma$-protocol.
If the verifier issues a valid deletion certificate for the commitment of $\{x_i,z_i\}_{i\in [n]\setminus S}$,
even unbounded malicious verifier can no longer learn $\{x_i,z_i\}_{i\in [n]\setminus S}$, 
and therefore what it can access is only the five qubits of $\rho_\hist$.
This gives a proof for certified everlasting zero-knowledge. 
Using classical-extractor-based binding, the proof of statistical soundness can be done almost in the same way as in \cite{FOCS:BroGri20}.   
Recall that classical-extractor-based binding enables us to extract the committed message with an unbounded-time extractor before the sender decommits. Therefore, we can extract the committed $(x,z)$ from $\mathsf{com}(x,z)$. Since the extraction is done before the challenge phase, the extracted values do not depend on the challenge $S$. Then, it is easy to reduce the soundness of the scheme to that of the the original $\QMA$ promise problem $A$.    
%The proof of statistical soundness needs some care since we use quantum commitments instead of classical commitments. For proving soundness, we introduce a new notion of binding  called \emph{effective binding}, which is tailored to our construction of commitment with certified everlasting hiding. 
%Assuming this property, it is straightforward to extend the proof of soundness in \cite{FOCS:BroGri20} to that for our protocol.    
%We remark that there are several works \cite{ISAAC:YWLQ15,FUWYZ20,BB21} that showed how to use quantum commitments with more general binding properties
%in constructions of zero-knowledge proofs. We believe that we could also prove soundness of our protocol based on their approaches.
%However, we chose to take an approach that is specific to our construction since this yields a simpler proof. 
The details of the construction is explained in \cref{sec:3_construction}.

\if0
\paragraph{Quantum encryption with certified deletion.}
Our starting point is the quantum encryption scheme with certified deletion by Broadbent and Islam~\cite{TCC:BroIsl20}.
Let us briefly explain their construction. (For details, see \cref{sec:crypt_tool}.)
Their quantum ciphertext consists of a classical part and a quantum part.
The classical part is a ``doubly encrypted" ciphertext, $\Enc_{\sk}(m)\oplus H(r)$, of the plaintext $m$, where $\Enc$ is a 
classical symmetric key encryption scheme with the key $\sk$, $H$ is a two-universal hash function, and $r$ is a random bit string.
The quantum part is a tensor product of computational or Hadamard basis states. The computational basis states encode the information of $r$, and Hadamard basis states are just random ``dummy" states.
These computational and Hadamard basis states are randomly permutated.
Because computational basis states and Hadamard basis states are indistinguishable with each other, the receiver that possesses the quantum ciphertext cannot learn $r$,
and therefore it cannot learn $m$.
The deletion certificate is the Hadamard-basis measurement results on the all qubits of the quantum part.
If the all measurement results for the Hadamard basis states are consistent, it is accepted as a valid certificate.
Because the receiver cannot know which qubits contain the information of $r$ and which qubits are dummy Hadamard-basis states, it has to measure all qubits in the Hadamard basis
(and therefore has to destroy the information of $r$) to issue a valid certificate.
Once the information of $r$ is deleted, the receiver can no longer obtain the plaintext $m$ even if it receives $\sk$.

\paragraph{Bit commitment with certified everlasting hiding.}
We define the certified everlasting hiding with the following security game between a challenger and an adversary $\cA=(\cA_1,\cA_2)$
that consists of a QPT adversary $\cA_1$ and an unbounded adversary $\cA_2$. (For details, see \cref{sec:bitcomm_def}.)
The challenger sends the commitment $\cment$ to $\cA_1$. At some point, $\cA_1$ sends the deletion certificate $\cert$ to the challenger.
$\cA_1$ sends its internal state to $\cA_2$.
If the certificate is valid, the challenger sends its all information except for the committed value itself, 
i.e., the decommiting key and the private key for the verification of the deletion certificate,
to $\cA_2$. 
$\cA_2$ guesses the committed value.
The adversary wins if the certificate is valid and $\cA_2$ outputs the correct committed value.

Let us explain the idea of our construction of the bit commitment with certified everlasting hiding.
(For details, see \cref{sec:bit_comm_construction}.)
The commitment algorithm runs as follows:
\begin{enumerate}
\item
Encrypt the plaintext $m$ that the sender wants to commit as $\SKE.\Enc(\ske.\sk,m)\to \ske.\ct$,
where $\SKE.\Enc$, $\ske.\sk$, and $\ske.\ct$ are
the encryption algorithm,
the private key, and the ciphertext of the encryption with certified deletion $\Sigma_{\mathsf{skcd}}$, respectively.
\item
Generate a random bit string $R$, and commit it by using a classical non-interactive commitment scheme $\Sigma_{\mathsf{com}}$ with the perfect binding and computational hiding as
$\Classical.\Commit(R)\to f$,
where $\Classical.\Commit$ is the commiting algorithm of $\Sigma_{\mathsf{com}}$.
\item
Compute $h\seteq H(R)\oplus \ske.\sk$, where $H$ is a hash function.
\item
The commitment is $\cment\seteq(\ske.\ct,f,h)$.
\end{enumerate}
Intuitively, the computational hiding comes from the computational hiding of $\Sigma_{\mathsf{com}}$
and the (IND-CPA) security of $\Sigma_{\mathsf{skcd}}$:
$R$ cannot be learnt from $f$, and therefore $\ske.\sk$ cannot be learnt from $h$.
The receiver who does not know $\ske.\sk$ cannot learn the plaintext $m$ from $\ske.\ct$.
The binding is the statistical one, and it is guaranteed from the perfect binding of $\Sigma_{\mathsf{com}}$.
The certified everlasting hiding roughly comes from the certified deletion security of $\Sigma_{\mathsf{skcd}}$:
If the certificate is valid, $\cA_2$ receives the decommitting key for $f$ and the private key for the verification
of the deletion certificate, from which $\cA_2$ can learn $\ske.\sk$. 
However, $\cA_2$ still cannot know the value of the plaintext $m$ due to the certified deletion security of
$\Sigma_{\mathsf{skcd}}$.
\fi

\if0
\paragraph{Definition of quantum bit commitment with certified deletion}
A quantum bit commitment with certified deletion consists of the following algorithms.

\begin{description}
\item[$\Commit (1^{\lambda},b)\rightarrow (\cment,\dment,\ck)$:] 
This is a commitment algorithm that generates  
\item[$\Verify(\cment,\dment,b)\rightarrow \top~or~\bot $:]
The verification algorithm takes as input $\cment$, $\dment$ and $b$
and outputs $\top$ or $\bot$.
\item[$\Delete(\cment)\rightarrow \cert$:]The deletion algorithm takes as input $\cment$ and outputs a certificate $\cert\in \mathcal{E}$. 
\item[$\Cert(\cert,\ck)\rightarrow \top~or~\bot$:]The certification algorithm takes as input $\cert$ and $\ck$ and outputs $\top$ or $\bot$.
\end{description}
\fi

\subsection{Related Works}
%We have already explained \cite{FOCS:BroGri20,TCC:BroIsl20,EPRINT:HMNY21}.
%Here we provide some comments on other related works.

\paragraph{Zero-knowledge for QMA.}
Zero-knowledge for $\QMA$ was first constructed by
Broadbent, Ji, Song, and Watrous \cite{FOCS:BJSW16}.
Broadbent and Grilo \cite{FOCS:BroGri20} gave an elegant and simpler construction what they call the
$\Xi$-protocol (which is considered as a quantum version of the standard $\Sigma$-protocol) by using the local simulatability~\cite{FOCS:GriSloYue19}. Our construction is based on the $\Xi$-protocol.
Bitansky and Shmueli \cite{STOC:BitShm20} gave the first constant round zero-knowledge argument for $\QMA$ with negligible soundness error. 
Brakerski and Yuen \cite{BraYue} gave a construction of $3$-round \emph{delayed-input} zero-knowledge proof for $\QMA$ where the prover needs to know the statement and witness only for generating its last message. 
Chardouvelis and Malavolta~\cite{EPRINT:ChaMal21} constructed 4-round statistical zero-knowledge arguments for $\QMA$ and 2-round zero-knowledge for $\QMA$ in the timing model.

Regarding non-interactive zero-knowledge proofs or arguments (NIZK),
Kobayashi \cite{Kobayashi03} first studied (statistically sound and zero-knowledge) NIZKs in a model where the prover and verifier share Bell pairs, and gave a complete problem in this setting.
It is unlikely that the complete problem contains (even a subclass of) $\NP$ \cite{MW18}, and thus even a NIZK for all $\NP$ languages is unlikely to exist in this model.
Chailloux et al. \cite{TCC:CCKV08} showed that there exists a (statistically sound and zero-knowledge) NIZK for all languages in $\compclass{QSZK}$ in the help model where a trusted party generates a pure state \emph{depending on the statement to be proven} and gives copies of the state to both prover and verifier.
Recently, there are many constructions of NIZK proofs or arguments for $\QMA$ in various kind of setup models and assumptions \cite{TCC:ACGH20,C:ColVidZha20,FOCS:BroGri20,C:Shmueli21,C:BCKM21a,EPRINT:MorYam21,BartusekMalavolta}.   
%All known protocols except for the protocol of Broadbent and Grilo \cite{FOCS:BroGri20} only satisfy computational soundness. 
%Alagic et al. \cite{TCC:ACGH20} gave a construction of a NIZK for $\QMA$ in the SP model.  
%Shmueli \cite{Shmueli20} gave a construction of a NIZK for $\QMA$ in the malicious designated-verifier (MDV) model, where a trusted party generates a CRS and the verifier sends an instance-independent classical message to the prover as preprocessing. 
%In this model, the preprocessing is \emph{reusable}, i.e., a single preprocessing can be reused to generate arbitrarily many proofs later. 
%Bartusek et al. \cite{BCKM20} gave another construction of a NIZK for $\QMA$ in the MDV model that has an advantage that the honest prover only uses a single copy of a witness. 
%Bartusek and Malavolta~\cite{BartusekMalavolta} recently constructed a candidate instantiation of an NIZK argument for $\QMA$ in the CRS model assuming the LWE assumption, indistinguishability obfuscation, and random oracles. 

\if0
\paragraph{Revocable quantum time-release encryption.}
Unruh~\cite{JACM:Unruh15} introduced
the revocable quantum timed-release encryption, which is 
conceptually similar to quantum encryption with certified deletion. 
In this primitive, a receiver can decrypt a quantum
ciphertext only after spending a certain amount of time $T$. 
The receiver can also choose to return the ciphertext before
the time $T$ is over, in which case it is ensured that the message can no longer be recovered. 
An essential difference from quantum encryption with certified deletion,
which was observed in \cite{TCC:BroIsl20}, is that the revocable quantum timed-release encryption does not have a mechanism to generate a classical certificate of deletion. 
\takashi{Do we need to include this paragraph? This doesn't seem to be very related to this paper's topic.}
\fi

\paragraph{Quantum commitment.}
It is well-known that statistically binding and hiding commitments are impossible even with quantum communication \cite{LC97,Mayers97}. 
On the other hand, there are a large body of literature on constructing quantum commitments assuming some computational assumptions, e.g., see the references in the introduction of \cite{Yan20a}. Among them, several works showed the possibility of using quantum commitments in constructions of zero-knowledge proofs and arguments \cite{ISAAC:YWLQ15,FUWYZ20,Yan20a,BB21}. However, they only consider replacing classical commitments with quantum commitments in classical constructions while keeping the same functionality and security level as the classical construction. In particular, none of them considers protocols for $\QMA$ or properties that are classically impossible to achieve like our notion of the certified everlasting zero-knowledge.

\if0
\paragraph{Everlasting security.}
Unruh \cite{C:Unruh13} studied everlasting secure multi-party computation. The everlasting security defined in \cite{C:Unruh13} means that the protocol remains secure as long as each party runs in polynomial-time during the execution of the protocol. If we define everlasting zero-knowledge based on this approach, this would be stronger than our notion of certified everlasting zero-knowledge because the security should remain even if a cheating verifier fails to let the prover accept. However, we remark that everlasting zero-knowledge proofs for $\QMA$ or even for $\NP$ are unlikely to exist: If a protocol satisfies everlasting zero-knowledge, this in particular means that the protocol satisfies honest-verifier quantum statistical zero-knowledge \cite{FOCS:Watrous02} since an honest verifier runs in polynomial-time. Watrous \cite{SIAM:Wat09} showed $\mathbf{QSZK_{HV}}=\mathbf{QSZK}$ where $\mathbf{QSZK_{HV}}$ and $\mathbf{QSZK}$ are the classes of languages that have honest-verifier quantum statistical zero-knowledge proofs and those that have quantum statistical zero-knowledge (against malicious verifiers). On the other hand, it is believed that $\mathbf{NP}\not\subseteq\mathbf{QSZK}$ \cite{MW18}. Therefore, assuming $\mathbf{NP}\not\subseteq\mathbf{QSZK}$, there does not exist everlasting zero-knowledge proofs for $\mathbf{NP}$ (in the sense of \cite{C:Unruh13}). We remark that the above argument does not go through for certified everlasting zero-knowledge. The key point is that the honest-verifier quantum statistical zero-knowledge defined in \cite{FOCS:Watrous02} requires a simulator to simulate honest verifier's internal state \emph{at any point} of the protocol execution. On the other hand, certified everlasting zero-knowledge does not require this. It only requires a simulator to simulate the (possibly malicious) verifier's internal state \emph{after completing the protocol} conditioned on that the prover accepts. This is how we bypass the above impossibility.

Finally, we mention that everlasting zero-knowledge \emph{arguments}, which only satisfy computational soundness, can exist. Indeed, any statistical zero-knowledge argument is everlasting zero-knowledge argument. One may think that the computational soundness is fine since that ensures everlasting soundness in the sense of \cite{C:Unruh13}. For practical purposes, this may be true. On the other hand, we believe that it is theoretically interesting to pursue (a kind of) everlasting zero-knowledge without compromising the soundness as is done in this paper.
\fi

% !TEX root = main.tex
% !TEX spellcheck = en-US

\section{Preliminaries}\label{sec:preliminaries}

\subsection{Notations}\label{sec:notation}
 
Here we introduce basic notations we will use.
In this paper, $x\leftarrow X$ denotes selecting an element  from a finite set $X$ uniformly at random, 
and $y\leftarrow A(x)$ denotes assigning to $y$ the output of a probabilistic or deterministic algorithm $A$ on an input $x$.
When we explicitly show that $A$ uses randomness $r$, we write $y\leftarrow A(x;r)$.
When $D$ is a distribution, $x\leftarrow D$ denotes sampling an element from $D$.
Let $[n]$ be the set $\{1,\dots,n\}$. Let $\lambda$ be a security parameter, and $y\seteq z$ denotes that $y$ is set, defined, or substituted by $z$.
For a bit string $s\in\{0,1\}^n$, $s_i$ denotes the $i$-th bit of $s$. QPT stands for quantum polynomial time.
PPT stands for (classical) probabilistic polynomial time.
For a subset $S\subseteq W$ of a set $W$, $\overline{S}$ is the complement of $S$, i.e., 
$\overline{S}\seteq W\setminus S$. 
%{\color{red}For any subset $S\subseteq[n]$ and any $n$-bit string $a\seteq(a_1,...,a_n)$, 
%we define $a^{S}\seteq (a^S_1,a^S_2,\cdots ,a^S_n)$ where $a^S_i=a_i$ for $i\in S$ and $a^S_i=0$ for $i\notin S$.}
A function $f: \N \ra \R$ is a negligible function if for any constant $c$, there exists $\secp_0 \in \N$ such that for any $\secp>\secp_0$, $f(\secp) < \secp^{-c}$. We write $f(\secp) \leq \negl(\secp)$ to denote $f(\secp)$ being a negligible function.

\subsection{Quantum Computation}\label{sec:quantum_computation}
We assume the familiarity with basics of quantum computation, and use standard notations. 
Let us denote $\cQ$ be the state space of a single qubit.
$I$ is the two-dimensional identity operator. For simplicity, we often write
$I^{\otimes n}$ as $I$ for any $n$ when the dimension of the identity operator is clear from the context.
For any single-qubit operator $O$, $O_i$ means an operator that applies $O$ on the $i$-th qubit and applies $I$ on all other qubits.
$X$ and $Z$ are the Pauli $X$ and $Z$ operators, respectively.
For any $n$-bit strings $x\seteq(x_1,x_2,\cdots,x_n)\in\{0,1\}^n$ and $z\seteq (z_1,z_2,\cdots,z_n)\in\{0,1\}^n$,
$X^x\seteq\prod_{i\in[n]}X_i^{x_i}$ and 
$Z^z\seteq\prod_{i\in[n]}Z_i^{z_i}$.
For any subset $S$, $\Tr_S$ means the trace over all qubits in $S$.
For any quantum state $\rho$ and a bit string $s\in\bit^n$, $\rho\otimes s$ means $\rho\otimes|s\rangle\langle s|$.
The trace distance between two states $\rho$ and $\sigma$ is given by 
$\frac{1}{2}\norm{\rho-\sigma}_{\tr}$, where $\norm{A}_{\tr}\seteq \Tr \sqrt{{\it A}^{\dagger}{\it A}}$ is the trace norm. 
If $\frac{1}{2}\norm{\rho-\sigma}_{\tr}\leq \epsilon$, we say that $\rho$ and $\sigma$ are $\epsilon$-close.
If $\epsilon=\negl(\lambda)$, then we say that $\rho$ and $\sigma$ are statistically indistinguishable.

Let $C_0$ and $C_1$ be quantum channels from $p$ qubits to $q$ qubits, where $p$ and $q$ are polynomials. 
We say that they are computationally indistinguishable, and denote it by $C_0\approx_c C_1$
if there exists a negligible function $\negl$ such that  
$|\Pr[D((C_0\otimes I)(\sigma))=1]-\Pr[D((C_1\otimes I)(\sigma))=1] |\leq \negl(\lambda)$
for any polynomial $k$, any $(p+k)$-qubit state $\sigma$, and any polynomial-size quantum circuit $D$ acting on 
$q+k$ qubits.
We say that $C_0$ and $C_1$ are statistically indistinguishable, and denote it by $C_0\approx_s C_1$,
if $D$ is an unbounded algorithm.

\if0
\mor{kokokara}
For some $S\subseteq\{0,1\}^*$, let $\{\Psi_x\}_{x\in S}$ and $\{\Phi_x\}_{x\in S}$ be two families of quantum channels from $q(|x|)$ qubits to $r(|x|)$ qubits, where $q$ and $r$ are polynomials. 
We say that these two families are computationally indistinguishable, and denote it by $\{\Psi_x\}_{x\in S}\approx_c \{\Phi_x\}_{x\in S}$
if there exists a negligible function $\negl$ such that for any polynomial $k$, any polynomial-size quantum circuit $Q$ acting on $r(|x|)+k(|x|)$ qubits, any $q(|x|)+k(|x|)$ qubit state $\sigma$\mor{even states that cannot be generated in QPT?}\taiga{for any de yoito omounodesuga. kakusin ha naidesu ga.} and for any $x\in S$ and $n\in\N$\taiga{Kore iranai?} the following holds
\begin{align}
    |\Pr[Q(\Psi_x\otimes I)(\sigma)=1]-\Pr[Q(\Phi_x\otimes I)(\sigma)=1] |\leq \negl(n)\mbox{\taiga{$\negl(|x|)$?}}.
\end{align}
\takashi{Strictly speaking, the order of quantifier should be "there exists a negligible function $\negl$ such that for all $x\in S$...".
It is problematic if we allow the negligible function to depend on $x$, e.g., see Section 2 of https://eprint.iacr.org/2016/046.pdf.}
\taiga{I have modified it. Is that ok?}
We say that these two families are statistically indistinguishable, and denote it by $\{\Psi_x\}_{x\in S}\approx_s \{\Phi_x\}_{x\in S}$ 
if for all circuit $Q$ acting on $r(|x|)+k(|x|)$ qubits, there exists a negligible function $\negl$ such that for all $x\in S$ and polynomial $k$, all state $\sigma$ on $q(|x|)+k(|x|)$ and $n\in\N$,
the following holds
\begin{align}
    |\Pr[Q(\Psi_x\otimes I)(\sigma)=1]-\Pr[Q(\Phi_x\otimes I)(\sigma)=1] |\leq \negl(n).
\end{align}
\mor{kokomade}
\fi

\begin{lemma}[Quantum Rewinding Lemma~\cite{SIAM:Wat09}]\label{lemma:rewinding}
Let $Q$ be a quantum circuit that acts on an $n$-qubit state $|\psi\rangle$ and an $m$-qubit auxiliary state $|0^m\rangle$.
Let
    $p(\psi)\seteq||(\langle 0|\otimes I)Q(|\psi\rangle \otimes |0^m\rangle)||^2$ and
    $
    |\phi(\psi)\rangle\seteq\frac{1}{\sqrt{p(\psi)}} (\langle 0|\otimes I) Q(|\psi\rangle\otimes |0^m\rangle).
    $
Let $p_0,q\in(0,1)$ and $\epsilon\in (0,\frac{1}{2})$ such that $|p(\psi)-q|<\epsilon$, $p_0(1-p_0)<q(1-q)$, and $p_0<p(\psi)$.
Then there is a quantum circuit $R$ of size at most 
$
    O\left(\frac{\log(\frac{1}{\epsilon}){\rm size}(Q)}{p_0(1-p_0)}\right)
$
such that on input $|\psi\rangle$, $R$ computes a quantum state $\rho(\psi)$ that satisfies 
$
    \langle\phi(\psi)|\rho(\psi)|\phi(\psi)\rangle\geq 1-16\epsilon\frac{\log^2(\frac{1}{\epsilon})}{p_0^2(1-p_0)^2}.
$
\end{lemma}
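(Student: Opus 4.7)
The plan is to follow Watrous's amplitude-amplification style strategy, viewing the action of $Q$ as producing a superposition with small amplitude on the ``good'' subspace and then rotating that amplitude toward $1$ by a Grover-like iteration, with the number of iterations determined from the a priori bounds $q$ and $p_0$ rather than the unknown $p(\psi)$.

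First I would analyze the output of $Q$ on the designated input. Writing $\Pi_0 \seteq |0\rangle\langle 0|\otimes I$ for the projector onto the ``success'' subspace (first qubit in state $|0\rangle$), we have
\[
Q\bigl(|\psi\rangle\otimes|0^m\rangle\bigr) \;=\; \sqrt{p(\psi)}\,|0\rangle|\phi(\psi)\rangle \;+\; \sqrt{1-p(\psi)}\,|1\rangle|\phi^\perp(\psi)\rangle,
\]
so the task reduces to amplifying the coefficient $\sqrt{p(\psi)}$. To do this, define the two reflections $R_0 \seteq I - 2\Pi_0$ and $R_\psi \seteq I - 2\,|\psi\rangle\langle\psi|\otimes |0^m\rangle\langle 0^m|$, and let $W \seteq -Q\, R_\psi\, Q^\dagger\, R_0$. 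A direct calculation shows that $W$ preserves the two-dimensional subspace $\cV$ spanned by $|0\rangle|\phi(\psi)\rangle$ and $|1\rangle|\phi^\perp(\psi)\rangle$, and acts there as a rotation by angle $2\theta$ where $\sin^2\theta = p(\psi)$.

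Next I would build $R$ by applying $W^k$ to $Q(|\psi\rangle|0^m\rangle)$, where $k$ depends only on the known quantities $q$ and $p_0$, not on $p(\psi)$. In the $\cV$-picture the resulting state has the form $\sin\bigl((2k+1)\theta\bigr)|0\rangle|\phi(\psi)\rangle + \cos\bigl((2k+1)\theta\bigr)|1\rangle|\phi^\perp(\psi)\rangle$, so we want $(2k+1)\theta \approx \pi/2$. Using $\sin^2\theta = p(\psi)$ and $|p(\psi)-q|<\epsilon$ with $p_0 < p(\psi)$ and $p_0(1-p_0) < q(1-q)$, choose $k = \Theta\bigl(1/\sqrt{q(1-q)}\bigr)$ so that the ``ideal'' angle matches $\pi/2$ up to an error controlled by how far $q$ is from $p(\psi)$. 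The fidelity $\langle\phi(\psi)|\rho(\psi)|\phi(\psi)\rangle$ then equals $\sin^2\bigl((2k+1)\theta\bigr)$, which a Taylor expansion around $q$ shows is at least $1 - c\,\epsilon^2/\bigl(p_0(1-p_0)\bigr)^2$ after one pass; this is already the form of the bound claimed in the lemma up to constants.

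To promote the advertised fidelity bound $1 - 16\epsilon \log^2(1/\epsilon)/\bigl(p_0(1-p_0)\bigr)^2$, I would apply an additional ``boosting'' stage of $O(\log(1/\epsilon))$ repetitions of the above Grover iterate, reflecting about the amplified state to successively shrink the residual angle error. Each repetition multiplies the deviation from $\pi/2$ by a geometric factor controlled by $p_0(1-p_0)$, which accounts for the $\log(1/\epsilon)$ overhead appearing in both the size bound on $R$ and the exponent in the fidelity estimate. Size-wise, each iterate uses one copy of $Q$, one copy of $Q^\dagger$, and cheap controlled reflections about $|\psi\rangle|0^m\rangle$ and $\Pi_0$, so the total size is $O\bigl(\log(1/\epsilon)\cdot\mathrm{size}(Q)/(p_0(1-p_0))\bigr)$.

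The main obstacle will be the oblivious choice of $k$: because $R$ must be defined without knowing $p(\psi)$, the rotation angle we dial in is only approximately correct, and one has to show that the induced overshoot/undershoot error stays within the two-dimensional invariant subspace and compounds benignly through the boosting stage. Concretely, the technical heart is a careful Taylor expansion of $\sin^2\bigl((2k+1)\theta\bigr)$ around $\theta_q \seteq \arcsin(\sqrt{q})$, together with the monotonicity hypothesis $p_0(1-p_0) < q(1-q)$, which is what makes $q$ a usable proxy for $p(\psi)$ in the denominator. Once that estimate is in place, everything else is bookkeeping.
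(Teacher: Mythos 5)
First, note that the paper does not prove this lemma at all: it is imported verbatim from Watrous~\cite{SIAM:Wat09}, so there is no in-paper proof to compare against, and your proposal has to be judged as a reconstruction of Watrous's argument.

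As such a reconstruction it has a decisive gap: the reflection $R_\psi = I - 2\,|\psi\rangle\langle\psi|\otimes|0^m\rangle\langle 0^m|$ is not implementable. The circuit $R$ must be a \emph{single fixed} circuit that works for every input state $|\psi\rangle$; it receives $|\psi\rangle$ as an unknown quantum input, not as the output $A|0\rangle$ of a known preparation unitary $A$, so there is no analogue of the usual $A(2|0\rangle\langle 0|-I)A^{\dagger}$ reflection about the initial state. This is exactly the obstruction that makes quantum rewinding nontrivial, and standard fixed-iteration amplitude amplification does not survive it. Watrous's actual procedure uses only the implementable reflection $2(I\otimes|0^m\rangle\langle 0^m|)-I$ about the ancilla register, interleaved with measurements of the success qubit in a repeat-until-success loop; the analysis then shows (via a Jordan-lemma-style decomposition) that the dynamics stay in a two-dimensional subspace determined by $|\psi\rangle$, that in the idealized case $p(\psi)=q$ the post-success state is exactly $|\phi(\psi)\rangle$, and that $O\bigl(\log(1/\epsilon)/(p_0(1-p_0))\bigr)$ iterations suffice to drive the residual failure amplitude below $\epsilon$. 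Your later ``boosting'' stage has the same defect (it reflects about ``the amplified state,'' which is again unknown), and your bookkeeping of the error is also off: a correctly tuned oblivious rotation would give fidelity $1-O(\epsilon^2)$, which is \emph{stronger} than the stated bound, so there would be nothing to boost; in the real proof the $\epsilon\log^2(1/\epsilon)/(p_0^2(1-p_0)^2)$ loss arises from perturbing the exact-$q$ analysis by $|p(\psi)-q|<\epsilon$ accumulated over the $O\bigl(\log(1/\epsilon)/(p_0(1-p_0))\bigr)$ measure-and-retry rounds, not from a Taylor expansion of $\sin^2((2k+1)\theta)$ around $\pi/2$.
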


\begin{lemma}[One-Way to Hiding Lemma \cite{C:AmbHamUnr19}]\label{lemma:one-way_to_hiding}
Let $S\subseteq \mathcal{X}$ be a random subset of $\mathcal{X}$. Let $G,H:\mathcal{X}\rightarrow\mathcal{Y}$ be random functions satisfying $\forall x\notin S$ $[G(x)=H(x)]$. Let $z$ be a random classical bit string. 
($S,G,H,z$ may have an arbitrary joint distribution.)
Let $\cA$ be an oracle-aided quantum algorithm that makes at most $q$ quantum queries.
Let $\cB$ be an algorithm that on input $z$ chooses $i\leftarrow[q]$, runs $\cA^{H}(z)$, measures $\cA$'s $i$-th query, and outputs the measurement outcome.
Then we have
$
    \abs{\Pr[\cA^G(z)=1]-\Pr[\cA^H(z)=1]}\leq2q\sqrt{\Pr[\cB^H(z)\in S]}.
$
\end{lemma}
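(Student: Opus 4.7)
The plan is a hybrid-in-queries argument in purified form, combined with Cauchy--Schwarz and Jensen's inequality to supply the square root in the final bound. I would assume without loss of generality that $\cA$ is purified, alternating between fixed unitaries $U_0,U_1,\dots,U_q$ (acting on its work register together with the oracle query/answer register) and standard oracle unitaries $O_H:\ket{x}\ket{y}\mapsto\ket{x}\ket{y\oplus H(x)}$, and analogously $O_G$. For each $i\in[q]$, let $\ket{\psi_i^H}$ be the state just before the $i$-th query when $\cA$ is run on $z$ with oracle $H$, and let $P_S$ project the query register onto $\mathrm{span}\{\ket{x}:x\in S\}$. By the very definition of $\cB$ (picking $i\leftarrow[q]$ uniformly and measuring the query register at step $i$),
\[
\Pr\bigl[\cB^H(z)\in S \mid G,H,z,S\bigr] \;=\; \frac{1}{q}\sum_{i=1}^{q}\bigl\|P_S\ket{\psi_i^H}\bigr\|^2.
\]

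The key geometric observation is that $(O_H-O_G)(I-P_S)=0$ since $G$ and $H$ agree outside $S$, so $\|(O_H-O_G)\ket{\phi}\|\leq 2\|P_S\ket{\phi}\|$ for every $\ket{\phi}$. A standard telescoping hybrid, swapping $O_G$ in for $O_H$ one query at a time while keeping the $U_j$'s and $z$ fixed, therefore gives
\[
\bigl\|\ket{\psi^H}-\ket{\psi^G}\bigr\| \;\leq\; \sum_{i=1}^{q}\bigl\|(O_H-O_G)\ket{\psi_i^H}\bigr\| \;\leq\; 2\sum_{i=1}^{q}\bigl\|P_S\ket{\psi_i^H}\bigr\|,
\]
where $\ket{\psi^H}$ and $\ket{\psi^G}$ denote the respective final states. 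Applying Cauchy--Schwarz to the last sum and substituting the identity from the first display yields $\sum_{i}\|P_S\ket{\psi_i^H}\|\leq q\sqrt{\Pr[\cB^H(z)\in S\mid G,H,z,S]}$.

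Since $\cA$'s output bit is obtained from a projective measurement on its final state, $|\Pr[\cA^H(z){=}1\mid\cdot]-\Pr[\cA^G(z){=}1\mid\cdot]|\leq\|\ket{\psi^H}-\ket{\psi^G}\|$, and combining with the previous estimate gives, pointwise in $(S,G,H,z)$,
\[
\bigl|\Pr[\cA^H(z){=}1\mid G,H,z,S]-\Pr[\cA^G(z){=}1\mid G,H,z,S]\bigr| \;\leq\; 2q\sqrt{\Pr[\cB^H(z)\in S\mid G,H,z,S]}.
\]
Taking expectations over the joint distribution of $(S,G,H,z)$ and applying Jensen's inequality to the concave function $\sqrt{\cdot}$ to pull the expectation inside the square root then yields the claimed bound.

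The main bookkeeping point is that the same intermediate unitaries $U_0,\ldots,U_q$ and the same $z$ must be used throughout the telescoping, so that the states $\ket{\psi_i^H}$ appearing in the Cauchy--Schwarz step are literally the ones whose measurement statistics define $\cB^H(z)$---this is precisely why the right-hand side features $\cB^H$ rather than $\cB^G$. Otherwise the analysis is routine once the purification and notation are in place.
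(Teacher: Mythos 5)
Your proof is correct: the hybrid ordering (first $j$ queries to $H$, the rest to $G$) is the right one so that the intermediate states in the telescoping are exactly the $\ket{\psi_i^H}$ that $\cB^H$ measures, the bound $\|(O_H-O_G)\ket{\phi}\|\le 2\|P_S\ket{\phi}\|$ and the Cauchy--Schwarz/Jensen steps are all sound, and this is precisely the standard proof of the (original, non-semi-classical) one-way-to-hiding lemma. The paper itself does not prove this statement --- it imports it from Ambainis--Hamburg--Unruh --- so there is nothing to compare against beyond noting that your argument matches the proof in the cited source.
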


\subsection{QMA and $k$-$\compclass{SimQMA}$}

\begin{definition}[\compclass{QMA}]\label{def:QMA}
We say that a promise problem $A=(A_{\yes},A_{\no})$ is in $\QMA$ if there exist a polynomial $p$, a QPT algorithm $V$,
and $0\le\beta<\alpha\le1$ with $\alpha-\beta\geq\frac{1}{\poly(|\statement|)}$ 
such that
\begin{description}
    \item[Completeness:]
    For any $\statement\in A_{\yes}$, there exists a quantum state $\witness$ of $p(|\statement|)$-qubit (called a witness) such that
    \begin{align}
    \Pr[V(\statement,\witness)=\top]\geq \alpha.
    \end{align}
    
    \item[Soundness:]
    For any $\statement\in A_{\no}$ and any quantum state $\witness$ of $p(|\statement|)$-qubit, 
    \begin{align}
    \Pr[V(\statement,\witness)=\top]\leq \beta.
    \end{align}
\end{description}
For any $\statement\in A_{\yes}$, $R_A(\statement)$ is the (possibly infinite) set of all quantum states 
$\witness$ such that $\Pr[V(\statement,\witness)=\top]\geq \frac{2}{3}$.
\end{definition}

%\taiga{I think that the below definition is redundancy.}
%We define a quantum interactive protocol.
%\begin{definition}[Quantum interactive protocol between $A$ and $B$ $(A\rightleftarrows B)$]\label{def:interactive}
%Let $A$ and $B$ be the private registers of parties $A$ and $B$, respectively, and $M$ be the message register.
%A quantum interactive protocol between $A$ and $B$ is a sequence of unitaries $U_1,\dots,U_m$ where $U_i$ acts on register $A$ and $M$ for odd $i$, and on registers $B$ and $M$ for even $i$.
%The size of the registers $A$, $B$ and $M$, the number $m$ of the messages and the complexity of the allowed $U_i$ 
%are defined by each instance of the protocol. We can also consider interactive protocols where $B$ outputs some value after interacting with $A$, and we also denote such outputs as $(A\rightleftarrows B)$.  
%\end{definition}

A complexity class of $k$-$\compclass{SimQMA}$ is introduced, and proven to be equal to $\QMA$ in \cite{FOCS:BroGri20}.
\begin{definition}[$k$-\compclass{SimQMA}~\cite{FOCS:BroGri20}]\label{def:k-SimQMA}
A promise problem $A=(A_{\yes},A_{\no})$ is in $k$-$\compclass{SimQMA}$ with soundness $\beta(|\statement|)\leq 1-\frac{1}{\poly(|\statement|)}$, 
if there exist polynomials $m$ and $n$ such that given $\statement\in A_{\yes}$,
there is an efficient deterministic algorithm that computes $m(|\statement|)$ $k$-qubit POVMs $\{\Pi_1,I-\Pi_1\},\dots,\{\Pi_{m(|\statement|)},I-\Pi_{m(|\statement|)}\}$ such that:
\begin{description}
    \item [Simulatable completeness:]
    If $\statement\in A_{\yes}$, there exists an $n(|\statement|)$-qubit state $\rho_{\hist}$, which we call a simulatable witness, such that for all $c\in[m]$,
        $\mathrm{Tr}(\Pi_c\rho_{\hist})\geq 1-\negl(|\statement|)$,
    and 
    there exists a set of $k$-qubit density matrices $\{\rho_\simulator^{\statement,S}\}_{S\subseteq[n(|\statement|)],|S|=k}$ 
    that can be computed in polynomial time from $\statement$ and $\rho_{\hist}$
    such that 
    $
        ||\Tr_{\overline{S}}(\rho_{\hist})-\rho_\simulator^{\statement,S}||_{\rm tr}\leq \negl(|\statement|).
        $
    \item [Soundness:]
    If $\statement\in A_{\no}$, for any $n(|\statement|)$-qubit state $\rho$, 
    $
        \frac{1}{m}\sum_{c\in[m]} \Tr(\Pi_{c}\rho)\leq \beta(|\statement|).
        $
\end{description}
\end{definition}

\subsection{Cryptographic Tools}\label{sec:crypt_tool}
In this section, we review cryptographic tools used in this paper.

\paragraph{Non-interactive commitment.}
\begin{definition}[Non-Interactive Commitment (Syntax)]\label{def:Non_interactive_commitments}
Let $\lambda$ be the security parameter and let $p$, $q$ and $r$ be some polynomials.
A (classical) non-interactive commitment scheme consists of a single PPT algorithm $\Commit$ with plaintext space $\Ms\seteq\{0,1\}^{p(\lambda)}$, randomness space $\{0,1\}^{q(\lambda)}$ and commitment space $\Cs\seteq \{0,1\}^{r(\lambda)}$ satisfying two properties: 
\begin{description}
\item[Perfect binding:]For every $(r_0,r_1)\in\{0,1\}^{q(\lambda)}\times\bit^{q(\lambda)}$ and 
$(m,m')\in \Ms^2$ such that $m\neq m'$,
we have that $\Commit(m;r_0)\neq \Commit(m';r_1)$, where $(\Commit(m;r_0), \Commit(m';r_1))\in \Cs^2$.

\item[Unpredictability:] Let $\Sigma\seteq\Commit$. For any QPT adversary $\cA$,  we define the following security experiment $\mathsf{Exp}_{\Sigma,\cA}^{\mathsf{unpre}}(\secp)$.
\begin{enumerate}
\item The challenger chooses $R\lrun \Ms$ and $R'\leftarrow\{0,1\}^{q(\lambda)}$, computes $\cment\leftarrow\Commit(R;R')$, and sends $\cment$ to $\cA$.
\item $\cA$ outputs $R^*$. The output of the experiment is $1$ if $R^*=R$. Otherwise, the output of the experiment is $0$.
\end{enumerate}
We say that the commitment is unpredictable if for any QPT adversary $\cA$, it holds that
\begin{align}
    \adva{\Sigma,\cA}{unpre}(\secp)\coloneqq \abs{\Pr[ \mathsf{Exp}_{\Sigma,\cA}^{\mathsf{unpre}}(\secp)=1] }\leq \negl(\secp).
\end{align}
\end{description}
\end{definition}

\begin{remark}
The unpredictability is a weaker version of computational hiding. We define unpredictability instead of computational hiding since this suffices for our purpose. 
\end{remark}
A non-interactive commitment scheme that satisfies the above definition exists assuming the existence of injective one-way functions or perfectly correct public key encryption \cite{EPRINT:LomSch19}. Alternatively, we can also instantiate it based on random oracles.  

%\takashi{I added the above remark and explanation on instantiations.}

\paragraph{Quantum encryption with certified deletion.} 
Broadbent and Islam~\cite{TCC:BroIsl20} introduced the notion of quantum encryption with certified deletion. 

\begin{definition}[One-Time SKE with Certified Deletion (Syntax)]\label{def:sk_cert_del}
Let $\lambda$ be the security parameter and let $p$, $q$ and $r$ be some polynomials.
A one-time secret key encryption scheme with certified deletion consists of a tuple of algorithms $(\keygen,\Enc,\Dec,\Delete,\Verify)$ with 
plaintext space $\Ms:=\{0,1\}^n$, ciphertext space $\Cs:= \cQ^{\otimes p(\lambda)}$, key space $\Ks:=\{0,1\}^{q(\lambda)}$ and deletion certificate space $\mathcal{D}:= \{0,1\}^{r(\lambda)}$.
\begin{description}
    \item[$\keygen (1^\secp) \ra \sk$:] The key generation algorithm takes as input the security parameter $1^\secp$, and outputs a secret key $\sk \in \Ks$.
    \item[$\Enc(\sk,m) \ra \ct$:] The encryption algorithm takes as input $\sk$ and a plaintext $m\in\Ms$, and outputs a ciphertext $\ct\in \Cs$.
    \item[$\Dec(\sk,\ct) \ra m^\prime~or~\bot$:] The decryption algorithm takes as input $\sk$ and $\ct$, and outputs a plaintext $m^\prime \in \Ms$ or $\bot$.
    \item[$\Delete(\ct) \ra \cert$:] The deletion algorithm takes as input $\ct$, and outputs a certification $\cert\in\mathcal{D}$.
    \item[$\Verify(\sk,\cert)\ra \top~or~\bot$:] The verification algorithm takes $\sk$ and $\cert$, and outputs $\top$ or $\bot$.
\end{description}
\end{definition}

\begin{definition}[Correctness for One-Time SKE with Certified Deletion]\label{def:sk_cd_correctness}
There are two types of correctness. One is decryption correctness and the other is verification correctness.
\begin{description}
\item[Decryption correctness:] There exists a negligible function $\negl$ such that for any $\secp\in \N$ and $m\in\Ms$, 
\begin{align}
\Pr\left[
\Dec(\sk,\ct)= m
\ \middle |
\begin{array}{ll}
\sk\lrun \keygen(1^\secp)\\
\ct \lrun \Enc(\sk,m)
\end{array}
\right] 
\geq1-\negl(\secp).
\end{align}

\item[Verification correctness:] There exists a negligible function $\negl$ such that for any $\secp\in \N$ and $m\in\Ms$, 
\begin{align}
\Pr\left[
\Verify(\sk,\cert)=\top
\ \middle |
\begin{array}{ll}
\sk\lrun \keygen(1^\secp)\\
\ct \lrun \Enc(\sk,m)\\
\cert \lrun \Delete(\ct)
\end{array}
\right] 
\geq
1-\negl(\secp).
\end{align}
\end{description}
\end{definition}

\begin{definition}[Certified Deletion Security for One-Time SKE]\label{def:sk_certified_del}
Let $\Sigma=(\keygen, \Enc, \Dec, \Delete, \Verify)$ be a secret key encryption with certified deletion.
We consider the following security experiment $\expb{\Sigma,\cA}{otsk}{cert}{del}(\secp,b)$.

\begin{enumerate}
    \item The challenger computes $\sk \la \keygen(1^\secp)$.
    \item $\cA$ sends $(m_0,m_1)\in\cM^2$ to the challenger.
    \item The challenger computes $\ct_b \la \Enc(\sk,m_b)$ and sends $\ct_b$ to $\cA$.
    \item $\cA$ sends $\cert$ to the challenger.
    \item The challenger computes $\Verify(\sk,\cert)$. If the output is $\bot$, the challenger sends $\bot$ to $\cA$.
    If the output is $\top$, the challenger sends $\sk$ to $\cA$. 
    \item $\cA$ outputs $b'\in \bit$.
\end{enumerate}
We say that the $\Sigma$ is OT-CD secure if for any unbounded $\cA$, it holds that
\begin{align}
\advc{\Sigma,\cA}{otsk}{cert}{del}(\secp)
\seteq \abs{\Pr[ \expb{\Sigma,\cA}{otsk}{cert}{del}(\secp, 0)=1] - \Pr[ \expb{\Sigma,\cA}{otsk}{cert}{del}(\secp, 1)=1] }\leq \negl(\secp).
\end{align}
\end{definition}

Broadbent and Islam~\cite{TCC:BroIsl20} showed that one-time SKE scheme with certified deletion that satisfies the above correctness and security exists unconditionally.

% !TEX root = main.tex
% !TEX spellcheck = en-US

\section{Commitment with Certified Everlasting Hiding and Classical-Extractor-Based Binding}\label{sec:everlasting_commitment}

In this section, we define and construct commitment with certified everlasting hiding and
statistical binding.
We adopt a non-standard syntax for the verification algorithm and a slightly involved definition for the binding, which we call the classical-extractor-based binding, that are tailored to our construction. This is 
because they are convenient for our construction of certified everlasting zero-knowledge proof for $\QMA$ given in \cref{sec:ZK}.
We can also construct one with a more standard syntax of verification and binding property, namely, the sum-binding, by essentially the same construction. The detail is given in \cref{app:sum_binding}.
%The one with a more standard definition of the binding, namely, the sum-binding,
%is given in \cref{app:sum_binding}.

\subsection{Definition}\label{sec:bitcomm_def}

\begin{definition}[Commitment with Certified Everlasting Hiding and Classical-Extractor-Based Binding (Syntax)]\label{def:commitment_with_cd_effective}
Let $\lambda$ be the security parameter and let $p$, $q$, $r$, $s$ and $t$ be some polynomials.
Commitment with certified everlasting hiding and classical-extractor-based binding consists of a tuple of algorithms $(\Commit,\Verify,\Delete,\Cert)$ 
with message space $\Ms:=\{0,1\}^n$, commitment space $\Cs:=\cQ^{\otimes p(\lambda)}\times \{0,1\}^{q(\lambda)}$,
decommitment space $\mathcal{D}:=\{0,1\}^{r(\lambda)}$, key space $\Ks:=\{0,1\}^{s(\lambda)}$ and
deletion certificate space $\mathcal{E}:= \{0,1\}^{t(\lambda)}$.

\begin{description}
\item[$\Commit (1^{\lambda},m)\rightarrow (\cment,\dment,\ck)$:] 
The commitment algorithm takes as input a security parameter $1^{\lambda}$ and a message $m\in\cM$, and outputs a commitment $\cment\in\Cs $,
a decommitment $\dment\coloneqq(\dment_1,\dment_2)\in\mathcal{D}$ and a key $\ck\in\Ks$.
Note that $\cment$ consists of a quantum state $\psi\in\cQ^{\otimes p(\lambda)}$ and a classical bit string $f\in\{0,1\}^{q(\lambda)}$.
\item[$\Verify(\cment,\dment)\rightarrow m'~or~\bot $:]
The verification algorithm consists of two algorithms, $\Verify_1$ and $\Verify_2$.
It parses $\dment=(\dment_1,\dment_2)$.
$\Verify_1$ takes $\cment$ and $(\dment_1,\dment_2)$ as input, and outputs $\top$ or $\bot$.
$\Verify_2$ takes $\cment$ and $\dment_1$ as input, and outputs $m'$.
If the output of $\Verify_1$ is $\bot$, then the output of $\Verify$ is $\bot$.
Otherwise the output of $\Verify$ is $m'$.
\item[$\Delete(\cment)\rightarrow \cert$:]The deletion algorithm takes $\cment$ as input, and outputs a certificate $\cert\in \mathcal{E}$. 
\item[$\Cert(\cert,\ck)\rightarrow \top~or~\bot$:]The certification algorithm takes $\cert$ and $\ck$ as input, and outputs $\top$ or $\bot$.
\end{description}
\end{definition}

\begin{definition}[Correctness]\label{def:correctness_bit_commitment_effective}
There are two types of correctness, namely, decommitment correctness and deletion correctness.
\begin{description}
\item[Decommitment correctness:]
There exists a negligible function $\negl$ such that for any $\lambda\in\mathbb{N}$ and $m\in\Ms$,
\begin{align}
    \Pr[m\leftarrow\Verify (\cment,\dment)\mid(\cment,\dment,\ck)\leftarrow\Commit(1^{\lambda},m)]\geq1-\negl(\lambda).
\end{align}

\item[Deletion correctness:]
    There exists a negligible function $\negl$ such that for any $\lambda \in\mathbb{N}$ and $m\in\Ms$, 
\begin{align}
    \Pr[\top\leftarrow \Cert(\cert,\ck)\mid(\cment,\dment,\ck)\leftarrow\Commit(1^\lambda,m),\cert\leftarrow\Delete(\cment)]\geq1-\negl(\lambda).
\end{align}
\end{description}
\end{definition}

\begin{definition}[Classical-Extractor-Based Binding]\label{def:effevtive_binding}
There exists an unbounded-time deterministic algorithm $\mathsf{Ext}$ that 
takes $f \in \bit^{q(\lambda)}$ of $\cment$ as input, and outputs
$\dment^{*}_1\leftarrow\mathsf{Ext}(f)$ 
such that
for any $\cment$, any $\dment_1\neq\dment^*_{1}$, and any $\dment_2$, 
$
\Pr[\Verify(\cment,\dment=(\dment_1,\dment_2))=\bot]= 1.
$
%\takashi{Can't we only give $f$ to $\mathsf{Ext}$ as input? Then the footnote may not be needed.}
\end{definition}

\if0
\begin{definition}[Computationally hiding]
For any QPT adversary $\cA$, 
\begin{align}
&\advb{\Sigma,\cA}{com}{hid}(\lambda)\seteq
\abs{\Pr[1\leftarrow \cA(\cment,m_0,m_1)\mid (\cment,\dment,\ck)\leftarrow\Commit(1^{\lambda},m_0)]\\&-\Pr[1\leftarrow\cA(\cment,m_0,m_1) \mid (\cment,\dment,\ck)\leftarrow\Commit(1^{\lambda},m_1)]}\leq\negl(\lambda).
\end{align}
\end{definition}
\fi

\begin{definition}[Computational Hiding]
Let $\Sigma\seteq(\Commit,\Verify,\Delete,\Cert)$.
Let us consider the following security experiment $\expa{\Sigma,\cA}{c}{hide}(\lambda,b)$ against
any QPT adversary $\mathcal{A}$.
\begin{enumerate}
\item $\cA$ generates $(m_0,m_1)\in \Ms^2$ and sends them to the challenger.
\item The challenger computes $(\cment,\dment,\ck)\leftarrow \Commit(1^{\lambda},m_b)$, and sends $\cment$ to $\mathcal{A}$.
\item $\cA$ outputs $b'\in\{0,1\}$.
\item The output of the experiment is $b'$. 
\end{enumerate}
Computational hiding means that the following is satisfied for any QPT $\cA$.
\begin{align}
\advb{\Sigma,\cA}{c}{hide}(\lambda)\seteq
\left|\Pr[ \expa{\Sigma,\cA}{c}{hide}(\lambda,0)=1]-
\Pr[\expa{\Sigma,\cA}{c}{hide}(\lambda,1)=1]\right|\leq \negl(\lambda).
\end{align}
\end{definition}

\begin{definition}[Certified Everlasting Hiding]
Let $\Sigma\seteq(\Commit,\Verify,\Delete,\Cert)$.
Let us consider the following security experiment $\expa{\Sigma,\cA}{ever}{hide}(\lambda,b)$ against
$\mathcal{A}=(\mathcal{A}_1,\mathcal{A}_2)$ consisting of any QPT adversary $\mathcal{A}_1$ and any unbounded adversary $\mathcal{A}_2$.

\begin{enumerate}
\item $\cA_1$ generates $(m_0,m_1)\in \Ms^2$ and sends it to the challenger.
\item The challenger computes $(\cment,\dment,\ck)\leftarrow \Commit(1^{\lambda},m_b)$, and sends $\cment$ to $\mathcal{A}_1$.
\item At some point, $\mathcal{A}_1$ sends $\cert$ to the challenger, and sends its internal state to $\mathcal{A}_2$.
\item The challenger computes $\Cert(\cert,\ck)$. If the output is $\top$, then the challenger outputs $\top$,
and sends $(\dment,\ck)$ to $\cA_2$.
Else, the challenger outputs $\bot$, and sends $\bot$ to $\cA_2$.
\item $\cA_2$ outputs $b'\in\{0,1\}$.
\item If the challenger outputs $\top$, then the output of the experiment is $b'$. Otherwise, the output of the experiment is $\bot$.
\end{enumerate}
We say that it is certified everlasting hiding if the following is satisfied for any
$\cA=(\cA_1,\cA_2)$.
\begin{align}
\advb{\Sigma,\cA}{ever}{hide}(\lambda)\coloneqq
\left|\Pr[ \expa{\Sigma,\cA}{ever}{hide}(\lambda,0)=1]-
\Pr[\expa{\Sigma,\cA}{ever}{hide}(\lambda,1)=1]\right|\leq \negl(\lambda).
\end{align}
\end{definition}

\begin{remark}
We remark that certified everlasting hiding does not imply computational hiding since it does not require anything if the adversary does not send a valid certificate.
\end{remark}

The following lemma will be used in the construction of the certified everlasting zero-knowledge proof for $\QMA$
in \cref{sec:ZK}.
It is shown with the standard hybrid argument (see \cref{sec:proof_of_bit}).
It is also easy to see that a similar lemma holds for computational hiding.
\begin{lemma}\label{lemma:eachbit}
Let $\Sigma\seteq(\Commit,\Verify,\Delete,\Cert)$ and $\cM=\bit$. 
Let us consider the following security experiment 
$\expb{\Sigma,\cA}{bit}{ever}{hide}(\secp, b)$ 
against
$\mathcal{A}=(\mathcal{A}_1,\mathcal{A}_2)$ consisting of any QPT adversary $\mathcal{A}_1$ and any unbounded adversary $\mathcal{A}_2$.
\begin{enumerate}
\item $\cA_1$ generates $(m^0,m^1)\in \bit^n\times\bit^n$ and sends it to the challenger.
\item The challenger computes 
\begin{align}
(\cment_i(m_i^b),\dment_i(m_i^b),\ck_i(m_i^b))\leftarrow \Commit(1^{\lambda},m^b_i)
\end{align}
for each $i\in[n]$,
and sends $\{\cment_i(m_i^b)\}_{i\in[n]}$ to $\mathcal{A}_1$.
Here, $m_i^b$ is the $i$-th bit of $m^b$.
\item At some point, $\mathcal{A}_1$ sends $\{\cert_i\}_{i\in[n]}$ to the challenger, 
and sends its internal state to $\mathcal{A}_2$.
\item The challenger computes $\Cert(\cert_i,\ck_i(m_i^b))$ for each $i\in[n]$. 
If the output is $\top$ for all $i\in[n]$, then the challenger outputs $\top$,
and sends $\{\dment_i(m_i^b),\ck_i(m_i^b)\}_{i\in[n]}$ to $\cA_2$.
Else, the challenger outputs $\bot$, and sends $\bot$ to $\cA_2$.
\item $\cA_2$ outputs $b'\in\{0,1\}$.
\item If the challenger outputs $\top$, then the output of the experiment is $b'$. Otherwise, the output of the experiment is $\bot$.
\end{enumerate}
If $\Sigma$ is certified everlasting hiding, 
\begin{align}
\advc{\Sigma,\cA}{bit}{ever}{hide}(\lambda)
\coloneqq
\left|\Pr[ \expb{\Sigma,\cA}{bit}{ever}{hide}(\lambda,0)=1]-
\Pr[\expb{\Sigma,\cA}{bit}{ever}{hide}(\lambda,1)=1]\right|\leq \negl(\lambda)
\end{align}
for any $\cA=(\cA_1,\cA_2)$.
\end{lemma}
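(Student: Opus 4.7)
The plan is a standard hybrid argument over the $n$ bit-positions, reducing indistinguishability of the multi-commitment experiment to indistinguishability of a single-commitment experiment via the assumed certified everlasting hiding of $\Sigma$.

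For $i \in \{0, 1, \ldots, n\}$, define the hybrid $\mathsf{Hyb}_i$ to be the experiment $\expb{\Sigma,\cA}{bit}{ever}{hide}$ modified so that the challenger commits to $m^1_j$ for positions $j \le i$ and to $m^0_j$ for positions $j > i$. Then $\mathsf{Hyb}_0$ coincides with $\expb{\Sigma,\cA}{bit}{ever}{hide}(\lambda,0)$ and $\mathsf{Hyb}_n$ coincides with $\expb{\Sigma,\cA}{bit}{ever}{hide}(\lambda,1)$. Let $p_i \coloneqq \Pr[\mathsf{Hyb}_i = 1]$. By the triangle inequality, $\advc{\Sigma,\cA}{bit}{ever}{hide}(\lambda) \le \sum_{i=1}^n |p_{i-1} - p_i|$, so it suffices to show $|p_{i-1} - p_i| \le \negl(\lambda)$ for each $i$.

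To bound the $i$-th hybrid gap, I construct a single-commitment adversary $\cB = (\cB_1, \cB_2)$ against certified everlasting hiding of $\Sigma$. $\cB_1$ internally runs $\cA_1$, obtains $(m^0, m^1)$, and forwards $(m^0_i, m^1_i)$ to its challenger. Upon receiving the challenge commitment $\cment^*$, $\cB_1$ generates the remaining $n-1$ commitments itself by running $\Commit(1^\lambda, m^1_j)$ for $j < i$ and $\Commit(1^\lambda, m^0_j)$ for $j > i$, storing each $(\dment_j, \ck_j)$. It hands the full tuple of commitments to $\cA_1$. When $\cA_1$ produces certificates $\{\cert_j\}_{j \in [n]}$ and an internal state $\sigma$, $\cB_1$ forwards $\cert_i$ to the challenger and passes $\sigma$ together with $\{\cert_j, \dment_j, \ck_j\}_{j \neq i}$ to $\cB_2$. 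Upon receiving the challenger's reply, $\cB_2$ first checks $\Cert(\cert_j, \ck_j) = \top$ for every $j \neq i$; if any fails, or if the reply is $\bot$, $\cB_2$ outputs $0$. Otherwise (so the reply is $(\dment^*, \ck^*)$) it runs $\cA_2$ on $\sigma$ together with $\{\dment_j, \ck_j\}_{j < i} \cup \{(\dment^*, \ck^*)\} \cup \{\dment_j, \ck_j\}_{j > i}$ and outputs whatever $\cA_2$ outputs.

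When the single-commitment challenger uses bit $b \in \{0,1\}$, the view of $\cA = (\cA_1, \cA_2)$ is perfectly distributed as in $\mathsf{Hyb}_{i-1+b}$. Moreover, the single-commitment experiment outputs $1$ precisely when the challenger accepts $\cert_i$ \emph{and} $\cB_2$ outputs $1$, which by construction happens iff every $\cert_j$ is valid and $\cA_2$ outputs $1$ — exactly the event that $\mathsf{Hyb}_{i-1+b} = 1$. Hence $\Pr[\expa{\Sigma,\cB}{ever}{hide}(\lambda, 0) = 1] = p_{i-1}$ and $\Pr[\expa{\Sigma,\cB}{ever}{hide}(\lambda, 1) = 1] = p_i$, so certified everlasting hiding of $\Sigma$ gives $|p_{i-1} - p_i| \le \advb{\Sigma,\cB}{ever}{hide}(\lambda) \le \negl(\lambda)$, and summing over $i$ finishes the proof. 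The only point requiring care is the accounting in the final step: one must confirm that $\cB$'s rule for outputting $0$ when some other certificate fails yields the \emph{same} experiment-output distribution as the multi-commitment game (where such a failure turns the output into $\bot$, i.e., not $1$); both events contribute $0$ to the probability of output $1$, which is what matters for the bound. Since $\cB_1$ is QPT (each extra commitment is generated in polynomial time) and $\cB_2$ is allowed to be unbounded, the reduction is legitimate.
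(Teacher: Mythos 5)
Your proof is correct and follows essentially the same route as the paper's: a hybrid argument over the $n$ positions, with the $i$-th gap bounded by embedding the single-commitment challenge at position $i$, generating the other $n-1$ commitments locally, and letting the reduction verify the remaining certificates itself. The only cosmetic difference is that your $\cB_2$ outputs $0$ rather than $\bot$ when an internally-checked certificate fails, which, as you note, is immaterial since only the probability of the experiment outputting $1$ enters the advantage.
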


%\begin{definition}[Perfectly binding]\label{def:perfectly binding}
%Let $\Sigma=(\Commit,\Verify,\Delete,\Cert)$ be a bit commitment with certified deletion.
%We say that $\Sigma$ is perfectly binding if
%there exists an unbounded-time algorithm $\mathsf{Ext}$ such that the following holds.
%$\mathsf{Ext}$ takes a commitment $\cment$ as an input and does not disturb a commitment $\cment$.
%Let us define $\dment^*_{\cment} $ \mor{$\dment^*_{\cment}?$}\taiga{I have corrected}as follows 
%\begin{align}
 %  (\dment^{*1}_{\cment},\dment^{*2}_{\cment})\leftarrow\mathsf{Ext}(\cment).
%\end{align}
%Then for any $\dment\neq \dment^*_\cment$ we have
%\begin{align}
 %   \Pr[\Verify(\cment,\dment)=\bot]= 1.
%\end{align}
%\end{definition}

\if0
\begin{remark}
An intuitive explanation of this definition is as follows:
The verification algorithm $\Verify$ verifies 
whether a commitment $(\cment,\dment_1^*,\dment_2)$ is in the correct format or not.
Since the decommitment $\dment_1^*$ that passes the verification is unique, even the unbounded adversary cannot change $\dment_1^*$.
Therefore, the probability of opening value $b$ does not change after the adversary sends the commitment $\cment$.\taiga{Genmitu niha tigau.}
\end{remark}
\fi

\subsection{Construction}\label{sec:bit_comm_construction}
Let $\lambda$ be the security parameter, and let $p$, $q$, $r$, $s$, $t$ and $u$ be some polynomials.
We construct commitment with certified everlasting hiding and classical-extractor-based binding,
$\Sigma_{\mathsf{ccd}}=(\Commit,\Verify,\Delete,\Cert)$, with message space $\Ms=\{0,1\}^n$, 
commitment space $\Cs=\cQ^{\otimes p(\lambda)}\times \{0,1\}^{q(\lambda)}\times \{0,1\}^{r(\lambda)}$,
decommitment space $\mathcal{D}=\{0,1\}^{s(\lambda)}\times \{0,1\}^{t(\lambda)}$,
key space $\Ks=\{0,1\}^{r(\lambda)}$ 
and deletion certificate space $\mathcal{E}=\{0,1\}^{u(\lambda)}$
from the following primitives:
\begin{itemize}
\item Secret-key encryption with certified deletion, 
$\Sigma_{\mathsf{skcd}}=\mathsf{SKE}.(\keygen,\Enc,\Dec,\Delete,\Verify)$, with plaintext space $\Ms=\{0,1\}^n$, ciphertext space $\Cs=\cQ^{\otimes p(\lambda)}$, key space $\Ks=\{0,1\}^{r(\lambda)}$, and deletion certificate space $\mathcal{E}=\{0,1\}^{u(\lambda)}$.
\item
Classical non-interactive commitment, $\Sigma_{\cment}=\Classical.\Commit$, with plaintext space $\{0,1\}^{s(\lambda)}$, randomness space $\{0,1\}^{t(\lambda)}$, and commitment space $\{0,1\}^{q(\lambda)}$.
\item
A hash function $H$ from $\{0,1\}^{s(\lambda)}$ to $\{0,1\}^{r(\lambda)}$ modeled as a quantumly-accessible random oracle.
\end{itemize}
The construction is as follows.
\begin{description}
\item[$\Commit(1^{\lambda},m)$:] $ $
\begin{itemize}
    \item Generate $\mathsf{ske.sk}\leftarrow\mathsf{SKE}.\keygen(1^{\lambda})$, $R\leftarrow\{0,1\}^{s(\lambda)}$, $R'\leftarrow\{0,1\}^{t(\lambda)}$, and a hash function $H$ from $\{0,1\}^{s(\lambda)}$ to $\{0,1\}^{r(\lambda)}$.
    \item Compute $\ske.\ct\leftarrow \SKE.\Enc(\ske.\sk,m)$, $f\leftarrow \algo{Classical}.\Commit(R;R')$, and $h\seteq H(R)\oplus\ske.\sk$.
    \item Output $\cment\seteq (\ske.\ct,f,h)$, $\dment_1\seteq R$, $\dment_2\seteq R'$, and $\ck\seteq\ske.\sk$.
\end{itemize}
\item[$\Verify_1(\cment,\dment_1,\dment_2)$:] $ $
\begin{itemize}
    \item Parse $\cment=(\ske.\ct,f,h)$, $\dment_1= R$, and $\dment_2=R'$.
    \item Output $\top$ if $f=\Classical.\Commit(R;R')$, and output $\bot$ otherwise.
\end{itemize}
\item[$\Verify_2(\cment,\dment_1)$:]  $ $
    \begin{itemize}
        \item Parse $\cment=(\ske.\ct,f,h)$ and $\dment_1=R$.
        \item Compute $\ske.\sk'\seteq H(R)\oplus h$.
        \item Output $m'\leftarrow \SKE.\Dec(\ske.\sk',\ske.\ct)$.
\end{itemize}
\item[$\Delete(\cment)$:] $ $
\begin{itemize}
    \item Parse $\cment=(\ske.\ct,f,h)$. 
    \item Compute $\ske.\cert\leftarrow \SKE.\Delete(\ske.\ct)$.
    \item Output $\cert\seteq\ske.\cert$.
\end{itemize}
\item[$\Cert(\cert,\ck)$:] $ $
\begin{itemize}
    \item Parse $\cert=\ske.\cert$ and $\ck= \ske.\sk$.
    \item Output $\top/\bot\leftarrow \SKE.\Verify(\ske.\sk,\ske.\cert)$.
\end{itemize}
\end{description}

\paragraph{Correctness.}
The decommitment and deletion correctness easily follow from the correctness of $\Sigma_{\mathsf{skcd}}$.

\paragraph{Security.}
We prove the following three theorems.

\begin{theorem}\label{thm:effective_binding}
If $\Sigma_{\cment}$ is perfect binding, then $\Sigma_{\mathsf{ccd}}$ is classical-extractor-based binding.
\end{theorem}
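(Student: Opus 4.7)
The plan is to define the extractor $\mathsf{Ext}$ directly from the perfect binding of the underlying classical commitment $\Sigma_{\cment}$, and then verify that it satisfies the required property by analyzing what $\Verify_1$ does on inputs that disagree with the extracted value.

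More concretely, on input $f \in \{0,1\}^{q(\lambda)}$, the unbounded-time extractor $\mathsf{Ext}$ exhaustively searches for a pair $(R, R') \in \{0,1\}^{s(\lambda)} \times \{0,1\}^{t(\lambda)}$ such that $f = \Classical.\Commit(R; R')$. If such a pair exists, it outputs $\dment_1^* \seteq R$; otherwise it outputs an arbitrary fixed string. Perfect binding of $\Sigma_{\cment}$ guarantees that $R$ is uniquely determined by $f$ whenever it exists, since two distinct messages can never produce the same classical commitment under any randomness. Hence $\mathsf{Ext}$ is a well-defined deterministic algorithm.

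Now fix any $\cment = (\ske.\ct, f, h)$, any $\dment_1 \neq \dment_1^* = \mathsf{Ext}(f)$, and any $\dment_2$. Parse $\dment_2 = R'$ and $\dment_1 = R$. By the definition of $\mathsf{Ext}$ and perfect binding, there is no $R''$ for which $f = \Classical.\Commit(\dment_1; R'')$, since such an $R''$ together with $\dment_1 \neq R$ would yield the same $f$ from two different messages. In particular, $f \neq \Classical.\Commit(\dment_1; \dment_2)$, so $\Verify_1(\cment, \dment_1, \dment_2)$ outputs $\bot$, and therefore $\Verify(\cment, (\dment_1, \dment_2)) = \bot$ with probability $1$.

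The proof is essentially immediate once the extractor is specified, so I do not anticipate any real obstacle; the only subtlety is the edge case where no preimage of $f$ under $\Classical.\Commit$ exists, which is handled by letting $\mathsf{Ext}$ output an arbitrary value since perfect binding still forces $\Verify_1$ to reject on every $(\dment_1,\dment_2)$ in that case.
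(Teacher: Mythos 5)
Your proof is correct and takes essentially the same route as the paper: both define $\mathsf{Ext}$ via the uniqueness of the committed value guaranteed by perfect binding of $\Sigma_{\cment}$, and then observe that any $\dment_1\neq\dment_1^*$ forces $f\neq\Classical.\Commit(\dment_1;\dment_2)$, so $\Verify_1$ rejects with certainty. Your explicit treatment of the case where $f$ has no preimage matches the paper's parenthetical remark that $\mathsf{Ext}$ outputs $\bot$ there.
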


\begin{theorem}\label{thm:everlasting_hiding_2}
If $\Sigma_{\cment}$ is unpredictable and $\Sigma_{\mathsf{skcd}}$ is OT-CD secure, then $\Sigma_{\mathsf{ccd}}$ is certified everlasting hiding.
\end{theorem}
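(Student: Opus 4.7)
The plan is to establish certified everlasting hiding of $\Sigma_{\mathsf{ccd}}$ by a two-step hybrid argument that pushes the statement onto the OT-CD security of $\Sigma_{\mathsf{skcd}}$ and the unpredictability of $\Sigma_{\cment}$. The central obstacle is that the $\SKE$ key $\ske.\sk$ is embedded in the commitment through the mask $h=H(R)\oplus\ske.\sk$: to invoke the OT-CD game, the reduction must produce the commitment without knowing $\ske.\sk$, yet $\cA_2$ eventually sees $\ske.\sk$ together with $R$ and $h$ and, being unbounded, can evaluate $H(R)$ unboundedly many times, so any mismatch between $H(R)$ and $h\oplus\ske.\sk$ visible to $\cA_2$ would ruin the simulation.

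I define $\mathsf{Hyb}_0^b$ to be the real experiment $\expa{\Sigma_{\mathsf{ccd}},\cA}{ever}{hide}(\lambda,b)$ and $\mathsf{Hyb}_1^b$ to be the same experiment modified as follows: the challenger samples $h\in\{0,1\}^{r(\lambda)}$ uniformly and independently, models the random oracle as a purely unconstrained random function while $\cA_1$ is active, and only after $\cA_1$ produces its deletion certificate does it reprogram $H(R):=h\oplus\ske.\sk$ before granting $\cA_2$ oracle access. The design point is that the oracle actually seen by $\cA_2$ is distributed identically in both hybrids, namely a uniformly random function conditioned on $H(R)=h\oplus\ske.\sk$. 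Coupling $\cA_2$'s oracle between the two hybrids, only $\cA_1$'s oracle queries witness the change, so $\cA_2$'s super-polynomially many queries do not enter the quantitative bound.

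To bound $|\Pr[\mathsf{Hyb}_0^b=1]-\Pr[\mathsf{Hyb}_1^b=1]|$, I would treat everything that happens after $\cA_1$ outputs its state and certificate as a fixed (possibly unbounded) post-processor acting on $\cA_1$'s output, and apply the one-way to hiding lemma (\cref{lemma:one-way_to_hiding}) with distinguishing set $S=\{R\}$ and query count equal to $\cA_1$'s polynomial $q$. This yields a bound of $2q\sqrt{\Pr[\cB^H\in\{R\}]}$, where $\cB$ simulates $\cA_1$ and measures a uniformly chosen query. I then reduce $\Pr[\cB^H\in\{R\}]$ to the unpredictability of $\Sigma_{\cment}$: a reduction that receives $f=\Classical.\Commit(R;R')$ as its challenge obtains $(m_0,m_1)$ from $\cA_1$, samples $\ske.\sk$ and $h$ freshly, computes $\ske.\ct\leftarrow\SKE.\Enc(\ske.\sk,m_0)$, feeds $(\ske.\ct,f,h)$ to $\cA_1$, simulates $H$ lazily, runs $\cB$, and outputs its candidate for $R$.

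To bound $|\Pr[\mathsf{Hyb}_1^0=1]-\Pr[\mathsf{Hyb}_1^1=1]|$, I would build a direct reduction $\cB=(\cB_1,\cB_2)$ to OT-CD security of $\Sigma_{\mathsf{skcd}}$: $\cB_1$ forwards $(m_0,m_1)$ to the OT-CD challenger, receives $\ske.\ct$, samples $R,R',h$ uniformly, sets $f\leftarrow\Classical.\Commit(R;R')$, sends $(\ske.\ct,f,h)$ to $\cA_1$, simulates $H$ lazily, and forwards $\cA_1$'s deletion certificate to the OT-CD challenger; if that challenger returns $\ske.\sk$, the unbounded $\cB_2$ reprograms $H(R):=h\oplus\ske.\sk$, passes $(R,R',\ske.\sk)$ along with $\cA_1$'s state to $\cA_2$, answers $\cA_2$'s queries with the reprogrammed $H$, and outputs whatever $\cA_2$ outputs. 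This perfectly emulates $\mathsf{Hyb}_1^b$, so $\cB$'s OT-CD advantage equals the hybrid gap and is therefore negligible. The hard part of the entire argument is the first step: arranging the deferred reprogramming so that the O2H query count stays polynomial despite $\cA_2$ being unbounded, which is precisely why the reprogramming is delayed until after $\cA_1$'s phase and why the two hybrids are coupled on $\cA_2$'s oracle.
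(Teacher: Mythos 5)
Your proposal is correct and follows essentially the same route as the paper's proof: a one-way-to-hiding step (charging only $\cA_1$'s polynomially many queries and reducing the extraction probability to the unpredictability of $\Sigma_{\cment}$) to replace $h$ by an independent uniform string, followed by a direct reduction of the resulting hybrid to the OT-CD security of $\Sigma_{\mathsf{skcd}}$. The only difference is presentational: the paper splits your single transition into two hybrids (first reprogramming $\cA_1$'s oracle at $R$, then showing via an exact distributional identity that this equals the experiment with uniform $h$ and a reprogrammed oracle for $\cA_2$), and your coupling of $\cA_2$'s oracle is precisely the content of that identity.
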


\begin{theorem}\label{thm:computationally_hiding_2}
If $\Sigma_{\cment}$ is unpredictable and $\Sigma_{\mathsf{skcd}}$ is OT-CD secure, then $\Sigma_{\mathsf{ccd}}$ is computationally hiding.
\end{theorem}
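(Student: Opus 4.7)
} The plan is to proceed via a standard sequence of hybrids that morph the experiment $\expa{\Sigma_{\mathsf{ccd}},\cA}{c}{hide}(\lambda, 0)$ into $\expa{\Sigma_{\mathsf{ccd}},\cA}{c}{hide}(\lambda, 1)$. Since the adversary here is only a single QPT algorithm and never receives the decommitment or key, the argument is actually simpler than the one for certified everlasting hiding: we only need to hide $\ske.\sk$ well enough that the OT-CD security of $\Sigma_{\mathsf{skcd}}$ can be invoked once.

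First I would define $\mathsf{Hyb}_0^b$ to be the real experiment with bit $b$, so that $\cment=(\ske.\ct, f, h)$ with $\ske.\ct \leftarrow \SKE.\Enc(\ske.\sk, m_b)$, $f \leftarrow \Classical.\Commit(R;R')$, and $h = H(R)\oplus\ske.\sk$. In $\mathsf{Hyb}_1^b$, replace $h$ by a freshly sampled uniform string $h \leftarrow \bit^{r(\lambda)}$ independent of everything else. In $\mathsf{Hyb}_2$, set $b=1$, so $\ske.\ct$ encrypts $m_1$ instead of $m_0$; note $\mathsf{Hyb}_2$ coincides with $\mathsf{Hyb}_1^1$. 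It then suffices to show $\mathsf{Hyb}_0^b \approx_c \mathsf{Hyb}_1^b$ for each $b$, and $\mathsf{Hyb}_1^0 \approx_c \mathsf{Hyb}_1^1$.

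For $\mathsf{Hyb}_0^b \approx_c \mathsf{Hyb}_1^b$, I would apply the one-way to hiding lemma (\cref{lemma:one-way_to_hiding}) with $S=\{R\}$ and $G$ an oracle that agrees with $H$ off $R$ and outputs a fresh uniform value at $R$. Under $G$, the value $G(R)\oplus \ske.\sk$ is uniform and independent of the adversary's view of $f$ and $\ske.\ct$, so the view is identically distributed to $\mathsf{Hyb}_1^b$. The O2H lemma then upper bounds the distinguishing advantage by $2q\sqrt{\Pr[\cB^H \text{ finds } R]}$, where $\cB$ runs $\cA$, picks a random query index, and measures. A reduction to unpredictability of $\Sigma_{\cment}$ now runs: on input $f$ from the unpredictability challenger, it samples $\ske.\sk, m_b, h$ internally, computes $\ske.\ct$, simulates $H$ (e.g.\ via a $2q$-wise independent function), and outputs the measured query. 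The extraction probability is exactly $\Pr[\cB^H\text{ finds } R]/q$, so if any QPT $\cA$ distinguishes the two hybrids with non-negligible advantage we break unpredictability.

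For $\mathsf{Hyb}_1^0 \approx_c \mathsf{Hyb}_1^1$, I would reduce to OT-CD security of $\Sigma_{\mathsf{skcd}}$: the reduction forwards the adversary's messages $(m_0,m_1)$, receives $\ske.\ct_b$, samples $R, R'$ and a uniform $h$ itself, sets $\cment=(\ske.\ct_b, \Classical.\Commit(R;R'), h)$, and simulates $H$ as above. Crucially, the reduction simply never submits a deletion certificate (or submits an invalid one), so the OT-CD challenger never reveals $\sk$; OT-CD security against such adversaries is trivially implied by \cref{def:sk_certified_del}. Combining the three steps by the triangle inequality yields the claim. The main subtlety I expect is the careful invocation of O2H with a quantum random oracle --- in particular, justifying that simulating $H$ inside the unpredictability reduction (which is itself QPT) is sound --- but this is handled routinely by using $2q$-wise independent hash functions to implement the oracle.
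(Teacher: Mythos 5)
Your proposal is correct and follows essentially the same route as the paper: replace $h$ by a uniform string via the one-way to hiding lemma together with a reduction to the unpredictability of $\Sigma_{\cment}$, then reduce the remaining indistinguishability to OT-CD security of $\Sigma_{\mathsf{skcd}}$, noting that the reduction never needs the secret key revealed. The only (cosmetic) difference is that you compress the paper's two intermediate hybrids (first reprogramming the oracle at $R$, then arguing the reprogrammed game is identically distributed to the uniform-$h$ game) into a single O2H step, and you implement the simulated oracle with $2q$-wise independent hashing where the paper cites Zhandry's compressed oracles.
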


\begin{proof}[Proof of Theorem~\ref{thm:effective_binding}]
Due to the perfect binding of $\Sigma_{\cment}=\Classical.\Commit$, there exists a unique $\dment_1^*$ such that $f=\Classical.\Commit(\dment_1^*;\dment_2)$ for a given $f$. 
Let $\mathsf{Ext}$ be the algorithm that finds such $\dment_1^*$ and outputs it. 
(If there is no such $\dment_1^*$, then $\mathsf{Ext}$ outputs $\bot$.)
Then, for any $\cment=(\ske.\ct,f,h)$, any $\dment_1\neq \dment_1^*$, and any $\dment_2$,  
\begin{eqnarray*}
\Pr[\Verify(\cment,\dment=(\dment_1,\dment_2))=\bot]\ge\Pr[f\neq \Classical.\Commit(\dment_1,\dment_2)]=1,
\end{eqnarray*}
which completes the proof.
\end{proof}

\begin{proof}[Proof of Theorem~\ref{thm:everlasting_hiding_2}]
For clarity, we describe how the experiment works against an adversary $\cA\coloneqq (\cA_1,\cA_2)$ consisting of any QPT adversary $\cA_1$ and any quantum unbounded time adversary $\cA_2$.

\begin{description}
\item[$\expa{\Sigma_{\mathsf{ccd}},\cA}{ever}{hide}(\lambda,b)$:] This is the original experiment.

\begin{enumerate}
    \item A uniformly random function $H$ from $\{0,1\}^{s(\lambda)}$ to $\{0,1\}^{r(\lambda)}$ is chosen. $\cA_1$ and $\cA_2$ can make arbitrarily many quantum queries to $H$ at any time in the experiment. 
    \item $\cA_1$ chooses $(m_0,m_1)\leftarrow \cM^2$, and sends $(m_0,m_1)$ to the challenger.
    \item The challenger generates $\ske.\sk\leftarrow\SKE.\keygen(1^{\lambda})$, $R\leftarrow\{0,1\}^{s(\lambda)}$ and $R'\lrun\{0,1\}^{t(\lambda)}$.
    The challenger computes $\ske.\ct\leftarrow\SKE.\Enc(\ske.\sk,m_b)$, $f\coloneqq \Classical.\Commit(R;R')$ and $h\coloneqq H(R)\oplus \ske.\sk$, and sends $(\ske.\ct,f,h)$ to $\cA_1$.
    \item $\cA_1$ sends $\ske.\cert$ to the challenger and sends its internal state to $\cA_2$.
    \item If $\top\lrun\SKE.\Verify(\ske.\sk,\ske.\cert)$, the challenger outputs $\top$ and sends $(R,R',\ske.\sk)$ to $\cA_2$.
    Otherwise, the challenger outputs $\bot$ and sends $\bot$ to $\cA_2$.
    \item $\cA_2$ outputs $b'$.
    \item If the challenger outputs $\top$, then the output of the experiment is $b'$.
    Otherwise, the output of the experiment is $\bot$.
\end{enumerate}
What we have to prove is that  
\begin{align}
    \advb{\Sigma_{\mathsf{ccd}},\cA}{ever}{hide}(\lambda)\seteq
\left|\Pr[\expa{\Sigma_{\mathsf{ccd}},\cA}{ever}{hide}(\lambda,0)=1]-
\Pr[\expa{\Sigma_{\mathsf{ccd}},\cA}{ever}{hide}(\lambda,1)=1]\right|\leq \negl(\lambda).
\end{align}
We define the following sequence of hybrids.

\item[$\sfhyb{1}{}(b)$:]
This is identical to $\expa{\Sigma_{\mathsf{ccd}},\cA}{ever}{hide}(\lambda,b)$ except that the oracle given to $\cA_1$ is replaced with $H_{R\rightarrow H'}$
which is $H$ reprogrammed according to $H'$ on an input $R$ where $H'$ is another independent uniformly random function. More formally, $H_{R\rightarrow H'}$ is defined by 
\begin{align}
    H_{R\rightarrow H'}(R^*) \coloneqq
    \begin{cases}
    H(R^*)~~~&(R^*\neq R)\\
    H'(R^*)~~~&(R^*=R).
    \end{cases}
\end{align}
    We note that the challenger still uses $H$ to generate $h$, and the oracle which $\cA_2$ uses is still $H$ similarly to the original experiment.

\item[$\sfhyb{2}{}(b)$:]
This is identical to $\sfhyb{1}{}(b)$
except for the following three points.
First, the challenger generates $h$ uniformly at random. 
Second, the oracle given to $\cA_1$ is replaced with $H'$ which is an independent uniformly random function.
Third, the oracle given to $\cA_2$ is replaced with $H'_{R\rightarrow h\oplus \ske.\sk}$ which is $H'$ reprogrammed to $h\oplus \ske.\sk$ on an input $R$. 
More formally, $H'_{R\rightarrow h\oplus \ske.\sk}$ is defined by 
\begin{align}
    H'_{R\rightarrow h\oplus \ske.\sk}(R^*)\seteq
    \begin{cases}
    H'(R^*)~~~&(R^*\neq R)\\
    h\oplus \ske.\sk ~~~&(R^*=R).
    \end{cases}
\end{align}

\end{description}

\begin{proposition}\label{prop:exp=hyb_1}
If $\Sigma_{\mathsf{com}}$ is unpredictable, then
\begin{align}
\abs{\Pr[\expa{\Sigma_{\mathsf{ccd}},\cA}{ever}{hide}(\lambda,b)=1]- \Pr[\sfhyb{1}{}(b)=1]}\leq\negl(\lambda).
\end{align}
\end{proposition}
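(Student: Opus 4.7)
The plan is to apply the One-Way to Hiding lemma (\cref{lemma:one-way_to_hiding}) and then reduce the resulting extraction probability to the unpredictability of $\Sigma_{\mathsf{com}}$. First, I would observe that the two experiments differ only in the oracle provided to $\cA_1$: the original $H$ versus the reprogrammed $H_{R\to H'}$. These oracles agree everywhere except on the single input $R$, while the challenger's computation of $h=H(R)\oplus\ske.\sk$ and $\cA_2$'s oracle are kept as $H$ in both. Since $\cA_2$'s execution depends only on $\cA_1$'s output state and on data whose distribution is independent of $\cA_1$'s oracle, it suffices to bound the distinguishing advantage from $\cA_1$'s view.

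Next, I would package the whole simulated experiment into a single oracle-aided distinguisher $\cD$ whose only oracle calls are forwarded $\cA_1$ queries; the remaining randomness (including the entire function $H$, the value $H'(R)$, $\ske.\sk$, $R$, $R'$, and $h$) is baked into the auxiliary input $z$, so that $\cD^{H}$ simulates $\expa{\Sigma_{\mathsf{ccd}},\cA}{ever}{hide}(\lambda,b)$ and $\cD^{H_{R\to H'}}$ simulates $\sfhyb{1}{}(b)$. Applying \cref{lemma:one-way_to_hiding} with $S=\{R\}$ then yields
\[
\left|\Pr[\expa{\Sigma_{\mathsf{ccd}},\cA}{ever}{hide}(\lambda,b)=1]-\Pr[\sfhyb{1}{}(b)=1]\right|\le 2q\sqrt{\Pr[\cB^{H_{R\to H'}}(z)=R]},
\]
where $q$ is the number of quantum queries that $\cA_1$ makes and $\cB$ measures one of those queries uniformly at random.

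Then I would reduce $\Pr[\cB^{H_{R\to H'}}(z)=R]$ to breaking unpredictability of $\Sigma_{\mathsf{com}}$. Build a QPT adversary $\cA^{\mathsf{com}}$ that, on receiving the unpredictability challenge $f$, samples $\ske.\sk\leftarrow\SKE.\keygen(1^\lambda)$, $h\leftarrow\{0,1\}^{r(\lambda)}$, and a quantumly-accessible random function (simulated in QPT by $2q$-wise independence) that plays the role of $H_{R\to H'}$; runs $\cA_1$ through both phases, producing $(m_0,m_1)$, forming $\ske.\ct\leftarrow\SKE.\Enc(\ske.\sk,m_b)$, and feeding $(\ske.\ct,f,h)$ to $\cA_1$; and finally measures a uniformly chosen one of $\cA_1$'s queries, outputting the outcome as its guess $R^*$. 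The simulation is perfect because $H_{R\to H'}$ has the same distribution as a fresh random function independent of $R$, and because in the real experiment $h=H(R)\oplus\ske.\sk$ is uniformly random and independent of $H_{R\to H'}$ (the value $H(R)$ never appears in $\cA_1$'s oracle, only $H'(R)$ does). Hence $\Pr[\cA^{\mathsf{com}}\text{ wins}]=\Pr[\cB^{H_{R\to H'}}(z)=R]$, which is negligible by unpredictability of $\Sigma_{\mathsf{com}}$, and the desired bound follows from the O2H inequality.

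The main obstacle is verifying that $\cA^{\mathsf{com}}$ can faithfully simulate the entire experiment without knowing $R$—particularly, producing the correct marginal distribution of $h$ and of the oracle seen by $\cA_1$. The resolution lies in the observation that $\cA_1$'s oracle $H_{R\to H'}$ leaks no information about $H(R)$, so $h=H(R)\oplus\ske.\sk$ is statistically uniform and independent of $\cA_1$'s view, and that $H_{R\to H'}$ as a function on its own is distributed exactly as a fresh random function. These two facts make direct uniform sampling of $h$ and independent sampling of the oracle equivalent to the real distribution, legitimizing the reduction.
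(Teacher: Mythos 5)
Your proposal is correct and follows essentially the same route as the paper's proof: both package the experiment (with the truth table of $H$ and $\cA_2$'s oracle baked into the auxiliary input) into an O2H distinguisher, observe that the query-measuring adversary can be simulated without $H(R)$ by sampling $h$ uniformly and treating $H_{R\to H'}$ as a fresh random function, and then embed the unpredictability challenge $\Classical.\Commit(R;R')$ to extract $R$ from a measured query of $\cA_1$. The only cosmetic difference is that you simulate the random oracle via $2q$-wise independence where the paper invokes Zhandry's compressed oracle; both are fine.
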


\begin{proof}
The proof is similar to \cite[Propositoin~5.8]{EPRINT:HMNY21}, but note that this time we have to consider
an unbounded adversary after the certificate is issued unlike the case of \cite{EPRINT:HMNY21}.
We assume that $\abs{\Pr[\expa{\Sigma_{\mathsf{ccd}},\cA}{ever}{hide}(\lambda,b)=1]- \Pr[\sfhyb{1}{}(b)=1]}$ is non-negligible,
and construct an adversary $\cB$ that breaks the unpredictability of $\Sigma_{\mathsf{com}}$.
For notational simplicity, we denote $\expa{\Sigma_{\mathsf{ccd}},\cA}{ever}{hide}(\lambda,b)$ by $\sfhyb{0}{}(b)$.
We consider an algorithm $\widetilde{\cA}$ that works as follows. $\widetilde{\cA}$ is given an oracle $\mathcal{O}$,
which is either $H$ or $H_{R\rightarrow H'}$, and an input $z$ that consists of $R$ and the whole truth table of $H$, where $R\leftarrow\{0,1\}^{s(\lambda)}$,
and $H$ and $H'$ are uniformly random functions. $\widetilde{\cA}$ runs $\sfhyb{0}{}(b)$ except that it uses its own oracle $\mathcal{O}$ to simulate $\cA_1$'s random oracle queries.
On the other hand, $\widetilde{\cA}$ uses $H$ to simulate $h$ and $\cA_2$'s random oracle queries regardless of $\mathcal{O}$, which is possible because the truth table of $H$ is included in the input $z$.
By definition, we have 
\begin{align}
    \Pr[\sfhyb{0}{}(b)=1]=\Pr[\widetilde{\cA}^{H}(R,H)=1] 
\end{align}
and 
\begin{align}
    \Pr[\sfhyb{1}{}(b)=1]=\Pr[\widetilde{\cA}^{H_{R\rightarrow H'}}(R,H)=1] 
\end{align}
where $H$ in the input means the truth table of $H$.
We apply the one-way to hiding lemma (\cref{lemma:one-way_to_hiding}) to the above $\widetilde{\cA}$.
Note that $\widetilde{\cA}$ is inefficient, but the one-way to hiding lemma is applicable to inefficient algorithms. Then if we let $\widetilde{\cB}$ be the algorithm that measures uniformly chosen query of $\widetilde{\cA}$,
we have 
\begin{align}
    \abs{\Pr[\widetilde{\cA}^H(R,H)=1]-\Pr[\widetilde{\cA}^{H_{R\rightarrow H'}}(R,H)=1]}\leq 2q\sqrt{\Pr[\widetilde{\cB}^{H_{R\rightarrow H'}}(R,H)=R]}.
\end{align}
By the assumption, the LHS is non-negligible, and thus $\Pr[\widetilde{\cB}^{H_{R\rightarrow H'}}(R,H)=R]$ is non-negligible.

Let $\widetilde{\cB}'$ be the algorithm that is the same as $\widetilde{\cB}$ except that it does not take the truth table of $H$ as input, 
and sets $h$ to be uniformly random string instead of setting $h\seteq H(R)\oplus \ske.\sk$. Then we have
\begin{align}
    \Pr[\widetilde{\cB}^{H_{R\rightarrow H'}}(R,H)=R]=\Pr[\widetilde{\cB}'^{H_{R\rightarrow H'}}(R)=R].
\end{align}
The reason is as follows:
First, $\widetilde{\cB}$ uses the truth table of $H$ only for generating $h\coloneqq H(R)\oplus\ske.\sk$, 
because it halts before $\widetilde{\cB}$ simulates $\cA_2$.
Second, the oracle $H_{R\rightarrow H'}$ reveals no information about $H(R)$,
and thus $h$ can be independently and uniformly random. 

Moreover, for any fixed $R$, when $H$ and $H'$ are uniformly random, $H_{R\rightarrow H'}$ is also a uniformly random function, and therefore we have 
\begin{align}
    \Pr[\widetilde{\cB}'^{H_{R\rightarrow H'}}(R)=R]=\Pr[\widetilde{\cB}'^H(R)=R].
\end{align}
Since $\Pr[\widetilde{\cB}^{H_{R\rightarrow H'}}(R,H)=R]$ is non-negligible, $\Pr[\widetilde{\cB}'^{H}(R)=R]$ is also non-negligible.
Recall that $\widetilde{\cB}'^H$ is an algorithm that simulates $\sfhyb{0}{}(b)$ with the modification that $h$ is set to be uniformly random and measures randomly chosen $\cA_1$'s query.
Then it is straightforward to construct an adversary $\cB$ that breaks the unpredictability of $\Sigma_{\cment}$ by using $\widetilde{\cB}'$.
For clarity, let us give the description of $\cB$ as follows.

$\cB$ is given $\Classical.\Commit(R;R')$ from the challenger of 
$\mathsf{Exp}_{\Sigma_{\mathsf{com}},\cB}^{\mathsf{unpre}}(\lambda)$. 
$\cB$ chooses $i\leftarrow[q]$ and runs $\sfhyb{1}{}(b)$
until $\cA_1$ makes $i$-th random oracle query
or $\cA_1$ sends the internal state to $\cA_2$,
where $\cB$ embeds the problem instance $\Classical.\Commit(R;R')$ into those sent to $\cA_1$ instead of generating it by itself. 
$\cB$ measures the $i$-th random oracle query by $\cA_1$, and outputs the measurement outcome. 
Note that $\cB$ can efficiently simulate the random oracle $H$ by Zhandry's compressed oracle technique~\cite{C:Zhandry19}.
It is clear that the probability that $\cB$ outputs $R$ is exactly $\Pr[\widetilde{\cB}'^{H}(R)=R]$, which is non-negligible.
This contradicts the unpredictability of $\Sigma_{\cment}$.
Therefore $\abs{\Pr[\sfhyb{0}{}(b)=1]-\Pr[\sfhyb{1}{}(b)=1]}$ is negligible.
\end{proof}

\begin{proposition}\label{prop:hyb_1=hyb_2}
$\Pr[\sfhyb{1}{}(b)=1]=\Pr[\sfhyb{2}{}(b)=1]$.
\end{proposition}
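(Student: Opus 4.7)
The plan is to show that the two hybrids are in fact \emph{identically} distributed, by reparameterizing the random variables that describe the oracles and the string $h$. There is no computational step here; it is purely a bookkeeping argument about uniform distributions.

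More concretely, in $\sfhyb{1}{}(b)$ the random objects are a uniformly random function $H$, an independent uniformly random function $H'$, the value $R$, and the key $\ske.\sk$. I would introduce the auxiliary random variables
\[
G \seteq H_{R\to H'} \qquad\text{and}\qquad v \seteq H(R).
\]
The key observations are: (i) for fixed $R$, $G$ is a uniformly random function, because it agrees with $H$ on every input $x\neq R$ and takes the value $H'(R)$ at $R$, and $H'(R)$ is uniform and independent of $\{H(x):x\neq R\}$; (ii) $v=H(R)$ is a uniformly random bit string independent of $G$, since $G$ reveals no information about $H(R)$; and (iii) $H$ itself is recovered from $(G,v)$ by $H(x)=G(x)$ for $x\neq R$ and $H(R)=v$.

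Under this reparameterization, $\cA_1$'s oracle in $\sfhyb{1}{}(b)$ is simply $G$, the challenger's value is $h = v\oplus \ske.\sk$, and $\cA_2$'s oracle $H$ equals $G$ on every input except at $R$, where it equals $v = h\oplus \ske.\sk$; that is, $\cA_2$'s oracle equals $G_{R\to h\oplus \ske.\sk}$. Because $v$ is uniform and independent of $G$, $h$ is uniform and independent of $G$. Renaming $G$ to $H'$ shows that the joint distribution of $(\cA_1\text{'s oracle},\ h,\ \cA_2\text{'s oracle},\ R,\ \ske.\sk,\ \ske.\ct)$ in $\sfhyb{1}{}(b)$ is exactly the same as in $\sfhyb{2}{}(b)$. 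Since the adversary's views in the two experiments are generated by identical distributions, every event has the same probability in both, and in particular $\Pr[\sfhyb{1}{}(b)=1]=\Pr[\sfhyb{2}{}(b)=1]$.

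The only potential subtlety will be to confirm that the above reparameterization is faithful even though $\cA_1$ and $\cA_2$ may make \emph{quantum} queries to their oracles: but since the identity of two functions on all inputs implies identical actions of the corresponding unitary oracle queries, there is no issue. I do not anticipate a main obstacle; the entire proof is a careful change of variables.
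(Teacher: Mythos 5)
Your proof is correct and takes essentially the same route as the paper: both arguments establish that the joint distribution of $(h,\ \cA_1\text{'s oracle},\ \cA_2\text{'s oracle})$ is identical in the two hybrids, the paper by factoring it into a chain of (conditional) probability equalities and you by an explicit change of variables $(H,H')\mapsto(G,v)$. Your reparameterization is a clean packaging of the same observation, and the remark about quantum queries being determined by the oracle's truth table is the right (and sufficient) justification that the equality of distributions transfers to the experiments' outputs.
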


\begin{proof}
First, let us remind the difference between $\sfhyb{1}{}(b)$ and $\sfhyb{2}{}(b)$.
In $\sfhyb{1}{}(b)$, $\cA_1$ receives $h=\ske.\sk\oplus H(R)$. Moreover, $\cA_1$ can access to the random oracle $H_{R\rightarrow H'}$,
and $\cA_2$ can access to the random oracle $H$.
On the other hand, in $\sfhyb{2}{}(b)$, $\cA_1$ receives uniformly random $h$.
Moreover, $\cA_1$ can access to the random oracle $H'$ instead of $H_{R\rightarrow H'}$, 
and $\cA_2$ can access to the random oracle $H'_{R\rightarrow h\oplus \ske.\sk}$ instead of $H$.

Let $\Pr[(h,H_{R\rightarrow H'},H)=(r,G,G') \mid \sfhyb{1}{}(b)]$
be the probability
that the adversary $\cA_1$ in $\sfhyb{1}{}(b)$ receives a classical bit string $r$ as $h$, 
random oracle which $\cA_1$ can access to is $G$,
and random oracle which $\cA_2$ can access to is $G'$.  
Similarly, let us define $\Pr[(h,H',H'_{R\rightarrow h\oplus \ske.\sk})=(r,G,G') \mid \sfhyb{2}{}(b)]$
for $\sfhyb{2}{}(b)$.
What we have to show is that the following equation holds for any $(r,G,G')$
\begin{align}
    \Pr[(h,H_{R\rightarrow H'},H)=(r,G,G') \mid \sfhyb{1}{}(b)]=\Pr[(h,H',H'_{R\rightarrow h\oplus \ske.\sk})=(r,G,G') \mid \sfhyb{2}{}(b)].
\end{align}

Since $h=\ske.\sk\oplus H(R)$ in $\sfhyb{1}{}(b)$, $H$ is a uniformly random function, and $h$ in $\sfhyb{2}{}(b)$ is uniformly generated, 
\begin{align}
    \Pr[h=r \mid \sfhyb{1}{}(b)]=\Pr[h=r \mid \sfhyb{2}{}(b)]
\end{align}
holds for any $r$.

For any classical bit string $r$ and any random oracle $G$, we have
\begin{align}
    \Pr[H_{R\rightarrow H'}=G \mid h=r,\sfhyb{1}{}(b)]=\Pr[H'=G \mid h=r,\sfhyb{2}{}(b)].
\end{align}
This is shown as follows.
First, in $\sfhyb{1}{}(b)$, from the construction of $H_{R\rightarrow H'}$, $H_{R\rightarrow H'}(R)$ is independent from $h$ for any $R\in\{0,1\}^{s(\lambda)}$.
Furthermore, since $H$ and $H'$ is random oracle, $H_{R\rightarrow H'}(R)$ is uniformly random for any $R\in \{0,1\}^{s(\lambda)}$.
Second, in $\sfhyb{2}{}(b)$, from the construction of $H'$, $H'(R)$ is independent from $h$ for any $R\in\{0,1\}^{s(\lambda)}$.
Furthermore, since $H'$ is random oracle, $H'(R)$ is uniformly random for any $R\in \{0,1\}^{s(\lambda)}$.
Therefore, we have the above equation.

For any classical bit string $r$ and any random oracles $G$ and $G'$, we have
\begin{align}
    \Pr[H=G' \mid (h,H_{R\rightarrow H'})=(r,G),\sfhyb{1}{}(b)]=\Pr[H'_{R\rightarrow h\oplus \ske.\sk}=G' \mid (h,H')=(r,G),\sfhyb{2}{}(b)].
\end{align}
This can be shown as follows.
First, in $\sfhyb{1}{}(b)$,
we obtain $H(R)=r\oplus \ske.\sk$,
because $h\coloneqq \ske.\sk\oplus H(R)$ and $h=r$. 
Furthermore, from the definition of $H_{R\rightarrow H'}$, we obtain $H(R^*)=G(R^*)$ for $R^*\neq R$.
Second, in $\sfhyb{2}{}(b)$,
from the definition of $H'_{R\rightarrow h\oplus \ske.\sk}$, we have $H'_{R\rightarrow h\oplus \ske.\sk}(R)=r\oplus \ske.\sk$ and
$H'_{R\rightarrow h\oplus \ske.\sk}(R^*)=G(R^*)$ for $R^*\neq R$.

From all above discussions, we have
\begin{align}
    \Pr[(h,H_{R\rightarrow H'},H)=(r,G,G') \mid \sfhyb{1}{}(b)]=\Pr[(h,H',H'_{R\rightarrow h\oplus \ske.\sk})=(r,G,G') \mid \sfhyb{2}{}(b)].
\end{align}
\end{proof}

\begin{proposition}\label{prop:hyb_2}
If $\Sigma_{\mathsf{skcd}}$ is OT-CD secure, then 
\begin{align}
\abs{\Pr[\sfhyb{2}{}(1)=1]-\Pr[\sfhyb{2}{}(0)=1]}\leq\negl(\lambda).
\end{align}
\end{proposition}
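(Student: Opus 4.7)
The plan is to reduce $\sfhyb{2}{}(b)$ directly to the OT-CD security of $\Sigma_{\mathsf{skcd}}$ via a straightforward black-box reduction. Given an adversary $\cA=(\cA_1,\cA_2)$ that distinguishes $\sfhyb{2}{}(0)$ from $\sfhyb{2}{}(1)$, I would construct an unbounded OT-CD adversary $\cB$ as follows. $\cB$ starts $\cA_1$ internally and, when $\cA_1$ outputs $(m_0,m_1)$, forwards $(m_0,m_1)$ to the OT-CD challenger; the challenger returns $\ske.\ct_b$ for its internal bit $b$. Meanwhile $\cB$ samples $R\leftarrow \{0,1\}^{s(\lambda)}$, $R'\leftarrow \{0,1\}^{t(\lambda)}$, $h\leftarrow\{0,1\}^{r(\lambda)}$ uniformly, and an independent uniformly random function $H'$ (which $\cB$ may simply tabulate, as the OT-CD game allows $\cB$ to be unbounded). $\cB$ hands $(\ske.\ct_b,\Classical.\Commit(R;R'),h)$ to $\cA_1$ and answers all of $\cA_1$'s random-oracle queries with $H'$.

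When $\cA_1$ produces the deletion certificate $\ske.\cert$ and passes its internal state to $\cA_2$, $\cB$ forwards $\ske.\cert$ to the OT-CD challenger. If the challenger returns $\ske.\sk$, $\cB$ activates $\cA_2$, sends it $(R,R',\ske.\sk)$, and simulates its random-oracle queries using $H'_{R\to h\oplus \ske.\sk}$, which is well defined because $\cB$ now knows $R$, $h$, and $\ske.\sk$. If instead the challenger returns $\bot$, $\cB$ simulates $\cA_2$ receiving $\bot$; in either case $\cB$ outputs whatever bit $\cA_2$ outputs (the ``$\bot$ vs.\ bit'' convention is inherited verbatim from the definition of $\expa{\Sigma_{\mathsf{ccd}},\cA}{ever}{hide}$, so the output convention matches both sides).

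The essential verification is that for each fixed OT-CD challenge bit $b$, the joint view of $\cA_1$ and $\cA_2$ inside $\cB$ is distributed identically to its view in $\sfhyb{2}{}(b)$. This is immediate: the commitment triple $(\ske.\ct_b,f,h)$ has exactly the same distribution (since $R,R',h$, and the OT-CD ciphertext are sampled in the same way), the oracle $H'$ given to $\cA_1$ is a fresh uniformly random function, the certificate-verification step is performed by the OT-CD challenger using the same $\ske.\sk$ that governs $\ske.\ct_b$, and the post-verification message $(R,R',\ske.\sk)$ and the reprogrammed oracle $H'_{R\to h\oplus\ske.\sk}$ supplied to $\cA_2$ coincide with what the $\sfhyb{2}{}(b)$ challenger would give. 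Hence $\advc{\Sigma_{\mathsf{ccd}},\cA}{bit}{ever}{hide}\text{-like difference between }\sfhyb{2}{}(0)\text{ and }\sfhyb{2}{}(1)$ equals $\advc{\Sigma_{\mathsf{skcd}},\cB}{otsk}{cert}{del}(\lambda)$, which is negligible by assumption.

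I do not expect any real obstacle in this step: $\cB$ is unbounded, so simulating the random oracle (with or without reprogramming) is free, and the two phases of $\cA$ map cleanly onto the pre- and post-certificate phases of the OT-CD game. The only bookkeeping to be careful about is (i) that $\cB$ samples $h$ before knowing $\ske.\sk$, which is fine because $\sfhyb{2}{}(b)$ likewise samples $h$ uniformly and independently of $\ske.\sk$, and (ii) that the reprogramming $H'\mapsto H'_{R\to h\oplus\ske.\sk}$ is applied only at the transition from $\cA_1$ to $\cA_2$, matching the hybrid exactly.
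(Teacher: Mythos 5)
Your reduction is essentially the paper's own proof: the paper likewise builds an OT-CD adversary $\cB$ that samples $R,R',h$ itself, embeds the challenge ciphertext into $(\ske.\ct,f,h)$, answers $\cA_1$'s oracle queries with a fresh random function $H'$, forwards the certificate, and on success hands $(R,R',\ske.\sk)$ to $\cA_2$ while reprogramming the oracle to $H'_{R\rightarrow h\oplus\ske.\sk}$. The one place you deviate is the invalid-certificate branch: you let $\cB$ output whatever bit $\cA_2$ outputs even when the OT-CD challenger returns $\bot$, whereas $\sfhyb{2}{}(b)$ outputs $\bot$ (never $1$) in that event. With your convention, $\Pr[\expb{\Sigma_{\mathsf{skcd}},\cB}{otsk}{cert}{del}(\lambda,b)=1]$ picks up an extra term $\Pr[\cA_2=1\wedge\text{cert invalid}]$ that is absent from $\Pr[\sfhyb{2}{}(b)=1]$, and in principle these extra terms could partially cancel the valid-case difference, so the claimed equality of advantages does not hold verbatim. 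The fix is immediate and is what the paper does: $\cB$ aborts (outputs a fixed bit, counted as not $1$) whenever the challenger rejects the certificate, which makes $\cB$'s success event coincide exactly with the event $\{\sfhyb{2}{}(b)=1\}$. Everything else in your argument, including the observation that the unbounded $\cB$ can tabulate $H'$ and perform the reprogramming once it learns $\ske.\sk$, matches the paper.
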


\begin{proof}

To show this, we assume that $\abs{\Pr[\sfhyb{2}{}(1)=1]-\Pr[\sfhyb{2}{}(0)=1]}$ is non-negligible,
and construct an adversary $\cB$ that breaks the OT-CD security of $\Sigma_{\mathsf{skcd}}$.

$\cB$ plays the experiment $\expb{\Sigma_{\mathsf{skcd}},\cB}{otsk}{cert}{del}(\lambda,b')$ for some $b'\in\{0,1\}$. 
First, $\cB$ sends $(m_0,m_1)\in \Ms^2$ to the challenger of $\expb{\Sigma_{\mathsf{skcd}},\cB}{otsk}{cert}{del}(\lambda,b')$.
$\cB$ receives $\ske.\ct$ from the challenger of $\expb{\Sigma_{\mathsf{skcd}},\cB}{otsk}{cert}{del}(\lambda,b')$.
$\cB$ generates $R\leftarrow\{0,1\}^{s(\lambda)}$, $R'\leftarrow\{0,1\}^{t(\lambda)}$ and $h\leftarrow\{0,1\}^{r(\lambda)}$, and computes $f\coloneqq\Classical.\Commit(R;R')$.
$\cB$ sends $(\ske.\ct,f,h)$ to $\cA_1$. 
$\cB$ simulates the random oracle $H'$ given to $\cA_1$ by itself.
At some point, $\cA_1$ sends $\ske.\cert$ to $\cB$, and sends the internal state to $\cA_2$.
$\cB$ passes $\ske.\cert$ to the challenger of $\expb{\Sigma_{\mathsf{skcd}},\cB}{otsk}{cert}{del}(\lambda,b')$.

The challenger of
$\expb{\Sigma_{\mathsf{skcd}},\cB}{otsk}{cert}{del}(\lambda,b')$
runs $\SKE.\Verify(\ske.\sk,\ske.\cert)\to\top/\bot$. 
If it is $\top$, 
the challenger sends $\ske.\sk$ to $\cB$.
In that case, $\cB$ outputs $\top$, and sends $(R,R',\ske.\sk)$ to $\cA_2$.
We denote this event by $\mathsf{Reveal_{sk}}(b')$.
$\cB$ simulates $\cA_2$, and outputs the output of $\cA_2$.
On the other hand, if 
$\SKE.\Verify(\ske.\sk,\ske.\cert)\to\bot$, 
then the challenger sends $\bot$ to $\cB$.
In that case,
$\cB$ outputs $\bot$ and aborts.
Note that $\cB$ can simulate the random oracle 
$H'_{R\rightarrow h\oplus \ske.\sk}$ 
given to $\cA_2$ when $\cB$ does not abort,
because $\cB$ receives $\ske.\sk$ from the challenger of $\expb{\Sigma_{\mathsf{skcd}},\cB}{otsk}{cert}{del}(\lambda,b')$ when $\cB$ does not abort.

\if0
From all above discussion, we can see that
\begin{itemize}
    \item if $b'=0$, $\cB$ perfectly simulates the challenger of $\sfhyb{2}{}(0)$ and
    \item if $b'=1$, $\cB$ perfectly simulates the challenger of $\sfhyb{2}{}(1)$ 
\end{itemize}
when $\cB_{\mathsf{skcd}}$ does not abort.
\fi

Now we have
\begin{align}
    &\advc{\Sigma_{\mathsf{skcd}},\cB}{otsk}{cert}{del}(\lambda)\\
    & \coloneqq\abs{\Pr[ \expb{\Sigma_{\mathsf{skcd}},\cB}{otsk}{cert}{del}(\secp, b')=1\mid b'=0] - 
    \Pr[ \expb{\Sigma_{\mathsf{skcd}},\cB}{otsk}{cert}{del}(\secp, b')=1\mid b'=1] }\\
    &=\abs{\Pr[ \cB=1\wedge\mathsf{Reveal_{sk}}(b') \mid b'=0] - 
    \Pr[ \cB=1\wedge\mathsf{Reveal_{sk}}(b') \mid b'=1] }\\
    &=\abs{\Pr[ \cA_{2}=1\wedge\mathsf{Reveal_{sk}}(b') \mid b'=0] - 
    \Pr[ \cA_{2}=1\wedge\mathsf{Reveal_{sk}}(b') \mid b'=1] }\\
    &=|\Pr[\sfhyb{2}{}(b')=1\wedge\mathsf{Reveal_{sk}}(b')\mid b'=0]-\Pr[\sfhyb{2}{}(b')=1\wedge\mathsf{Reveal_{sk}}(b')\mid b'=1]|\\
    &=|\Pr[\sfhyb{2}{}(b')=1\mid b'=0]-\Pr[\sfhyb{2}{}(b')=1\mid b'=1]|\\
    &=|\Pr[\sfhyb{2}{}(0)=1]-\Pr[\sfhyb{2}{}(1)=1]|.
\end{align}
In the second equation, we have used the fact that $\expb{\Sigma_{\mathsf{skcd}},\cB}{otsk}{cert}{del}(\secp, b)=1$ if and only if $\cB=1$ and the challenger of $\expb{\Sigma_{\mathsf{skcd}},\cB}{otsk}{cert}{del}(\lambda,b')$ outputs $\top$. 
In the third equation, we have used the fact that the output of $\cB$ is equal to the output of $\cA_2$ conditioned that $\mathsf{Reveal_{sk}}(b')$ occurs.
In the fourth equation, we have used the fact that $\cB$ simulates the challenger of $\sfhyb{2}{}(b)$ when $\mathsf{Reveal_{sk}}(b)$ occurs.
In the fifth equation, we have used the fact that $\sfhyb{2}{}(b)=1$ only when $\mathsf{Reveal_{sk}}(b)$ occurs.
Since $|\Pr[\sfhyb{2}{}(0)=1]-\Pr[\sfhyb{2}{}(1)=1]|$ is non-negligible,
$\advc{\Sigma_{\mathsf{skcd}},\cB}{otsk}{cert}{del}(\lambda)$ is non-negligible.
This contradicts the OT-CD security of $\Sigma_{\mathsf{skcd}}$.
\end{proof}
By \cref{prop:exp=hyb_1,prop:hyb_1=hyb_2,prop:hyb_2}, we immediately obtain
\cref{thm:everlasting_hiding_2}.

\end{proof}

\ifnum\submission=1
\begin{proof}[Proof of \cref{thm:computationally_hiding_2}]
This can be shown similarly to \cref{thm:everlasting_hiding_2}.
For the convenience of readers, we provide a proof in \cref{Sec:Computational_hiding}.
\end{proof}
\else

\begin{proof}[Proof of \cref{thm:computationally_hiding_2}]
For clarity, we describe how the experiment works against a QPT adversary $\cA$.
\begin{description}
\item[$\expa{\Sigma_{\mathsf{ccd}},\cA}{c}{hide}(\lambda,b)$:] This is the original experiment.
\begin{enumerate}
    \item A uniformly random function $H$ from $\{0,1\}^{s(\lambda)}$ to $\{0,1\}^{r(\lambda)}$ is chosen, and $\cA$ can make arbitrarily quantum queries to $H$ at any time in the experiment.
    \item $\cA$ chooses $(m_0,m_1)\leftarrow \cM^2$, and sends $(m_0,m_1)$ to the challenger.
    \item The challenger generates $\ske.\sk\leftarrow\SKE.\keygen(1^{\lambda})$, $R\leftarrow\{0,1\}^{s(\lambda)}$ and $R'\leftarrow\{0,1\}^{t(\lambda)}$.
    The challenger computes $\ske.\ct\leftarrow\SKE.\Enc(\ske.\sk,m_b)$, $f\coloneqq \Classical.\Commit(R;R')$ and $h\coloneqq H(R)\oplus \ske.\sk$, and sends $(\ske.\ct,f,h)$ to $\cA$.
    \item $\cA$ outputs $b'$. The output of the experiment is $b'$.
\end{enumerate}
Note that what we have to prove is 
\begin{align}
    \advb{\Sigma_{\mathsf{ccd}},\cA}{c}{hide}\coloneqq
    \left|\Pr[\expa{\Sigma_{\mathsf{ccd}},\cA}{c}{hide}(\lambda,0)=1]-\Pr[\expa{\Sigma_{\mathsf{ccd}},\cA}{c}{hide}(\lambda,1)=1]\right|\leq \negl(\lambda).
\end{align}
We define the following sequence of hybrids.

\item[$\sfhyb{1}{}(b)$:]
This is identical to $\expa{\Sigma_{\mathsf{ccd}},\cA}{c}{hide}(\lambda,b)$ except that the oracle given to $\cA$ is replaced with $H_{R\rightarrow H'}$
which is $H$ reprogrammed according to $H'$ on an input $R$ where $H'$ is another independent random function.
More formally, $H_{R\rightarrow H'}$ is defined by 
\begin{align}
    H_{R\rightarrow H'} (R^*)\coloneqq
    \begin{cases}
    H(R^*)~~~&(R^*\neq R)\\
    H'(R^*)~~~&(R^*=R).
    \end{cases}
\end{align}
We note that the challenger still uses $H$ to generate $h$.

\item[$\sfhyb{2}{}(b)$:]
This is identical to $\sfhyb{1}{}(b)$ except that the challenger generates $h$ uniformly random.

\end{description}
\begin{proposition}\label{prop:exp=hyb_1_comp}
If $\Sigma_{\cment}$ is unpredictable, then
\begin{align}
\left|\Pr[\expa{\Sigma_{\mathsf{ccd}},\cA}{c}{hide}(\lambda,b)=1]-\Pr[\sfhyb{1}{}(b)=1]\right|\leq\negl(\lambda). 
\end{align}
\end{proposition}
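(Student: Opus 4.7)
The plan is to follow essentially the same strategy as in the proof of \cref{prop:exp=hyb_1}, specialized to the simpler setting of a single QPT adversary. The only difference between the two experiments is the oracle that $\cA$ queries: in $\expa{\Sigma_{\mathsf{ccd}},\cA}{c}{hide}(\lambda,b)$ it is $H$, while in $\sfhyb{1}{}(b)$ it is $H_{R\to H'}$. These two functions agree everywhere except on the single input $R$, so the one-way to hiding lemma (\cref{lemma:one-way_to_hiding}) with $S=\{R\}$ is the right tool. The subtlety is that in $\sfhyb{1}{}(b)$ the challenger still uses $H$ to compute $h=H(R)\oplus\ske.\sk$, so we cannot simply run $\cA$ with a single oracle; we have to arrange things so that $H$ is accessible to the challenger part while the adversary part queries the other oracle.

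To do this, I will assume for contradiction that the two probabilities differ non-negligibly and introduce an inefficient oracle algorithm $\widetilde{\cA}$ that takes $(R,H)$ as input, where $H$ is the full truth table of the random function, and has oracle access to some function $\cO$. $\widetilde{\cA}$ uses the input truth table of $H$ only to compute $h=H(R)\oplus\ske.\sk$ and uses $\cO$ to answer $\cA$'s queries. By construction, $\Pr[\widetilde{\cA}^{H}(R,H)=1]=\Pr[\expa{\Sigma_{\mathsf{ccd}},\cA}{c}{hide}(\lambda,b)=1]$ and $\Pr[\widetilde{\cA}^{H_{R\to H'}}(R,H)=1]=\Pr[\sfhyb{1}{}(b)=1]$. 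Applying the one-way to hiding lemma (which is valid even for inefficient algorithms) yields an extractor $\widetilde{\cB}$ that measures a uniformly chosen query of $\widetilde{\cA}$ and satisfies
\begin{align}
\abs{\Pr[\widetilde{\cA}^{H}(R,H)=1]-\Pr[\widetilde{\cA}^{H_{R\to H'}}(R,H)=1]}\le 2q\sqrt{\Pr[\widetilde{\cB}^{H_{R\to H'}}(R,H)=R]},
\end{align}
so by the contradiction hypothesis $\Pr[\widetilde{\cB}^{H_{R\to H'}}(R,H)=R]$ is non-negligible.

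The remaining step is to turn $\widetilde{\cB}$ into an efficient adversary against the unpredictability of $\Sigma_{\cment}$. Exactly as in the proof of \cref{prop:exp=hyb_1}, the truth table of $H$ is used inside $\widetilde{\cB}$ only for computing $h=H(R)\oplus\ske.\sk$, but when the oracle is $H_{R\to H'}$ the value $H(R)$ is information-theoretically hidden from everything else in the experiment, so $h$ is distributed identically to a fresh uniformly random string. Therefore we may replace $\widetilde{\cB}$ by a version $\widetilde{\cB}'$ that samples $h$ uniformly at random and does not take $H$ as input, and by the same calculation $\Pr[\widetilde{\cB}'^{H_{R\to H'}}(R)=R]=\Pr[\widetilde{\cB}'^{H}(R)=R]$ is non-negligible. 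The reduction $\cB$ against unpredictability then receives $f=\Classical.\Commit(R;R')$ from its challenger, embeds $f$ into the simulated $\cA$'s input, samples $\ske.\sk$, $h$ uniformly, and $\ske.\ct\leftarrow \SKE.\Enc(\ske.\sk,m_b)$ itself, and simulates the random oracle $H$ using Zhandry's compressed oracle technique; it then measures a uniformly chosen query of $\cA$ and outputs the result. This finds $R$ with non-negligible probability, contradicting the unpredictability of $\Sigma_{\cment}$.

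The main obstacle (and only substantive point where care is needed) is exactly the same as in \cref{prop:exp=hyb_1}: the challenger needs the value $H(R)$ while the adversary queries the reprogrammed oracle $H_{R\to H'}$, and this mismatch is resolved by letting the truth table be part of the (inefficient) input to the outer algorithm that we feed into the one-way to hiding lemma. Because this time the adversary is a single QPT machine rather than a two-stage machine with an unbounded second stage, the remaining accounting is strictly simpler than in \cref{prop:exp=hyb_1} and no new ideas are required.
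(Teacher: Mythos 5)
Your proposal is correct and follows essentially the same route as the paper, which simply notes that the proof is identical to that of \cref{prop:exp=hyb_1}; your write-up is that argument specialized (and, as you observe, simplified) to the single-stage QPT adversary. No gaps.
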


\begin{proof}
It is the same as that of \cref{prop:exp=hyb_1}.
\end{proof}

\begin{proposition}\label{prop:hyb_1=hyb_2_comp}
$\Pr[\sfhyb{1}{}(b)=1]=\Pr[\sfhyb{2}{}(b)=1]$.
\end{proposition}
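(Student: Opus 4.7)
The plan is to show that the joint distribution of everything the adversary $\cA$ sees is identical in $\sfhyb{1}{}(b)$ and $\sfhyb{2}{}(b)$, so the two probabilities must coincide. The only difference between the two hybrids is how $h$ is generated: in $\sfhyb{1}{}(b)$ the challenger sets $h \seteq H(R)\oplus\ske.\sk$, while in $\sfhyb{2}{}(b)$ the challenger samples $h$ uniformly at random. The key observation is that $\cA$'s oracle is $H_{R\to H'}$, which on input $R$ returns $H'(R)$ rather than $H(R)$; this is simpler than the situation in \cref{prop:hyb_1=hyb_2} because there is no second-stage adversary $\cA_2$ that needs access to $H$ itself.

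First I would formalize that in $\sfhyb{1}{}(b)$ the value $H(R)$ is never queried by the oracle $H_{R\to H'}$ that $\cA$ interacts with, and in particular $H(R)$ is information-theoretically independent of $H_{R\to H'}$ (viewed as a function) and of $\ske.\sk$, since $H$ and $H'$ are independent uniformly random functions sampled by the challenger. Consequently, conditioned on any fixed value of $H_{R\to H'}$, $\ske.\ct$, $f$, and $\ske.\sk$, the value $h=H(R)\oplus\ske.\sk$ is distributed uniformly at random and independently of all other variables appearing in $\cA$'s view.

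Next I would compare this with $\sfhyb{2}{}(b)$, where $h$ is independently uniform by construction while $H'$, $\ske.\ct$, $f$, and $\ske.\sk$ are sampled in exactly the same way as the corresponding quantities in $\sfhyb{1}{}(b)$. Hence the joint distributions $(h, H_{R\to H'}, \ske.\ct, f, \ske.\sk)$ in $\sfhyb{1}{}(b)$ and $(h, H', \ske.\ct, f, \ske.\sk)$ in $\sfhyb{2}{}(b)$ are identical (note that $H_{R\to H'}$ is itself a uniformly random function because $H$ and $H'$ are, so relabelling the oracle does not change the distribution). Since $\cA$'s entire view is a deterministic function of these variables, the output distribution of the experiment is the same, which yields the claimed equality.

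The main (minor) obstacle is simply bookkeeping the joint distribution argument cleanly; this is purely a statistical calculation in the random oracle model with no reduction needed, so no cryptographic assumption is invoked. If preferred, one can mirror the step-by-step conditional computation used in \cref{prop:hyb_1=hyb_2}, but dropping all the terms involving the $\cA_2$-oracle, giving a shorter proof.
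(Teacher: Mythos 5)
Your proposal is correct and follows essentially the same route as the paper: both argue that since the oracle $H_{R\rightarrow H'}$ never reveals $H(R)$, the value $h=H(R)\oplus\ske.\sk$ in $\sfhyb{1}{}(b)$ is uniform and independent of everything else in $\cA$'s view, matching the uniformly sampled $h$ in $\sfhyb{2}{}(b)$. (One small note: in the paper's $\sfhyb{2}{}(b)$ for the computational-hiding case the oracle remains $H_{R\rightarrow H'}$ rather than being relabelled to $H'$, but as you observe both are uniformly random functions independent of $h$, so this does not affect the argument.)
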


\begin{proof}
This is similar to the proof of \cref{prop:hyb_1=hyb_2}.
For clarity, we describe the proof.
The difference between $\sfhyb{1}{}(b)$ and $\sfhyb{2}{}(b)$ is as follows.
In $\sfhyb{1}{}(b)$, $\cA$ receives $h\coloneqq H(R)\oplus \ske.\sk$.
In $\sfhyb{2}{}(b)$, $\cA$ receives a uniformly random $h$.
In $\sfhyb{1}{}(b)$, $h$ is uniformly random since $H(R)$ is uniformly distributed.
Therefore, the probability distribution that $\cA$ in $\sfhyb{1}{}(b)$ receives $h$ is equal to the probability distribution that $\cA$ in $\sfhyb{2}{}(b)$ receives $h$.
This completes the proof.
\end{proof}

\begin{proposition}\label{prop:hyb_2_comp}
If $\Sigma_{\mathsf{skcd}}$ is OT-CD secure, then 
\begin{align}
|\Pr[\sfhyb{2}{}(0)=1]-\Pr[\sfhyb{2}{}(1)=1]|\leq\negl(\lambda).
\end{align}
\end{proposition}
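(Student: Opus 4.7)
The plan is to reduce directly to the OT-CD security of $\Sigma_{\mathsf{skcd}}$. The reduction is strictly simpler than the one in \cref{prop:hyb_2}, because in the computational-hiding experiment the adversary $\cA$ is a single QPT algorithm that never receives $\ske.\sk$ and never needs to issue a deletion certificate, so no post-deletion bookkeeping (the $\mathsf{Reveal}_{\mathsf{sk}}$ event of \cref{prop:hyb_2}) is required. The key observation is that in $\sfhyb{2}{}(b)$ the bit $b$ enters $\cA$'s view only through $\ske.\ct$: the mask $h$ is sampled uniformly and independently, and the reprogrammed oracle $H_{R\rightarrow H'}$ is independent of $\ske.\sk$ as well.

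Concretely, I would construct a QPT reduction $\cB$ playing $\expb{\Sigma_{\mathsf{skcd}},\cB}{otsk}{cert}{del}(\lambda,b)$ as follows. $\cB$ runs $\cA$ internally, forwards $\cA$'s pair $(m_0,m_1)$ to its OT-CD challenger, and obtains $\ske.\ct$. It then samples $R\leftarrow\{0,1\}^{s(\lambda)}$, $R'\leftarrow\{0,1\}^{t(\lambda)}$, $h\leftarrow\{0,1\}^{r(\lambda)}$ on its own, computes $f\coloneqq\Classical.\Commit(R;R')$, and sends $(\ske.\ct,f,h)$ to $\cA$. $\cB$ simulates $\cA$'s queries to $H_{R\rightarrow H'}$ using Zhandry's compressed-oracle technique, which is efficient and does not require knowing $\ske.\sk$ since $h$ is sampled independently of it. When $\cA$ halts with output $b'$, $\cB$ submits any fixed dummy string as the deletion certificate to its own challenger, ignores whatever the challenger returns, and outputs $b'$.

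By construction, $\cA$'s view inside $\cB$'s simulation of $\expb{\Sigma_{\mathsf{skcd}},\cB}{otsk}{cert}{del}(\lambda,b)$ is distributed identically to its view in $\sfhyb{2}{}(b)$, so
\begin{align}
\advc{\Sigma_{\mathsf{skcd}},\cB}{otsk}{cert}{del}(\lambda)=\left|\Pr[\sfhyb{2}{}(0)=1]-\Pr[\sfhyb{2}{}(1)=1]\right|.
\end{align}
OT-CD security, applied to the QPT reduction $\cB$ (using that OT-CD against unbounded adversaries implies OT-CD against QPT adversaries), yields the claimed negligible bound. I do not anticipate a substantive obstacle here; the only point worth a brief sentence is the efficient simulation of $H_{R\rightarrow H'}$, which is handled by the compressed-oracle technique.
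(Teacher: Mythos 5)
Your reduction is correct and is essentially the same as the paper's: forward $(m_0,m_1)$ to the OT-CD challenger, embed the received $\ske.\ct$ into a commitment with self-sampled $R,R',h$ and a self-simulated random oracle, and output $\cA$'s guess, noting that the OT-CD experiment's output is $\cA$'s bit regardless of the certificate's validity. Your explicit remarks about the dummy certificate and the compressed-oracle simulation only make precise what the paper leaves implicit.
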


\begin{proof}
To show this, we assume that $|\Pr[\sfhyb{2}{}(1)=1]-\Pr[\sfhyb{2}{}(0)=1]|$ is non-negligible,
and construct an adversary $\cB$ that breaks the OT-CD security of $\Sigma_{\mathsf{skcd}}$.

First, $\cB$ sends $(m_0,m_1)\in \Ms^2$ to the challenger of $\expb{\Sigma_{\mathsf{skcd}},\cB}{otsk}{cert}{del}(\lambda,b')$.
$\cB$ receives $\ske.\ct$ from the challenger of $\expb{\Sigma_{\mathsf{skcd}},\cB}{otsk}{cert}{del}(\lambda,b')$ 
and generates $R\leftarrow\{0,1\}^{s(\lambda)}$, $R'\leftarrow\{0,1\}^{t(\lambda)}$, $f\seteq \Classical.\Commit(R;R')$ and $h\leftarrow\{0,1\}^{r(\lambda)}$.
$\cB$ sends $(\ske.\ct,f,h)$ to $\cA$.
$\cB$ simulates the random oracle given to $\cA$.

\begin{itemize}
    \item If $b'=0$, $\cB$ simulates the challenger of $\sfhyb{2}{}(0)$.
    \item If $b'=1$, $\cB$ simulates the challenger of $\sfhyb{2}{}(1)$. 
\end{itemize}
Thus, if $\cA$ distinguishes the two experiments, $\cB$ breaks the OT-CD security of $\Sigma_{\mathsf{skcd}}$
by generating $\ske.\cert$ and sends it to the challenger of $\expb{\Sigma_{\mathsf{skcd}},\cB}{otsk}{cert}{del}(\lambda,b')$.
This completes the proof. 
\end{proof}
By \cref{prop:exp=hyb_1_comp,prop:hyb_1=hyb_2_comp,prop:hyb_2_comp}, we immediately obtain \cref{thm:computationally_hiding_2}.  
\end{proof}

\fi

% !TEX root = main.tex
% !TEX spellcheck = en-US
\section{Certified Everlasting Zero-Knowledge Proof for QMA}\label{sec:ZK}

In this section, we define and construct the certified everlasting zero-knowledge proof for $\QMA$.
In \cref{sec:def_everlasting_ZK}, we define the certified everlasting zero-knowledge proof for $\QMA$.
We then construct a three round protocol with completeness-soundness gap $\frac{1}{\poly(\lambda)}$ in \cref{sec:3_construction},
and finally amplify the gap to $1-\negl(\lambda)$ with the sequential repetition in \cref{sec:Sequential_repetition}.

\subsection{Definition}\label{sec:def_everlasting_ZK}
We first define a quantum interactive protocol. 
Usually, in zero-knowledge proofs or arguments, we do not consider prover's output. However, in this paper,
we also consider prover's output,
because we are interested in the certified everlasting zero-knowledge.
Furthermore, in this paper, we consider only an interactive proof, which means that a malicious prover is unbounded. 

\begin{definition}[Quantum Interactive Protocol]\label{def:quantum_interactive_protocol}
A quantum interactive protocol is modeled as an interaction between QPT machines $\cP$ referred as a prover and $\cV$ referred as a verifier.  
We denote by $\langle \cP(x_P),\cV(x_V)\rangle (x)$ an execution of the protocol where $x$ is a common input, 
$x_P$ is $\cP$'s private input, and $x_V$ is $\cV$'s private input.
We denote by $\mathrm{OUT}_{\cV}\langle \cP(x_P),\cV(x_V)\rangle(x)$ the final output of $\cV$ in the execution.
An honest verifier's output is $\top$ indicating acceptance or $\bot$ indicating rejection, and a malicious verifier's output is an arbitrary quantum state. 
We denote by $\mathrm{OUT}_{\cP}\langle \cP(x_P),\cV(x_V)\rangle(x)$ the final output of $\cP$ in the execution.
An honest prover's output is $\top$ indicating acceptance or $\bot$ indicating rejection. 
We also define $\mathrm{OUT'}_{\cP,\cV}\langle \cP(x_{P}),\cV(x_{V})\rangle(x)$ by
\begin{align}
    \mathrm{OUT'}_{\cP,\cV}\langle \cP(x_P),\cV(x_V)\rangle(x)
    \coloneqq
    \begin{cases}
    \left(\top,\mathrm{OUT}_{\cV}\langle \cP(x_P),\cV(x_V)\rangle(x)\right)~~~&(\mathrm{OUT}_{\cP}\langle \cP(x_P),\cV(x_V)\rangle(x)=\top)\\
    (\bot,\bot)~~~&(\mathrm{OUT}_{\cP}\langle \cP(x_P),\cV(x_V)\rangle(x)\neq\top).
    \end{cases}
\end{align}
\end{definition}

We next define a computational zero-knowledge proof for $\QMA$, which is the standard definition.

\begin{definition}[Computational Zero-Knowledge Proof for QMA]\label{def:ZK}
A $c$-complete $s$-sound computational zero-knowledge proof for a $\QMA$ 
promise problem $A=(A_{\yes},A_{\no})$ is a quantum interactive protocol between a QPT prover $\cP$ and a QPT verifier $\cV$ that satisfies the followings:
\begin{description}
\item[$c$-completeness:]
For any $\statement\in A_{\yes}$ and any $\witness\in R_{A}(\statement)$, 
\begin{align}
\Pr[\mathsf{Out}_{\cV}\langle \cP(\witness^{\otimes k(|\statement|)}), \cV\rangle (\statement)=\top ]\ge c
\end{align}
for some polynomial $k$.
\item[$s$-soundness:]
 For any $\statement\in A_{\no}$ and
 any unbounded-time prover $\cP^*$,
 \begin{align}
    \Pr[\mathsf{Out}_{\cV}\langle \cP^*, \cV\rangle (\statement)=\top]\leq s.
 \end{align}

\item[Computational zero-knowledge:]
There exists a QPT algorithm $\cS$ such that 
\begin{align}
\mathsf{OUT}_{\cV^*}\langle \cP(\witness^{\otimes k(|\statement|)}),\cV^*(\cdot) \rangle(\statement) \approx_c \cS(\statement,\cV^*,\cdot~)
\end{align}
for any QPT malicious verifier $\cV^*$,
any $\statement\in A_{\yes}\cap \bit^\lambda$, any $\witness\in R_{A}(\statement)$, and some polynomial $k$.
Note that $\mathsf{OUT}_{\cV^*}\langle \cP(\witness^{\otimes k(|\statement|)}),\cV^*(\cdot) \rangle(\statement)$ and $\cS(\statement,\cV^*,\cdot~)$ are quantum channels
that map any quantum state $\xi$ to quantum states $\mathsf{OUT}_{\cV^*}\langle \cP(\witness^{\otimes k(|\statement|)}),\cV^*(\xi) \rangle(\statement)$ and $\cS(\statement,\cV^*,\xi)$, respectively.

\end{description}
We just call it a computational zero-knowledge proof if it satisfies $(1-\negl(|\statement|))$-completeness, 
$\negl(|\statement|)$-soundness, and 
computational zero-knowledge.
\end{definition}

\if0
\begin{lemma}[Gap Amplification]
If there exists a computational zero-knowledge proof for a $\QMA$ promise problem $A$ with 
$c$-completeness and $s$-soundness such that $c-s\ge\frac{1}{\poly(|\lambda|)}$,
then
there exists a computational zero-knowledge proof for $A$ (with $(1-\negl(|\lambda|))$-completeness 
and $\negl(|\lambda|)$-soundness).
\end{lemma}

\begin{proof}
It is clear that the sequential repetition amplifies the gap between completeness and soundness.
It is also shown by the standard hybrid argument that computational zero-knowledge is preserved under the sequential
repetition~\cite{GoldOre94,SIAM:Wat09}.
\end{proof}
\takashi{The above lemma seems redundant since a stronger statement is proven in Theorem 4.14.}
\fi

We finally define a certified everlasting zero-knowledge proof for $\QMA$, which is the main target of
this paper.

\begin{definition}[Certified Everlasting Zero-Knowledge Proof for QMA]\label{def:Everlasting_zeroknowledge}
A certified everlasting zero-knowledge proof for a $\QMA$ promise problem $A=(A_{\yes},A_{\no})$ is 
a computational zero-knowledge proof for $A$ (\cref{def:ZK}) that additionally satisfies the followings:
\begin{description}
\item[Prover's completeness:]
$\Pr[\mathrm{OUT}_{\cP}\langle \cP(\witness^{\otimes k(|\statement|)}),\cV\rangle(\statement)=\top]\ge1-\negl(\lambda)$
for any $\statement\in A_{\yes}\cap\bit^\lambda$ and any $\witness\in R_A(\statement)$.

\item[Certified everlasting zero-knowledge:]
There exists a QPT algorithm $\cS$ such that 
\begin{align}
    \mathrm{OUT'}_{\cP,\cV^*}\langle \cP(\witness^{\otimes k(|\statement|)}),\cV^*(\cdot)\rangle (\statement) \approx_{s}
    \cS(\statement,\cV^*,\cdot~)
\end{align}
for any QPT malicious verifier $\cV^*$,
any $\statement\in A_{\yes}\cap \bit^\lambda$, any $\witness\in R_{A}(\statement)$, and some polynomial $k$. 
Note that $\mathsf{OUT'}_{\cP,\cV^*}\langle \cP(\witness^{\otimes k(|\statement|)}),\cV^*(\cdot) \rangle(\statement)$ and $\cS(\statement,\cV^*,\cdot~)$ are quantum channels
that map any quantum state $\xi$ to quantum states
$\mathsf{OUT'}_{\cP,\cV^*}\langle \cP(\witness^{\otimes k(|\statement|)}),\cV^*(\xi) \rangle(\statement)$ and $\cS(\statement,\cV^*,\xi)$, respectively.

\end{description}
\end{definition}
\begin{remark}
We remark that certified everlasting zero-knowledge does not imply computational zero-knowledge since it does not require anything if the prover does not output $\top$.
\end{remark}

\subsection{Construction of Three Round Protocol}\label{sec:3_construction}
In this section, we construct a three round protocol with completeness-soundness gap $\frac{1}{\poly(\lambda)}$.  
In the next section, we will amplify its completeness-soundness gap by the sequential repetition.

In the following, $n$, $m$, $\Pi_{c}$, $\rho_\hist$, and $\rho_{\Sim}^{\statement,S}$ are given in \cref{def:k-SimQMA}. 
Let $S_c\subseteq[n]$ be the set of qubits on which $\Pi_c$ acts non-trivially.
The three round protocol $\Sigma_{\Xi\mathsf{cd}}$ is constructed from commitment with certified everlasting hiding and classical-extractor-based binding, $\Sigma_{\mathsf{ccd}}=(\Commit,\Verify,\Delete,\Cert)$.

\begin{description}
    \item[The first action by the prover (commitment phase):] $ $
    \begin{itemize}
        \item Generate $x,z\leftarrow\{0,1\}^n$.
        \item Compute 
        \begin{align}
       & (\cment_i(x_i),\dment_i(x_i),\ck_i(x_i))\leftarrow\Commit(1^\lambda,x_i)\\
       & (\cment_i(z_i),\dment_i(z_i),\ck_i(z_i))\leftarrow \Commit(1^\lambda,z_i)
        \end{align}
        for all $i\in[n]$. 
        \item Generate a simulatable witness $\rho_\hist$ for the instance $\statement$ and generate $X^xZ^z\rho_\hist Z^zX^x$.
        \item Send the first message (commitment),
        $\msg_1\seteq(X^xZ^z\rho_\hist Z^zX^x)\otimes \cment(x)\otimes \cment(z)$, to the verifier,
        where $\cment(x)\seteq\bigotimes_{i=1}^{n} \cment_i(x_i)$ and $\cment(z)\seteq\bigotimes_{i=1}^n \cment_i(z_i)$. 
%        \item Keep $\sfstate_p\seteq \{\dment(x_i),\ck(x_i),\dment(z_i),\ck(z_i)\}_{i\in [n]}$ as its internal state.
    \end{itemize}
    \item[The second action by the verifier (challenge phase):] $ $
    \begin{itemize} 
        \item Generate $c\leftarrow [m]$.
        \item Compute $\cert_i(x_i)\leftarrow\Delete(\cment_i(x_i))$ and $\cert_i(z_i)\leftarrow \Delete(\cment_i(z_i))$ for all $i\in \overline{S}_c$. 
        \item Send the second message (challenge), $\msg_2\seteq (c,\{\cert_i(x_i),\cert_i(z_i)\}_{i\in\overline{S}_c})$, to the prover.
    \end{itemize}
    \item[The third action by the prover (reply phase):] $ $
    \begin{itemize}
        \item Send the third message (reply), $\msg_3\seteq\{\dment_i(x_i),\dment_i(z_i)\}_{i\in S_{c}}$, to the verifier.
        \item Output $\top$ if $\top\leftarrow\Cert(\cert_i(x_i),\ck_i(x_i))$ and $\top\leftarrow\Cert(\cert_i(z_i),\ck_i(z_i))$ for all $i\in\overline{S}_c$,
         and output $\bot$ otherwise.
     \end{itemize}
     \item[The fourth action by the verifier (verification phase):] $ $
     \begin{itemize}
        \item Compute $x'_i\leftarrow\Verify(\cment_i(x_i),\dment_i(x_i))$ and $z'_i\leftarrow\Verify(\cment_i(z_i),\dment_i(z_i))$ for all $i\in S_c$. If $x_i'=\bot$ or $z_i'=\bot$ for at least one $i\in S_c$, output $\bot$ and abort.
        \item Apply $X_i^{x'_i}Z_{i}^{z'_i}$ on the $i$-th qubit of $X^xZ^z\rho_{\hist}Z^zX^x$ for each $i\in S_c$, and
        perform the POVM measurement $\{\Pi_c,I-\Pi_c\}$ on the state. 
        \item Output $\top$ if the result $\Pi_c$ is obtained, and output $\bot$ otherwise.
     \end{itemize}
\end{description}

\begin{theorem}\label{thm:everlasting_zero_proof}
$\Sigma_{\Xi\mathsf{cd}}$ is a certified everlasting zero-knowledge proof for $\QMA$ with $(1-\negl(\lambda))$-completeness and $\left(1-\frac{1}{\poly(\lambda)}\right)$-soundness.
\end{theorem}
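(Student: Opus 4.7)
The plan is to verify the four properties required by Definition 4.13 one at a time: the two completeness properties, soundness with gap $1/\poly(\lambda)$, (computational) zero-knowledge, and certified everlasting zero-knowledge. Verifier's completeness follows by combining decommitment correctness of $\Sigma_{\mathsf{ccd}}$ (so that $x'_i=x_i$ and $z'_i=z_i$ for $i\in S_c$ except with negligible probability, hence the prover's and verifier's Pauli operations cancel) with the simulatable completeness of $k$-$\compclass{SimQMA}$ giving $\Tr(\Pi_c \rho_\hist)\ge 1-\negl$. Prover's completeness is immediate from the deletion correctness of $\Sigma_{\mathsf{ccd}}$, since an honest verifier always produces certificates that pass $\Cert$.

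For soundness, I would exploit classical-extractor-based binding. Given an unbounded malicious prover's first message, an unbounded extractor applies $\mathsf{Ext}$ to the classical parts $f$ of each $\cment_i(x_i)$ and $\cment_i(z_i)$ to obtain fixed values $x^*,z^*\in\bit^n$ before the challenge $c$ is issued. By binding, if the verifier accepts in the verification phase, then the decommitments sent in $\msg_3$ must be to exactly $x^*_i,z^*_i$ for $i\in S_c$. So the verifier's acceptance probability equals the probability that $\{\Pi_c,I-\Pi_c\}$ outputs $\Pi_c$ on $X^{x^*}Z^{z^*}\sigma Z^{z^*}X^{x^*}$ restricted to $S_c$, where $\sigma$ is the quantum part of $\msg_1$. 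Averaging over the uniform challenge $c\in[m]$ reduces to the soundness clause of $k$-$\compclass{SimQMA}$ (Definition 2.6), giving an acceptance probability at most $\beta(|\statement|)\le 1-1/\poly(\lambda)$.

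For zero-knowledge, I would build a simulator $\cS$ that, on input $(\statement,\cV^*,\xi)$, (i) guesses $c'\leftarrow[m]$; (ii) uses property (LS2) to compute $\rho_\Sim^{\statement,S_{c'}}$ in polynomial time and prepares the $n$-qubit state that is $\rho_\Sim^{\statement,S_{c'}}$ on $S_{c'}$ and maximally mixed on $\overline{S}_{c'}$; (iii) samples $x,z\leftarrow\bit^n$, applies $X^xZ^z$, commits honestly to $x_i,z_i$ for $i\in S_{c'}$ and commits to $0$ for $i\in\overline{S}_{c'}$; (iv) runs $\cV^*(\xi)$ to obtain the challenge $c$ and certificates; (v) if $c=c'$, sends the decommitments $\{\dment_i(x_i),\dment_i(z_i)\}_{i\in S_{c'}}$ and sets the $(\top,\cdot)$ output when all certificates verify; (vi) otherwise invokes the Watrous rewinding procedure (Lemma 2.2) to retry with fresh randomness, using $p_0\approx 1/m$. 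Indistinguishability would be shown by a sequence of hybrids that (a) replace the qubits of $X^xZ^z\rho_\hist Z^zX^x$ on $\overline{S}_{c'}$ by the maximally mixed state, justified because unknown uniform $x_i,z_i$ twirl any state to the maximally mixed state, combined with $\|\Tr_{\overline{S}_{c'}}(\rho_\hist)-\rho_\Sim^{\statement,S_{c'}}\|_\tr\le\negl$, and (b) replace honest commitments for $i\in\overline{S}_{c'}$ with commitments to $0$. Step (b) is justified by computational hiding of $\Sigma_{\mathsf{ccd}}$ for plain zero-knowledge and by the multi-bit certified everlasting hiding lemma (Lemma 4.8) for certified everlasting zero-knowledge, where the latter applies precisely because the $\mathrm{OUT'}$ quantum channel is nontrivial only on the branch where all certificates for $\overline{S}_c$ verify.

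The main obstacle is the certified everlasting hiding hybrid in step (b), since Lemma 4.8 only gives statistical security on the event that \emph{every} certificate for $\overline{S}_c$ verifies, whereas the Watrous rewinding procedure interacts with $\cV^*$ many times and one must avoid invoking certified everlasting hiding on branches where certification fails. I plan to circumvent this by first applying Lemma 2.2 to show that the joint output of $\cS$ after rewinding is close to a single-shot execution conditioned on $c=c'$, then invoking computational hiding only to change commitments in \emph{pre-certification} branches (which are indistinguishable to the QPT stage $\cA_1$ of the reduction) and certified everlasting hiding in the final successful branch, whose entire post-certificate view must be statistically close by Lemma 4.8. The polynomial gap $1/m$ in the rewinding success probability is absorbed into the $\log(1/\epsilon)/p_0^2(1-p_0)^2$ overhead of Lemma 2.2, which is still polynomial because $\epsilon$ may be taken as any inverse polynomial before appealing to the negligible bounds from (a) and (b). The amplification to $(1-\negl(\lambda))$-completeness and $\negl(\lambda)$-soundness is then handled by the sequential repetition result developed separately in Section 4.3.
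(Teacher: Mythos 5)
Your proposal follows essentially the same route as the paper: the same four-way decomposition (completeness, prover's completeness, extractor-based soundness reduced to the $k$-$\compclass{SimQMA}$ soundness clause, and zero-knowledge via a simulator that guesses $c$, commits to $0$ outside $S_c$, uses local simulatability plus the one-time pad, Watrous rewinding, and the multi-bit certified everlasting hiding lemma), and your resolution of the rewinding-versus-deletion-certificate tension matches the paper's, which likewise applies rewinding only to the computationally-indistinguishable simulator and invokes certified everlasting hiding only when comparing the postselected (non-aborting, all-certificates-valid) channel to the real $\mathrm{OUT'}$ channel. The only cosmetic difference is the ordering of the two hybrids (state switch versus commitment switch), which the paper handles as $\cS^{(1)}\approx_s\cS^{(2)}$ and $\cS^{(3)}\approx_s\mathrm{OUT'}$ respectively.
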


This is shown from the following \cref{lem:completeness,lem:soundness,lem:everlasting_zero,lem:computational_zero}.

\begin{lemma}\label{lem:completeness}
$\Sigma_{\Xi\mathsf{cd}}$ satisfies the $(1-\negl(\lambda))$-completeness and prover's completeness.
\end{lemma}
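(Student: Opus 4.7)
The plan is to split the lemma into its two conjuncts and reduce each to a correctness property of the underlying commitment scheme $\Sigma_{\mathsf{ccd}}$, combined with the simulatable completeness of the $k$-$\compclass{SimQMA}$ presentation of $A$ (\cref{def:k-SimQMA}). All error terms will be swept up by a single union bound at the end.

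First I would dispatch prover's completeness. By inspection of the protocol, the honest prover outputs $\top$ exactly when $\Cert(\cert_i(x_i),\ck_i(x_i))=\top$ and $\Cert(\cert_i(z_i),\ck_i(z_i))=\top$ for every $i\in \overline{S}_c$. Each of those commitments was generated by the honest $\Commit$ algorithm and the certificates were produced by the honest $\Delete$ algorithm, so the deletion correctness clause of \cref{def:correctness_bit_commitment_effective} ensures each individual check succeeds with probability $1-\negl(\lambda)$. A union bound over the at most $2n$ such checks yields prover's completeness.

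For the verifier's completeness, the key step is to show that in the honest execution the measurement $\{\Pi_c,I-\Pi_c\}$ returns $\Pi_c$ with overwhelming probability. By the decommitment correctness of $\Sigma_{\mathsf{ccd}}$ and a union bound over $i\in S_c$, the verifier recovers $x'_i=x_i$ and $z'_i=z_i$ for every $i\in S_c$ except with negligible probability. Conditioned on this, the verifier applies $U\seteq\prod_{i\in S_c}X_i^{x_i}Z_i^{z_i}$ to $X^xZ^z\rho_\hist Z^zX^x$ and measures $\Pi_c$. Since $\Pi_c$ acts as identity on $\overline{S}_c$, only the reduced state on $S_c$ is relevant; and because Pauli operators supported on $\overline{S}_c$ commute with the partial trace $\Tr_{\overline{S}_c}$, this reduced state after applying $U$ equals $\Tr_{\overline{S}_c}(\rho_\hist)$ up to a global phase that cancels under conjugation. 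Hence the probability of obtaining $\Pi_c$ is exactly $\Tr(\Pi_c\rho_\hist)$, which by the simulatable completeness clause of \cref{def:k-SimQMA} is at least $1-\negl(\lambda)$.

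The only point requiring genuine care, and the main (but mild) obstacle, is the Pauli bookkeeping in the preceding paragraph: one must verify that undoing the one-time pad solely on $S_c$ (rather than on all $n$ qubits) is correct precisely because $\Pi_c$ has support exclusively inside $S_c$, so that the leftover Paulis on $\overline{S}_c$ are absorbed by the partial trace. Once this is in place, combining the negligible failure probabilities from decommitment correctness and simulatable completeness via a final union bound gives the claimed $(1-\negl(\lambda))$-completeness.
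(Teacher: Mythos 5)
Your proposal is correct and follows exactly the route the paper takes: the paper's proof of this lemma is a one-line appeal to the correctness of $\Sigma_{\mathsf{ccd}}$ (deletion correctness for the prover's completeness, decommitment correctness for the verifier's) together with the simulatable completeness clause of \cref{def:k-SimQMA}, and your write-up simply supplies the Pauli bookkeeping and union bounds that the paper leaves implicit. No gaps.
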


\begin{lemma}\label{lem:soundness}
If $\Sigma_{\mathsf{ccd}}$ is classical-extractor-based binding, then $\Sigma_{\Xi\mathsf{cd}}$ satisfies $\left(1-\frac{1}{\poly(\lambda)}\right)$-soundness.
\end{lemma}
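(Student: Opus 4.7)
The plan is to reduce soundness of $\Sigma_{\Xi\mathsf{cd}}$ to the soundness clause of the $k$-$\compclass{SimQMA}$ characterization of $\QMA$ (\cref{def:k-SimQMA}), using classical-extractor-based binding to pin down the committed strings before the challenge is sampled. Fix $\statement\in A_{\no}$ and any unbounded cheating prover $\cP^*$. After $\cP^*$ sends the first message, the honest verifier measures the classical parts $f_i(x_i),f_i(z_i)$ of the $2n$ commitments (as demanded by the construction of $\Sigma_{\mathsf{ccd}}$). First I would apply the deterministic unbounded-time extractor $\mathsf{Ext}$ from \cref{def:effevtive_binding} to each measured $f_i$, obtaining $\dment_{1,i}^*(x_i)$ and $\dment_{1,i}^*(z_i)$, and I would let $(x_i^*,z_i^*)$ be the opened messages that $\Verify_2$ would produce from these fixed $\dment_1^*$ values. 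Crucially, $(x^*,z^*)$ is fixed by the first message alone and is independent of the challenge $c$ sampled in the next round.

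Next I would exploit the binding property: for the verifier to avoid rejecting in the verification phase, the prover's reply must satisfy $\dment_{1,i}(x_i)=\dment_{1,i}^*(x_i)$ and $\dment_{1,i}(z_i)=\dment_{1,i}^*(z_i)$ for every $i\in S_c$, since any other choice forces $\Verify_1$ to output $\bot$ with probability $1$ by \cref{def:effevtive_binding}. Consequently, the bits $x'_i,z'_i$ that the verifier uses for the Pauli corrections arise from running $\Verify_2$ on the possibly adversarial $\ske.\ct_i$ with the fixed inputs $\dment_{1,i}^*$, and the resulting $n$-qubit register is then tested by $\Pi_c$. Because $\Pi_c$ acts as the identity on qubits in $\overline{S}_c$ and the Paulis $X_i^{x_i^*}Z_i^{z_i^*}$ for $i\in\overline{S}_c$ act only on those qubits, they commute through $\Pi_c$ without changing the trace, so the acceptance probability conditioned on $(x^*,z^*)$ and $c$ can be written as $\Tr(\Pi_c\tilde\sigma)$ for a \emph{single} $c$-independent $n$-qubit state $\tilde\sigma$ obtained from the prover's message by applying the full Pauli $X^{x^*}Z^{z^*}$ and deferring the $\Verify_2$-measurements into coherent controlled Paulis on the $n$-qubit register.

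Averaging over the uniform challenge $c\in[m]$ and applying the soundness clause of $k$-$\compclass{SimQMA}$ to $\tilde\sigma$ gives
\begin{align}
\frac{1}{m}\sum_{c\in[m]}\Tr(\Pi_c\tilde\sigma)\leq \beta(|\statement|)\leq 1-\frac{1}{\poly(|\statement|)}.
\end{align}
A final expectation over the measurement outcomes of the classical parts and the internal randomness of $\cP^*$ (all of which precede the sampling of $c$, so that $(x^*,z^*)$ remains $c$-independent throughout) preserves this bound and yields the claimed $(1-1/\poly(\lambda))$-soundness.

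The main obstacle is the step that collapses the verifier's multi-stage procedure into a single POVM measurement on a $c$-independent state: the $\Verify_2$-measurements on the adversarial $\ske.\ct_i$ must be deferred into coherent controlled Paulis on the $n$-qubit register, which is possible only because $\Pi_c$ is the identity on $\overline{S}_c$ so that the unused $\overline{S}_c$-Paulis commute past $\Pi_c$ harmlessly. Classical-extractor-based binding is precisely what lets this deferral survive an unbounded prover: $\mathsf{Ext}$ is deterministic, unbounded-time, and acts only on the classical $f_i$ that the honest verifier already measures, so $(x^*,z^*)$ is information-theoretically fixed before $c$ is drawn and no computational assumption enters the argument. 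After this reduction the proof proceeds essentially as in the soundness analysis of the $\Xi$-protocol of \cite{FOCS:BroGri20}.
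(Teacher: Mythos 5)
Your proposal is correct and follows essentially the same route as the paper's proof: extract the committed $(x^*,z^*)$ from the classical parts of the first message (so they are fixed before $c$ is sampled), use classical-extractor-based binding to force the prover to open to these values, insert the harmless Paulis on $\overline{S}_c$ since $\Pi_c$ acts trivially there, and apply the $k$-$\compclass{SimQMA}$ soundness bound to the resulting $c$-independent state. The only cosmetic difference is that the paper expresses the $\Verify_2$ outcomes as a classical mixture with weights $p(x,z)$ rather than as deferred coherent controlled Paulis, which is an equivalent formulation.
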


\begin{lemma}\label{lem:everlasting_zero}
If $\Sigma_{\mathsf{ccd}}$ is certified everlasting hiding and computational hiding, then $\Sigma_{\Xi\mathsf{cd}}$ satisfies certified everlasting zero-knowledge.
\end{lemma}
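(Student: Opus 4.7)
The plan is to construct a simulator $\cS$ in the spirit of Broadbent--Grilo's~\cite{FOCS:BroGri20} zero-knowledge simulator for the $\Xi$-protocol, adapted to the certified-deletion setting. On input $(\statement, \cV^*, \xi)$, $\cS$ first guesses a challenge $c^* \leftarrow [m]$, samples uniform pads $x, z \leftarrow \bit^n$, and computes the efficiently-describable reduced density matrix $\rho_{\Sim}^{\statement, S_{c^*}}$ via property (LS2). It prepares the main register in the state $X^x Z^z \bigl(\rho_{\Sim}^{\statement, S_{c^*}} \otimes I/2^{n - |S_{c^*}|}\bigr) Z^z X^x$, with qubits indexed so that $S_{c^*}$ sits in the correct positions. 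For each $i \in S_{c^*}$ it produces honest commitments to $x_i$ and $z_i$, while for each $i \in \overline{S}_{c^*}$ it commits to $0$. The simulator then runs $\cV^*(\xi)$ on this first message; if $\cV^*$ returns the challenge $c^*$ together with certificates that all verify under the stored keys, $\cS$ sends the decommitments for $\{x_i, z_i\}_{i \in S_{c^*}}$ and outputs $(\top, \text{verifier's final state})$. Otherwise it rewinds using \cref{lemma:rewinding}. If rewinding eventually fails or the certificates do not verify, $\cS$ outputs $(\bot, \bot)$.

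To analyze $\cS$, I would fix $c^*$ and carry out a hybrid argument that interpolates between the real interaction conditioned on $\cV^*$ returning challenge $c^*$ and the simulator's strategy. In $\mathsf{Hyb}_1$ I replace each commitment in $\{\cment_i(x_i), \cment_i(z_i)\}_{i \in \overline{S}_{c^*}}$ by a commitment to $0$. Since $\mathrm{OUT}_{\cP} = \top$ forces every certificate on $\overline{S}_{c^*}$ to verify, the multi-bit certified everlasting hiding of \cref{lemma:eachbit} (implied by \cref{thm:everlasting_hiding_2}) bounds the statistical distance on the joint output $\mathrm{OUT'}_{\cP,\cV^*}$ by a negligible function. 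In $\mathsf{Hyb}_2$, because the pads $\{x_i, z_i\}_{i \in \overline{S}_{c^*}}$ are now information-theoretically independent of the entire verifier's view, the Pauli twirl sends the reduced state on the $\overline{S}_{c^*}$-qubits exactly to the maximally mixed state; this step is information-theoretic. In $\mathsf{Hyb}_3$ I swap the reduced density matrix on $S_{c^*}$ for $\rho_{\Sim}^{\statement, S_{c^*}}$, changing the trace distance by at most the negligible amount guaranteed by (LS2). Hybrid $3$ coincides with what $\cS$ produces conditional on the guess $c^*$ being correct, and averaging over the $m = \poly(\lambda)$ choices of $c^*$ preserves negligibility.

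The delicate point—and the main obstacle—is integrating this per-$c^*$ hybrid argument with Watrous's rewinding. For \cref{lemma:rewinding} to apply I need the probability that $\cV^*$ emits the challenge $c^*$ to be essentially independent of whether the first message is honest or simulated, i.e., close to $1/m$ up to a negligible additive error. I would establish this using the \emph{computational} hiding of $\Sigma_{\mathsf{ccd}}$ (\cref{thm:computationally_hiding_2}) on the $\overline{S}_{c^*}$-commitments, applied at the instant $\cV^*$ emits its challenge: this yields $\negl(\lambda)$-closeness of the challenge distribution between hybrids, so one can choose $p_0 = 1/(2m)$ and $q$ within $\negl(\lambda)$ of $1/m$ to satisfy the hypotheses of the rewinding lemma with $\epsilon = \negl(\lambda)$. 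This dual argument---computational hiding \emph{before} the certificate to make rewinding go through, and certified everlasting hiding \emph{after} the certificate to bound the joint output---is precisely what the two-tier security definition of $\Sigma_{\mathsf{ccd}}$ supports, and it yields the statistical indistinguishability demanded by \cref{def:Everlasting_zeroknowledge}.
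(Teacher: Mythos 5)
Your proposal is correct and follows essentially the same route as the paper: the same simulator (guess $c$, commit to $0$ on $\overline{S}_c$, one-time-padded locally simulated state on $S_c$, rewind on a wrong challenge guess), with your $\mathsf{Hyb}_1$ matching the paper's reduction to the multi-bit certified everlasting hiding lemma, your $\mathsf{Hyb}_2$/$\mathsf{Hyb}_3$ matching its local-simulatability step, and the same division of labor between computational hiding (to pin the non-abort probability near $1/m$ for Watrous rewinding) and certified everlasting hiding (to bound the joint output $\mathrm{OUT'}_{\cP,\cV^*}$).
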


\begin{lemma}\label{lem:computational_zero}
If $\Sigma_{\mathsf{ccd}}$ is computational hiding, then $\Sigma_{\Xi\mathsf{cd}}$ satisfies computational zero-knowledge.
\end{lemma}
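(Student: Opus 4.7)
The plan is to construct a QPT simulator $\cS$ by guessing the verifier's challenge and invoking Watrous's quantum rewinding lemma (\cref{lemma:rewinding}), and then to prove indistinguishability by a hybrid argument that uses computational hiding of $\Sigma_{\mathsf{ccd}}$ together with the local simulatability property (LS2) of \cref{def:k-SimQMA}.

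First I would define an auxiliary QPT machine $\cS_0(\statement, \cV^*, \xi)$ as follows: sample $c^* \leftarrow [m]$ and $x, z \leftarrow \{0,1\}^n$; prepare an $n$-qubit state $\sigma^{c^*}$ consisting of $\rho_{\Sim}^{\statement, S_{c^*}}$ on the qubits in $S_{c^*}$ and $\ket{0}\bra{0}^{\otimes(n-k)}$ on the qubits in $\overline{S}_{c^*}$; form commitments $\cment(x), \cment(z)$ to every bit of $x$ and $z$; send $(X^x Z^z \sigma^{c^*} Z^z X^x)\otimes \cment(x)\otimes \cment(z)$ to $\cV^*$ together with $\xi$; read off its challenge $c$ (ignoring the deletion certificates, which play no role for computational zero-knowledge); if $c = c^*$, reply with the decommitments $\{\dment_i(x_i),\dment_i(z_i)\}_{i\in S_{c^*}}$ and output $\cV^*$'s final state, otherwise output $\mathtt{fail}$. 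The simulator $\cS$ then runs $\cS_0$ coherently with $c^*$ prepared in a uniform superposition and invokes \cref{lemma:rewinding} with ``success'' being the measurement event $c = c^*$.

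For \cref{lemma:rewinding} to produce the desired output I must show that the success probability $p(\xi)$ is negligibly close to $q \coloneqq 1/m$, uniformly in $\xi$. This is the only place where computational hiding of $\Sigma_{\mathsf{ccd}}$ is invoked: any noticeable deviation of $p(\xi)$ from $1/m$ would imply that $\cV^*$'s challenge distribution depends noticeably on the first message, and since $c^*$ affects the first message only through the committed bits $x, z$ and through the identity of $S_{c^*}$, this would yield a QPT distinguisher contradicting the multi-bit extension of computational hiding analogous to \cref{lemma:eachbit}. Setting $p_0 \coloneqq 1/(2m)$, \cref{lemma:rewinding} then produces a QPT $\cS$ whose output is negligibly close to that of $\cS_0$ conditioned on $c = c^*$.

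It remains to show that this conditional output is computationally indistinguishable from the real view $\mathsf{OUT}_{\cV^*}\langle \cP(\witness^{\otimes k}),\cV^*(\xi)\rangle(\statement)$. I would introduce two hybrids. $\mathsf{H}_1$ replaces each $\cment_i(x_i),\cment_i(z_i)$ for $i \in \overline{S}_{c^*}$ by a commitment to $0$, which is indistinguishable from the real execution by (the multi-bit extension of) computational hiding. In $\mathsf{H}_1$ the bits $\{x_i, z_i\}_{i \in \overline{S}_{c^*}}$ are absent from everything $\cV^*$ sees, so averaging over them turns the Pauli twirl on $\overline{S}_{c^*}$ into the fully depolarizing channel; the state sent to $\cV^*$ then factorizes (after averaging) as the $S_{c^*}$-reduction of $X^x Z^z \rho_\hist Z^z X^x$ tensored with $(I/2)^{\otimes(n-k)}$ on $\overline{S}_{c^*}$. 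By (LS2), this reduction is statistically close to $X^{x_{S_{c^*}}} Z^{z_{S_{c^*}}} \rho_{\Sim}^{\statement, S_{c^*}} Z^{z_{S_{c^*}}} X^{x_{S_{c^*}}}$, so the distribution becomes identical to what $\cS_0$ prepares (noting that the specific purification of the maximally mixed factor is irrelevant after the same twirl). The main obstacle will be the quantitative control of $p(\xi)$ in the rewinding step; the rest is a short hybrid plus a routine Pauli-twirl computation.
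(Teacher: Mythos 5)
Your overall route is the same as the paper's: a simulator that guesses the challenge $c^*$, plants $\rho_{\Sim}^{\statement,S_{c^*}}$ under a Pauli one-time pad, aborts on a wrong guess, is boosted by Watrous's rewinding lemma (\cref{lemma:rewinding}) after showing the non-abort probability is $\frac{1}{m}\pm\negl(\lambda)$ via computational hiding, and is connected to the real execution by one hiding hybrid plus local simulatability. The paper's appendix proof has exactly this structure ($\cS^{(1)},\cS^{(2)},\cS^{(3)}$ and \cref{prop:abort_comp,prop:indistinguishable_S_S'_comp,prop:indistinguishable_S'_V^*_comp}).

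There is, however, one concrete flaw in your final step. Your $\cS_0$ commits to the \emph{real} bits $x_i,z_i$ for all $i\in[n]$, and on the qubits in $\overline{S}_{c^*}$ it places $X^{x_i}Z^{z_i}\ket{0}\bra{0}Z^{z_i}X^{x_i}=\ket{x_i}\bra{x_i}$; averaged over $x_i$, the joint state of qubit $i$ and its commitment is $\frac{1}{2}\sum_{x_i}\ket{x_i}\bra{x_i}\otimes\cment_i(x_i)$, i.e., the commitment is perfectly correlated with a plaintext qubit the verifier can simply measure. In your hybrid $\mathsf{H}_1$, by contrast, those positions carry $\frac{I}{2}\otimes\cment_i(0)$. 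These two distributions are \emph{not} identical, contrary to your claim that after the twirl "the distribution becomes identical to what $\cS_0$ prepares"; they are only computationally indistinguishable, and establishing that costs one further invocation of (the multi-bit version of) computational hiding. The clean fix — and what the paper does — is to have the simulator commit to $x^{S_{c^*}},z^{S_{c^*}}$ (zeros outside $S_{c^*}$) rather than to $x,z$, so that its commitments on $\overline{S}_{c^*}$ match $\mathsf{H}_1$ exactly and the only remaining gap to the real execution is the single hiding step plus the statistical step from (LS2). With that adjustment (or the extra hybrid), and with the observation that the conditioning on $c=c^*$ is harmless because the non-abort probability is noticeably bounded below by $\approx\frac{1}{m}$ on both sides, your argument goes through.
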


\begin{proof}[Proof of \cref{lem:completeness}]
It is clear from the definition of $k$-$\compclass{SimQMA}$ (\cref{def:k-SimQMA}) and the correctness of $\Sigma_{\mathsf{ccd}}$. 
\end{proof}

\begin{proof}[Proof of \cref{lem:soundness}]
Let us show the soundness by analyzing the case for $\statement\in A_{\no}$. 
The prover sends the first message to the verifier.
The first message consists of three registers, $RS$, $RCX$, and $RCZ$.
The register $RCX$ further consists of $n$ registers $\{RCX_i\}_{i\in[n]}$.
The register $RCZ$ also consists of $n$ registers $\{RCZ_i\}_{i\in[n]}$.
If the prover is honest, $RS$ contains $X^xZ^z \rho_\hist Z^zX^x$, $RCX_i$ contains $\cment_i(x_i)$, and
$RCZ_i$ contains $\cment_i(z_i)$.
Let $\cment_{i,x}'$ and $\cment_{i,z}'$ be the (reduced) states of the registers $RCX_i$ and $RCZ_i$, respectively.
Let $f_{i,x}'$ and $f_{i,z}'$ be classical parts of $\cment_{i,x}'$ and $\cment_{i,z}'$, respectively.

\if0
Without loss of generality, we can assume that the first message from the prover to the verifier is a quantum state $\psi$ and
a classical bit string that is supposed to be the classical part of $\cment(a)\otimes\cment(e)$,
because we can assume that the verifier immediately performs computational basis measurements on the registers that are supposed to contain the classical information.
Let $y_i$ and $z_i$ be bit stings that are supposed to be $\cment(a_i)$ and $\cment(e_i)$, respectively.
\fi

The verifier generates $c\leftarrow[m]$, and issues the deletion certificate.
The verifier sends $c$ and the deletion certificate to the prover.
%sends $(c,\{\cert(a_i),\cert(e_i)\}_{i\in\overline{S}_c})$ to the prover
The verifier then receives $\{\dment_1^{x,i},\dment_2^{x,i},\dment_1^{z,i},\dment_2^{z,i}\}_{i\in S_c}$ from the prover. 
For each $i\in[n]$, let us define $\dment_1^{*,x,i}$ and $\dment_1^{*,z,i}$ by
    $\dment_1^{*,x,i}\leftarrow \mathsf{Ext}(f_{i,x}')$ and
    $\dment_1^{*,z,i}\leftarrow \mathsf{Ext}(f_{i,z}')$, respectively. 
Note that each $\dment_1^{*,x,i}$ and $\dment_1^{*,z,i}$ is independent of $c$, because
$\cment_{i,x}'$ and $\cment_{i,z}'$ are sent to the verifier before the verifier chooses $c$.

We have only to consider the case when
$\dment_1^{x,i}=\dment_1^{*,x,i}$
and
$\dment_1^{z,i}=\dment_1^{*,z,i}$
for all $i\in S_c$,
because of the following reason:
Due to the classical-extractor-based binding of $\Sigma_{\mathsf{ccd}}$, 
$\Verify(\cment_{i,x}',(\dment_1^{x,i},\dment_2^{x,i}))=\bot$
for any $\dment_1^{x,i}\neq \dment_1^{*,x,i}$ and any $\dment_2^{x,i}$.
Similarly, 
$\Verify(\cment_{i,z}',(\dment_1^{z,i},\dment_2^{z,i}))=\bot$
for any $\dment_1^{z,i}\neq \dment_1^{*,z,i}$ and any $\dment_2^{z,i}$. 
Therefore, the prover who wants to make the verifier accept has to send
$\dment_1^{x,i}=\dment_1^{*,x,i}$
and
$\dment_1^{z,i}=\dment_1^{*,z,i}$ for all $i\in S_c$.

%Let $p_{ae}$ be the probability that the algorithm $\Verify$ outputs $a$ and $e$.
%More formally, we denote
%\begin{align}
 %   p_{ae}\seteq \Pr[\Verify (y_i,d_{y_i}^*)=a_i, \Verify(z_i,d_{z_i}^*)=e_i, \,\,\, i\in[n] ].
%\end{align}
Let us define
\begin{align}
    p(x,z)\seteq \Pr\left[\bigwedge_{i\in[n]}\left(\Verify_2 (\cment_{i,x}',\dment_1^{*,x,i})\to x_i \wedge \Verify_2(\cment_{i,z}',\dment_1^{*,z,i})\to z_i\right)\right].
\end{align}
Note that $p(x,z)$ is independent of $c$, because
$\{\cment_{i,x}',\cment_{i,z}'\}_{i\in[n]}$
and
$\{\dment_1^{*,x,i},\dment_1^{*,z,i}\}_{i\in [n]}$ are independent of $c$.
Let $\psi$ be the (reduced) state of the register $RS$.
The verifier's acceptance probability is 
\begin{align}
    &\frac{1}{m}\sum_{c\in[m]}\sum_{x,z\in\bit^n}p(x,z)\Tr\left[\Pi_c 
    \left(\prod_{i\in S_c}Z_i^{z_i}X_i^{x_i}\right)
    \psi 
    \left(\prod_{i\in S_c}X_i^{x_i}Z_i^{z_i}\right)\right]\\
     &=\frac{1}{m}\sum_{c\in[m]}\sum_{x,z\in\bit^n}p(x,z)\Tr\left[\Pi_c 
    \left(\prod_{i\in [n]}Z_i^{z_i}X_i^{x_i}\right)
    \psi 
    \left(\prod_{i\in [n]}X_i^{x_i}Z_i^{z_i}\right)\right]\\
      &=\frac{1}{m}\sum_{c\in[m]}
      \Tr\left[\Pi_c 
      \sum_{x,z\in\bit^n}p(x,z)
    \left(\prod_{i\in [n]}Z_i^{z_i}X_i^{x_i}\right)
    \psi 
    \left(\prod_{i\in [n]}X_i^{x_i}Z_i^{z_i}\right)\right]\\
    &\leq 1-\frac{1}{\poly(\lambda)},
\end{align}
where the last inequality comes from \cref{def:k-SimQMA}. 
This completes the proof.
\end{proof}

\begin{proof}[Proof of \cref{lem:everlasting_zero}]
Let us show certified everlasting zero-knowledge.
For a subset $S_c\subseteq[n]$ and $x,z\in\bit^n$, let us define $x^{S_c}\seteq (x^{S_c}_1,x^{S_c}_2,\cdots, x^{S_c}_n)$ and $z^{S_c}\seteq (z^{S_c}_1,z^{S_c}_2,\cdots,z^{S_c}_n)$,
where $x^{S_c}_{i}=x_{i}$ and $z^{S_c}_{i}=z_{i}$ for $i\in S_{c}$,
and $x^{S_c}_{i}=z^{S_c}_{i}=0$ for $i\notin S_{c}$. 

For clarity, we describe how the interactive algorithm $\langle \cP(\witness^{\otimes k(|\statement|)}),\cV^*(\xi)\rangle (\statement)$ runs against a QPT verifier $\cV^*$ with an input $\xi$,
where $\witness$ is the witness and $\statement$ is the instance.
\begin{description}
\item[$\langle \cP(\witness^{\otimes k(|\statement|)}),\cV^*(\xi)\rangle (\statement)$:] $ $
\begin{enumerate}
    \item $\cP$ generates $x,z\leftarrow \{0,1\}^n$, and computes 
    \begin{align}
    &(\cment_i(x_i),\dment_i(x_i),\ck_i(x_i))\leftarrow\Commit(1^\lambda,x_i) \\ 
    &(\cment_i(z_i),\dment_i(z_i),\ck_i(z_i))\leftarrow\Commit(1^\lambda,z_i)
    \end{align}
    for all $i\in [n]$.
    $\cP$ sends $\msg_1\seteq(X^xZ^z\rho_\hist Z^zX^x)\otimes \cment(x)\otimes\cment(z)$ to $\cV^*$. 
    \item $\cV^*$ appends $\xi$ to the received state, and
   runs a QPT circuit $V_1^*$ on it to obtain $(c,\{\cert_{i,x}',\cert_{i,z}'\}_{i\in\overline{S}_c})$.
   $\cV^*$ sends $\msg_2\seteq(c,\{\cert_{i,x}',\cert_{i,z}'\}_{i\in\overline{S}_c})$ to $\cP$.
    \item $\cP$ sends $\msg_3\seteq\{\dment_i(x_i),\dment_i(z_i)\}_{i\in S_c}$ to $\cV^*$.
    \item $\cV^*$ appends $\msg_3$ to its state, and runs a QPT circuit $V_2^*$ on it. $\cV^*$ outputs its state $\xi'$.
    \item $\cP$ computes $\Cert(\cert_{i,x}',\ck_i(x_i))$ and $\Cert(\cert_{i,z}',\ck_i(z_i))$ for all $i\in\overline{S}_c$. 
    If all outputs are $\top$, then $\cP$ outputs $\top$.
    Otherwise, $\cP$ outputs $\bot$.
\end{enumerate}
\end{description}

Next let us define a simulator $\cS^{(1)}$ as follows.

\begin{description}
\item [The simulator $\cS^{(1)}(\statement,\cV^*,\xi)$:] $ $
\begin{enumerate}
    \item Pick $c\leftarrow [m]$ and $x,z\leftarrow \{0,1\}^n$. 
    Compute 
    \begin{align}
   & (\cment_i(x^{S_c}_i),\dment_i(x^{S_c}_i),\ck_i(x^{S_c}_i))\leftarrow\Commit(1^\lambda,x^{S_c}_i)\\ 
   & (\cment_i(z^{S_c}_i),\dment_i(z^{S_c}_i),\ck_i(z^{S_c}_i))\leftarrow\Commit(1^\lambda,z^{S_c}_i)
    \end{align}
    for all $i\in [n]$.
    \item Generate $(X^xZ^z\sigma(c) Z^zX^x) \otimes \cment(x^{S_c})\otimes \cment(z^{S_c})\otimes \xi$,
    where $\sigma(c)\seteq \rho_\simulator^{\statement,S_c}\otimes\left(\prod_{i\in \overline{S}_c}|0\rangle\langle0|_i\right)$.
    Run $V_1^*$ on the state to obtain $(c',\{\cert_{i,x}',\cert_{i,z}'\}_{i\in\overline{S}_{c'}})$.
    
    \item If $c'\neq c$, abort and output a fixed state $\eta$ and the flag state $\mathsf{fail}$. 
    \item Append $\{\dment_i(x^{S_c}_i),\dment_i(z^{S_c}_i)\}_{i\in S_c}$ to its quantum state, and run $V_2^*$ on the state to obtain $\xi'$.
    \item Compute $\Cert(\cert_{i,x}',\ck_i(x^{S_c}_i))$ and $\Cert(\cert_{i,z}',\ck_i(z^{S_c}_i))$ for all $i\in \overline{S}_c$.
    If all outputs are $\top$, then output the state $(\top,\xi')$. Otherwise, output $(\bot,\bot)$.
    Also output the flag state $\mathsf{success}$.
\end{enumerate}
\end{description}

Let us also define other two simulators, $\cS^{(2)}$ and $\cS^{(3)}$, as follows.
\begin{description}
\item [The simulator $\cS^{(2)}(\statement,\witness^{\otimes k(|\statement|)},\cV^*,\xi)$:] $ $
   It is the same as $\cS^{(1)}$ except that
   $\sigma(c)$ is replaced with $\rho_{\hist}$. 
\end{description}

\begin{description}
\item [The simulator $\cS^{(3)}(\statement,\witness^{\otimes k(|\statement|)},\cV^*,\xi)$:] $ $
$\cS^{(3)}(\statement,\witness^{\otimes k(|\statement|)},\cV^*,\cdot~)$ is the channel that postselects the output of
\\$\cS^{(2)}(\statement,\witness^{\otimes k(|\statement|)},\cV^*,\cdot~)$ on the non-aborting state. More precisely, if we
write $\cS^{(2)}(\statement,\witness^{\otimes k(|\statement|)},\cV^*,\rho_{in})=p \rho_{out}\otimes \mathsf{success}+(1-p)\eta\otimes \mathsf{fail}$, 
where $p$ is the non-aborting probability,
$\cS^{(3)}(\statement,\witness^{\otimes k(|\statement|)},\cV^*,\rho_{in})= \rho_{out}$.
\end{description}

\cref{lem:everlasting_zero} is shown from the following \cref{prop:abort,prop:indistinguishable_S_S'_ever,prop:indistinguishable_S'_V^*_ever} 
(whose proofs will be given later) and quantum rewinding lemma (\cref{lemma:rewinding}),
which is used 
to reduce the probability that $\cS^{(1)}$ aborts to $\negl(\lambda)$.
In fact, from \cref{prop:abort,lemma:rewinding}, there exists a quantum circuit $\cS^{(0)}$ of size at most
$O(m\,{\rm poly}(n){\rm size}(\cS^{(1)}))$
such that the probability that $\cS^{(0)}$ aborts is $\negl(\lambda)$, and
the output quantum states of $\cS^{(0)}$ and $\cS^{(1)}$ are $\negl(\lambda)$-close when they do not abort.
From \cref{prop:indistinguishable_S_S'_ever,prop:indistinguishable_S'_V^*_ever}, 
$\cS^{(0)}$ is $\negl(\lambda)$-close to the real protocol,
which completes the proof.
\end{proof}

\begin{proposition}\label{prop:abort}
If $\Sigma_{\mathsf{ccd}}$ is computationally hiding, then
the probability that $\cS^{(1)}$ does not abort is $\frac{1}{m}\pm\negl(\lambda)$.
\end{proposition}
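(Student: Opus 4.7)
The plan is to show, via two hybrids, that the quantum-plus-classical input handed to $V_1^*$ inside $\cS^{(1)}$ can be transformed into a state that is independent of $c$, so that $V_1^*$'s guess $c'$ is independent of the uniformly random $c$ and therefore coincides with $c$ with probability exactly $1/m$.

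First I will introduce a hybrid $\cS^{(1.5)}$ identical to $\cS^{(1)}$ except that the quantum state $\sigma(c)=\rho_\simulator^{\statement,S_c}\otimes\bigl(\bigotimes_{i\in\overline{S}_c}|0\rangle\langle 0|_i\bigr)$ is replaced by the simulatable history state $\rho_\hist$. The key observation is that the bits $\{x_i,z_i\}_{i\in\overline{S}_c}$ are uniformly random, are used only as Paulis $X^{x_i}Z^{z_i}$ on the $i$-th qubit (for $i\in\overline{S}_c$), and never appear inside $\cment(x^{S_c})$ or $\cment(z^{S_c})$ (which commit to $0$ on those positions) nor in any message to $V_1^*$. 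Averaging over these bits therefore applies the Pauli twirl on $\overline{S}_c$, yielding (conditionally on $c,x^{S_c},z^{S_c}$) the effective quantum registers $X^{x^{S_c}}Z^{z^{S_c}}\rho_\simulator^{\statement,S_c}Z^{z^{S_c}}X^{x^{S_c}}\otimes (I/2)^{\otimes|\overline{S}_c|}$ in $\cS^{(1)}$ versus $X^{x^{S_c}}Z^{z^{S_c}}\Tr_{\overline{S}_c}(\rho_\hist)Z^{z^{S_c}}X^{x^{S_c}}\otimes (I/2)^{\otimes|\overline{S}_c|}$ in $\cS^{(1.5)}$. By local simulatability (\cref{def:k-SimQMA}), these differ in trace distance by $\negl(\lambda)$, and hence $\bigl|\Pr[c'=c\mid\cS^{(1)}]-\Pr[c'=c\mid\cS^{(1.5)}]\bigr|\le\negl(\lambda)$.

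Next I will pass from $\cS^{(1.5)}$ to a second hybrid $\widetilde{\cS}$ in which the commitments are on the full strings $x$ and $z$ rather than on $x^{S_c}$ and $z^{S_c}$; the two hybrids differ only in the $2|\overline{S}_c|$ commitments on $\overline{S}_c$, where $\cS^{(1.5)}$ commits to $0$ while $\widetilde{\cS}$ commits to the uniformly random $x_i,z_i$. Since the remainder of the experiment (including $V_1^*$) is QPT, I will invoke the computational hiding of $\Sigma_{\mathsf{ccd}}$ in a bit-by-bit hybrid argument (the direct computational-hiding analogue of \cref{lemma:eachbit}, whose proof is identical) to conclude $\bigl|\Pr[c'=c\mid\cS^{(1.5)}]-\Pr[c'=c\mid\widetilde{\cS}]\bigr|\le\negl(\lambda)$.

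Finally, in $\widetilde{\cS}$ the input given to $V_1^*$ is $X^xZ^z\rho_\hist Z^zX^x\otimes\cment(x)\otimes\cment(z)\otimes\xi$, which has no dependence on $c$. Because $c$ is sampled uniformly from $[m]$ independently of $V_1^*$'s output, $\Pr[c'=c\mid\widetilde{\cS}]=1/m$ exactly. Combining the three estimates gives $\Pr[c'=c\mid\cS^{(1)}]=1/m\pm\negl(\lambda)$. The step I expect to be most delicate is the Pauli-twirl argument of Step 1: one must verify carefully that in both $\cS^{(1)}$ and $\cS^{(1.5)}$ the bits $\{x_i,z_i\}_{i\in\overline{S}_c}$ enter $V_1^*$'s view \emph{only} through the Paulis applied on the corresponding qubits, so that uniform averaging produces exactly the maximally-mixed tensor factor on $\overline{S}_c$ needed to invoke LS2 in the stated form.
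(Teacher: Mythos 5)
Your proof is correct, but it takes a genuinely different route from the paper's. The paper's proof of \cref{prop:abort} is a single direct reduction in the \emph{opposite} direction: the reduction $\cB$ asks the computational-hiding challenger for the commitments on the \emph{opened} positions $S_c$ (real $\{x_i,z_i\}_{i\in S_c}$ versus $0^{10}$), generates everything else itself, and outputs $1$ iff $c'=c$. In the all-zero branch the full strings $x,z$ appear nowhere except as the Pauli one-time pad on all $n$ qubits, so averaging over them sends $X^xZ^z\sigma(c)Z^zX^x$ to the maximally mixed state and the entire input to $V_1^*$ is independent of $c$ regardless of what $\sigma(c)$ is; hence the non-abort probability is exactly $\frac{1}{m}$ there, and the distinguishing advantage is $\abs{p-\frac{1}{m}}$. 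You instead march toward the real protocol: you first swap $\sigma(c)$ for $\rho_\hist$ via the partial Pauli twirl on $\overline{S}_c$ plus local simulatability (essentially re-proving \cref{prop:indistinguishable_S_S'_ever} along the way), and then use computational hiding on the $2\abs{\overline{S}_c}$ commitments over $\overline{S}_c$ to reach a $c$-independent state. Both arguments are sound and both correctly identify that the non-abort event is $c'=c$ with $c$ uniform; the paper's version is shorter, needs only the full-twirl observation and one invocation of (the computational-hiding analogue of) \cref{lemma:eachbit}, and in particular does not rely on \cref{def:k-SimQMA} at all, whereas yours costs an extra hybrid and an appeal to local simulatability but has the mild aesthetic benefit that its terminal hybrid is literally the first message of the honest protocol. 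One small bookkeeping note: your hybrids use $\rho_\hist$ and hence the witness, which $\cS^{(1)}$ itself does not take as input; this is harmless since the reduction adversary may be given the witness as auxiliary input (the paper's own reductions do the same), but it is worth stating explicitly.
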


%\begin{proposition}\label{prop:indistinguishable_S_V^*_ever}
%Assume that $\Sigma_{\mathsf{ccd}}$ satisfies certified everlasting hiding.
%When $\cS$ does not abort, $\cS(\statement,\cV^*,\cdot~)$ is statistically indistinguishable from 
%$\mathsf{OUT'}_{\cP,\cV^*}\langle \cP(\witness^{\otimes k(|\statement|)}),\cV^*(\cdot)\rangle (\statement)$.
%\end{proposition}

\begin{proposition}\label{prop:indistinguishable_S_S'_ever}
$\cS^{(1)}(\statement,\cV^*,\cdot~)\approx_s\cS^{(2)}(\statement,\witness^{\otimes k(|\statement|)},\cV^*,\cdot~)$
for any $\statement\in A_{\yes}\cap\bit^\lambda$
and any $\witness\in R_A(\statement)$.
\end{proposition}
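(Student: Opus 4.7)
The plan is to observe that $\cS^{(1)}$ and $\cS^{(2)}$ differ only in the register that is sent to $\cV^*$ via $V_1^*$, and that the uniform randomization of $\{x_i,z_i\}_{i\in\overline{S}_c}$ lets us turn the two underlying states $\sigma(c)=\rho_{\simulator}^{\statement,S_c}\otimes\bigotimes_{i\in\overline{S}_c}|0\rangle\langle 0|_i$ and $\rho_{\hist}$ into the same state on $\overline{S}_c$ (namely, the maximally mixed one), after which the remaining gap is bounded by the local-simulatability error from \cref{def:k-SimQMA}.

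First I would check that in both simulators the bits $\{x_i,z_i\}_{i\in\overline{S}_c}$ appear only in the Pauli conjugation $X^x Z^z \sigma Z^z X^x$ applied to $\sigma\in\{\sigma(c),\rho_{\hist}\}$ before it is handed to $\cV^*$. In particular, because $x^{S_c}_i=z^{S_c}_i=0$ for $i\in\overline{S}_c$, the commitments $\cment(x^{S_c})$ and $\cment(z^{S_c})$ depend only on $\{x_i,z_i\}_{i\in S_c}$; the third-round message $\{\dment_i(x^{S_c}_i),\dment_i(z^{S_c}_i)\}_{i\in S_c}$ is independent of $\{x_i,z_i\}_{i\in\overline{S}_c}$; and the certificate verifications used to build the flag state $\mathsf{success}/\mathsf{fail}$ use only the keys $\{\ck_i(x^{S_c}_i),\ck_i(z^{S_c}_i)\}$, which are likewise independent of those bits. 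Hence the uniform average over $\{x_i,z_i\}_{i\in\overline{S}_c}$ commutes past $V_1^*$, $V_2^*$, and the certificate-check step, and acts on the initial state exactly as the Pauli-twirl channel $\rho\mapsto\Tr_{\overline{S}_c}(\rho)\otimes I_{\overline{S}_c}/2^{|\overline{S}_c|}$ on the register $\overline{S}_c$.

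Applying this twirl, in $\cS^{(1)}$ the state handed to $V_1^*$ (up to the independent tensor factors $\cment(x^{S_c})\otimes\cment(z^{S_c})\otimes\xi$ and the random Paulis on $S_c$ that are common to both experiments) becomes $\rho_{\simulator}^{\statement,S_c}\otimes I_{\overline{S}_c}/2^{|\overline{S}_c|}$, while in $\cS^{(2)}$ it becomes $\Tr_{\overline{S}_c}(\rho_{\hist})\otimes I_{\overline{S}_c}/2^{|\overline{S}_c|}$. By the simulatable-completeness property in \cref{def:k-SimQMA}, $\lnorm{\Tr_{\overline{S}_c}(\rho_{\hist})-\rho_{\simulator}^{\statement,S_c}}{\tr}\leq\negl(\lambda)$, so these two states are $\negl(\lambda)$-close in trace distance. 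Since the rest of each simulator---appending the commitments and $\xi$, running $V_1^*$, testing $c'=c$, appending the decommitments, running $V_2^*$, and running $\Cert$---is a single identical quantum channel in both cases, monotonicity of trace distance under quantum channels gives $\cS^{(1)}(\statement,\cV^*,\cdot~)\approx_s\cS^{(2)}(\statement,\witness^{\otimes k(|\statement|)},\cV^*,\cdot~)$.

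The main step to carry out carefully is the commuting-out of the average over $\{x_i,z_i\}_{i\in\overline{S}_c}$: one must verify that none of the three subsequent messages, nor the flag state, depends on those bits, so that averaging can be pushed entirely onto the register sent to $V_1^*$. Once that bookkeeping is in place, the Pauli-twirl identity and the local-simulatability bound finish the argument with no further quantum-computational input.
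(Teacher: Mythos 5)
Your proposal is correct and follows essentially the same route as the paper's (very terse) proof: the paper likewise reduces the claim to the observation that $x^{S_c}_i=z^{S_c}_i=0$ off $S_c$, so the unrevealed Pauli bits on $\overline{S}_c$ twirl both $\sigma(c)$ and $\rho_{\hist}$ to the same maximally mixed state there, leaving only the gap $\lnorm{\Tr_{\overline{S}_c}(\rho_{\hist})-\rho_{\simulator}^{\statement,S_c}}{\tr}\leq\negl(\lambda)$ from local simulatability. Your write-up simply makes explicit the bookkeeping (independence of the commitments, decommitments, and certificate checks from those bits, plus data processing) that the paper leaves as ``clear.''
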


\begin{proposition}\label{prop:indistinguishable_S'_V^*_ever}
If $\Sigma_{\mathsf{ccd}}$ is certified everlasting hiding,
$\cS^{(3)}(\statement,\witness^{\otimes k(|\statement|)},\cV^*,\cdot~)\approx_s 
\mathsf{OUT'}_{\cP,\cV^*}\langle \cP(\witness^{\otimes k(|\statement|)}),\cV^*(\cdot)\rangle (\statement)$.
\end{proposition}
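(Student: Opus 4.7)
The plan is to bridge $\cS^{(3)}$ and $\mathrm{OUT'}_{\cP,\cV^*}$ through an intermediate hybrid $\widetilde{\cS}$, defined identically to $\cS^{(2)}$ except that it commits to the actual $x_i,z_i$ (rather than $0$) for \emph{every} $i\in[n]$, so that its first message has the same distribution as the prover's first message in the real protocol. Let $\widetilde{\cS}^{\mathrm{post}}$ denote its postselection on the non-aborting branch $c'=c$. I would proceed in two steps: (i) $\widetilde{\cS}^{\mathrm{post}}=\mathrm{OUT'}_{\cP,\cV^*}\langle\cP(\witness^{\otimes k(|\statement|)}),\cV^*(\cdot)\rangle(\statement)$ as quantum channels, and (ii) $\cS^{(3)}\approx_{s}\widetilde{\cS}^{\mathrm{post}}$. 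Step (i) is essentially syntactic: in $\widetilde{\cS}$ the simulator's actions are literally the honest prover's actions with an independent uniform $c\leftarrow[m]$ appended and an abort triggered on $c'\neq c$. Because $c$ is uniform and independent of everything else, Bayes' rule gives that conditioned on $c'=c$ the joint distribution of $(c',\text{output})$ equals its real-protocol marginal, and conditioned on $c'=c$ the prover opens $S_{c'}=S_c$ exactly as the real prover would.

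For step (ii) I would first prove the unpostselected relation $\cS^{(2)}\approx_{s}\widetilde{\cS}$ and then use the fact that the non-aborting probability of $\cS^{(2)}$ is at least $\frac{1}{m}-\negl(\lambda)$ (by \cref{prop:abort} together with \cref{prop:indistinguishable_S_S'_ever}), so conditioning on the non-aborting event inflates a negligible trace distance by at most the polynomial factor $O(m)$. To establish $\cS^{(2)}\approx_{s}\widetilde{\cS}$, I would decompose each output into three disjoint branches: (a) $c'\neq c$, (b) $c'=c$ with some certificate for $\bar{S}_c$ invalid, and (c) $c'=c$ with all certificates valid. On branch (a) both outputs are the fixed state $\eta\otimes\mathsf{fail}$, and since $c$ is independent and uniform the branch probability equals $1-1/m$ in both hybrids, so this branch contributes zero to the trace distance. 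On branches (b) and (c) the states $(\bot,\bot)\otimes\mathsf{success}$ and $(\top,\xi')\otimes\mathsf{success}$ appear weighted by their branch probabilities; the differences of these weights are controlled by combining computational hiding of $\Sigma_{\mathsf{ccd}}$ applied to $V_1^{*}$ (which bounds $|\Pr[c'=c\mid\cS^{(2)}]-\Pr[c'=c\mid\widetilde{\cS}]|$) with \cref{lemma:eachbit} instantiated on an adversary whose $\cA_2$ always outputs $1$ (which bounds $|\Pr[\text{branch (c)}\mid\cS^{(2)}]-\Pr[\text{branch (c)}\mid\widetilde{\cS}]|$). The unnormalized trace distance on branch (c) itself is bounded by a further application of \cref{lemma:eachbit}, letting $\cA_2$ implement an arbitrary POVM on the state $(\top,\xi')$; the operational characterization of trace distance then yields a negligible bound on this branch, and summing the three branch contributions gives $\cS^{(2)}\approx_{s}\widetilde{\cS}$.

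The main obstacle is arranging the reduction to \cref{lemma:eachbit} so that the hiding challenger holds exactly the commitments and keys $\ck_i$ for the indices $i\in\bar{S}_c$. Concretely, the reduction $\cB=(\cB_1,\cB_2)$ samples $c\leftarrow[m]$ and $(x,z)\leftarrow\bit^n\times\bit^n$ internally, submits $m^0=0^{2|\bar{S}_c|}$ and $m^1=(x_i,z_i)_{i\in\bar{S}_c}$ to the challenger, receives the corresponding $2|\bar{S}_c|$ commitments, generates the remaining commitments for indices in $S_c$ locally, and then runs $V_1^{*}$ on the full first message together with the input state $\sigma$. The awkward case is $c'\neq c$: since $\cV^*$ only produced certificates for $\bar{S}_{c'}$, $\cB_1$ has no real certificates for $\bar{S}_c$ and must submit arbitrary invalid ones, causing the challenger to output $\bot$, which cleanly matches the $\mathsf{fail}$ branch of both $\cS^{(2)}$ and $\widetilde{\cS}$. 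When $c'=c$, $\cB_1$ instead forwards the real certificates, sends the $S_c$-openings to $\cV^*$, runs $V_2^{*}$ to obtain $\xi'$, and hands it to $\cB_2$, which runs the unbounded distinguisher on $(\top,\xi')$ if the challenger returns $\top$ and otherwise outputs a fixed value. Routine checks confirm that $\cB_1$ is QPT and $\cB_2$ is unbounded as required by \cref{lemma:eachbit}, and the branch accounting described above then yields the desired negligible trace distance, after which postselection delivers the proposition.
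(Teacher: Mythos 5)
Your proposal is correct and is essentially the paper's own proof: both arguments reduce to \cref{lemma:eachbit} by embedding the challenge commitments at the indices in $\overline{S}_c$ (real $\{x_i,z_i\}_{i\in\overline{S}_c}$ versus $0^{2n-10}$), aborting when $c'\neq c$, forwarding the verifier's certificates to the hiding challenger, and handing the post-execution state together with the challenger's $\top/\bot$ verdict to the unbounded $\cA_2$ running the distinguisher; your intermediate hybrid $\widetilde{\cS}$ is precisely what the paper's reduction $\cB$ simulates when it receives the real commitments, and your branch-by-branch postselection accounting merely makes explicit what the paper compresses into ``$\cB$ simulates each channel with probability $\frac{1}{m}\pm\negl(\lambda)$.'' One small slip: in $\cS^{(2)}$ the probability of $c'=c$ is only $\frac{1}{m}\pm\negl(\lambda)$ rather than exactly $\frac{1}{m}$, since the commitments $\cment(x^{S_c})\otimes\cment(z^{S_c})$ depend on $c$ --- but you repair this yourself by invoking computational hiding in the very next clause, so the argument stands.
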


\begin{proof}[Proof of \cref{prop:abort}]
This can be shown similarly to \cite[Lemma~5.6]{FOCS:BroGri20}. For the convenience of readers, we provide a proof in~\cref{proof:abort}.
\end{proof}

\begin{proof}[Proof of \cref{prop:indistinguishable_S_S'_ever}]
It is clear from the local simulatability (\cref{def:k-SimQMA}) and
the definition of $x^{S_c}$ and $z^{S_c}$ (all $x_i^{S_c}$ and $z_i^{S_c}$ are 0 except for those in $i\in S_c$).
\if0
By the definition of $x^{S_c}$ and $z^{S_c}$, all $x_i^{S_c}$ and $z_i^{S_c}$ are 0 except for $i\in S_c$.
Therefore, the states of $\cV^*$ simulated by 
$\cS^{(1)}$ and $\cS^{(2)}$ are
$\Tr_{\overline{S}_c}(\rho_\hist)$ and $\rho_\simulator^{\statement,S_c}$, respectively.
Due to the local simulatability (\cref{def:k-SimQMA}), they are statistically indistinguishable from each other.
\fi
\end{proof}

\begin{proof}[Proof of \cref{prop:indistinguishable_S'_V^*_ever}]
We prove the proposition by contradiction.
We construct an adversary $\cB$ that breaks the security of the certified
everlasting hiding of $\Sigma_{\mathsf{ccd}}$ by assuming 
the existence of a distinguisher $\cD$ that distinguishes two states $\delta_0$ and $\delta_1$,
\begin{align}
&\delta_0\seteq(\mathsf{OUT'}_{\cP,\cV^*}\langle \cP(\witness^{\otimes k(|\statement|)}),\cV^*(\cdot)\rangle (\statement)\otimes I)\sigma\\
&\delta_1\seteq(\cS^{(3)}(\statement,\witness^{\otimes k(|\statement|)},\cV^*,\cdot~)\otimes I)\sigma, 
\end{align}
with a certain state $\sigma$.
Let us describe how $\cB$ works.
\begin{enumerate}
    \item $\cB$ generates $c\leftarrow[m]$ and $x,z\lrun\{0,1\}^n$.
    \item $\cB$ sends $m_0\coloneqq \{x_i,z_i\}_{i\in\overline{S}_c}$ and $m_1\coloneqq0^{2n-10}$ to the challenger of $\expb{\Sigma_{\mathsf{ccd}},\cB}{bit}{ever}{hide}(\lambda,b)$.
    $\cB$ receives commitments from the challenger which is either $\{\cment_i(x_i),\cment_i(z_i)\}_{i\in \overline{S}_c}$ or $\{\cment_i(0),\cment_i(0)\}_{i\in \overline{S}_c}$.
    \item $\cB$ computes 
    \begin{align}
   & (\cment_i(x_i),\dment_i(x_i),\ck_i(x_i))\leftarrow\Commit(1^\lambda,x_i)\\
   & (\cment_i(z_i),\dment_i(z_i),\ck_i(z_i))\leftarrow\Commit(1^\lambda,z_i)
    \end{align}
    for $i\in S_c$ by itself.
    \item $\cB$ generates $X^xZ^z\rho_{\hist}Z^zX^x$. 
    $\cB$ appends commitments and $\sigma$ to the quantum state.
    If the commitments for ${i\in \overline{S}_c}$ are $\{\cment_i(x_i),\cment_i(z_i)\}_{i\in \overline{S}_c}$,
    $\cB$ obtains $(X^xZ^z\rho_{\hist}Z^zX^x)\otimes \cment(x)\otimes\cment(z)\otimes \sigma$.
    If the commitments for ${i\in \overline{S}_c}$ are $\{\cment_i(0),\cment_i(0)\}_{i\in \overline{S}_c}$,
    $\cB$ obtains $(X^xZ^z\rho_{\hist}Z^zX^x)\otimes \cment(x^{S_c})\otimes\cment(z^{S_c})\otimes \sigma$.
    \item $\cB$ runs $V_1^*$ on it to obtain $(c',\{\cert_{i,x}',\cert_{i,z}'\}_{i\in \overline{S}_{c'}})$. 
    $\cB$ aborts when $c\neq c'$.
    \item $\cB$ appends $\{\dment_{i}(x_i),\dment_{i}(z_i)\}_{i\in S_c}$ to the post-measurement state and runs $V_2^*$ on it to obtain $\sigma'$.
    \item $\cB$ sends $\{\cert_{i,x}',\cert_{i,z}'\}_{i\in \overline{S}_{c}}$ to the challenger of $\expb{\Sigma_{\mathsf{ccd}},\cB}{bit}{ever}{hide}(\lambda,b)$,
    and receives $\bot$ or 
    $\{\dment_{i}(x_i),\dment_{i}(z_i)\}_{i\in \overline{S}_c}$ and 
    $\{\ck_i(x_i),\ck_i(z_i)\}_{i\in \overline{S}_c}$ from the challenger.
    \item $\cB$ passes $(\bot,\bot)$ to $\cD$ if $\cB$ receives $\bot$ from the challenger, and passes $(\top,\sigma')$ to $\cD$ otherwise.
    \item When $\cD$ outputs $b$, $\cB$ outputs $b$.
\end{enumerate}

When $\cB$ receives $\{\cment_i(x_i),\cment_i(z_i)\}_{i\in \overline{S}_c}$ from the challenger and it does not abort, 
it simulates \\${\mathsf{OUT}}'_{\cP,\cV^*}\langle \cP(\witness^{\otimes k(|\statement|)}),\cV^*(\cdot)\rangle(\statement)$.
Because $(X^xZ^z\rho_{\hist}Z^zX^x)\otimes \cment(x)\otimes\cment(z)\otimes \sigma$ is independent of $c$,
the probability that $\cB$ does not abort is $\frac{1}{m}$.
Therefore,
$\cB$ can simulate
${\mathsf{OUT}}'_{\cP,\cV^*}\langle \cP(\witness^{\otimes k(|\statement|)}),\cV^*(\cdot)\rangle(\statement)$
with probability $\frac{1}{m}$.

When $\cB$ receives $\{\cment_i(0),\cment_i(0)\}_{i\in \overline{S}_c}$ from the challenger and it does not abort, 
it simulates $\cS^{(3)}(\statement,\witness^{\otimes k(|\statement|)},\cV^*,\cdot~)$.
The probability that
$\cB$ does not abort is $\frac{1}{m}\pm\negl(\lambda)$ from
\cref{prop:abort,prop:indistinguishable_S_S'_ever}. 
Therefore, $\cB$ can simulate
$\cS^{(3)}(\statement,\witness^{\otimes k(|\statement|)},\cV^*,\cdot~)$ with probability
$\frac{1}{m}\pm\negl(\lambda)$.

Therefore, if there exists a distinguisher $\cD$ that distinguishes 
$\delta_0$ and $\delta_1$,
$\cB$ can distinguish $\{\cment_i(x_i),\cment_i(z_i)\}_{i\in \overline{S}_c}$ from $\{\cment_i(0),\cment_i(0)\}_{i\in \overline{S}_c}$.
From \cref{lemma:eachbit},
this contradicts the certified everlasting hiding of $\Sigma_{\mathsf{ccd}}$.

\end{proof}

\begin{proof}[Proof of \cref{lem:computational_zero}]
Computational zero-knowledge can be proven similarly to \cite[Lemma~5.3]{FOCS:BroGri20} because our protocol is identical to theirs if we ignore the deletion certificates, which are irrelevant to the computational zero-knowledge property. For the convenience of readers, we provide a proof in \cref{proof:computational_zero}.
\end{proof}

\subsection{Sequential Repetition for Certified Everlasting Zero-Knowledge Proof for QMA}\label{sec:Sequential_repetition}
In this section, we amplify the completeness-soundness gap of the three-round protocol constructed in the previous
section by sequential repetition.

\begin{theorem}\label{thm:sequential_everlasting_zero_proof}
Let $\Sigma_{\Xi\mathsf{cd}}$ be a certified everlasting zero-knowledge proof for a $\QMA$ promise problem $A$ with $\left(1-\negl(\lambda)\right)$-completeness
and $\left(1-\frac{1}{\rm poly(\lambda)}\right)$-soundness.
For any polynomial $N={\rm poly(\lambda)}$, let $\Sigma_{\Xi\mathsf{cd}}^N$ be the $N$-sequential repetition of $\Sigma_{\Xi\mathsf{cd}}$.
That is, $\cP$ and $\cV$ in $\Sigma_{\Xi\mathsf{cd}}^N$ run $\Sigma_{\Xi\mathsf{cd}}$ sequentially $N$ times.
Let $\cP_j$ and $\cV_j$ be the prover and the verifier in the $j$-th run of $\Sigma_{\Xi\mathsf{cd}}$,
respectively.
$\cP$ in $\Sigma_{\Xi\mathsf{cd}}^N$ outputs $\top$ if $\cP_j$ outputs $\top$ for all $j\in[N]$, and outputs $\bot$ otherwise.
$\cV$ in $\Sigma_{\Xi\mathsf{cd}}^N$ outputs $\top$ if $\cV_j$ outputs $\top$ for all $j\in[N]$, and outputs $\bot$ otherwise.
$\Sigma_{\Xi\mathsf{cd}}^{N}$ is a certified everlasting zero-knowledge proof for $A$ with $(1-\negl(\lambda))$-completeness and $\negl(\lambda)$-soundness.
\end{theorem}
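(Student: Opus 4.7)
The plan is to verify the four conditions required by \cref{def:Everlasting_zeroknowledge}. Completeness and prover's completeness follow by a union bound over the $N=\poly(\lambda)$ rounds, giving failure probability at most $N\cdot\negl(\lambda)=\negl(\lambda)$. Soundness amplifies by the standard sequential-repetition argument for \emph{proofs}: for any unbounded cheating prover $\cP^*$ for $\Sigma^N_{\Xi\mathsf{cd}}$ and any $\statement\in A_{\no}$, conditioning on the transcript up to round $j{-}1$ exhibits $\cP^*$'s round-$j$ behaviour as some unbounded cheating strategy against single-round $\Sigma_{\Xi\mathsf{cd}}$, which by \cref{lem:soundness} makes $\cV_j$ accept with probability at most $1-1/\poly(\lambda)$; multiplying the conditional bounds over the $N$ independent verifier runs yields total acceptance probability at most $(1-1/\poly(\lambda))^N$, negligible for a sufficiently large polynomial $N$.

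The substantive task is certified everlasting zero-knowledge. Let $\cS^{\Xi\mathsf{cd}}$ be the single-round QPT simulator from the proof of \cref{lem:everlasting_zero}. I define the $N$-round simulator $\cS^N(\statement,\cV^{*,N},\cdot~)$ which, given a QPT malicious verifier $\cV^{*,N}$ and auxiliary input $\xi_0$, invokes $\cS^{\Xi\mathsf{cd}}$ sequentially $N$ times, feeding the state register output by the $i$-th invocation as the auxiliary input to the $(i{+}1)$-st and recording each round's prover flag; it finally outputs $(\top,\xi_N)$ if every per-round flag is $\top$ and $(\bot,\bot)$ otherwise. This aggregation mirrors how $\cP$ and $\cV$ in $\Sigma^N_{\Xi\mathsf{cd}}$ combine their individual flags, so $\cS^N$ is exactly the right shape to target the $\mathrm{OUT'}$ convention.

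Indistinguishability is then established by a hybrid argument over channels $H_0,\dots,H_N$, where in $H_j$ the first $j$ rounds are produced by $\cS^{\Xi\mathsf{cd}}$ (with states forwarded as above) and the remaining $N-j$ rounds are executed honestly using $\witness$; then $H_0=\mathrm{OUT'}_{\cP,\cV^{*,N}}\langle \cP(\witness^{\otimes Nk}),\cV^{*,N}(\cdot)\rangle(\statement)$ and $H_N=\cS^N$. For each $j$ I will exhibit a QPT single-round malicious verifier $\widetilde{\cV}^*_j$ that takes the external auxiliary register, internally simulates rounds $1,\dots,j-1$ via $\cS^{\Xi\mathsf{cd}}$, plays round $j$ as $V^*_j$, and exposes the post-round-$j$ state together with the accumulated flags. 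Applying single-round certified everlasting zero-knowledge to $\widetilde{\cV}^*_j$ gives statistical $\negl(\lambda)$-closeness between the real and simulated versions of round $j$, and post-composing with the fixed channel that runs the remaining $N-j$ honest rounds against $\cP(\witness^{\otimes(N-j)k})$ and aggregates flags turns this into $\norm{H_{j-1}-H_j}_{\tr}\le\negl(\lambda)$. The main subtlety, and the only nontrivial point of the argument, is that this post-composition is a legitimate \emph{unbounded} (though efficient-given-$\witness$) channel; this is precisely why the base simulator must be statistically indistinguishable ($\approx_s$) rather than merely computationally so, and it is the reason the hybrid argument goes through for certified everlasting zero-knowledge but would be more delicate if only computational indistinguishability were available. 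Summing the triangle inequality over $j\in[N]$ yields $\norm{H_0-H_N}_{\tr}\le N\cdot\negl(\lambda)=\negl(\lambda)$. Computational zero-knowledge follows from the analogous (easier) hybrid argument using the computational-ZK simulator of \cref{lem:computational_zero}, with the witness supplied as non-uniform advice to the polynomial-time distinguisher.
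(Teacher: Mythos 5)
Your proposal is correct and follows essentially the same route as the paper: define the composed simulator as the sequential composition of the single-round simulators (aggregating the per-round prover flags exactly as $\mathrm{OUT'}$ does), and prove statistical closeness by a round-by-round hybrid argument whose each step invokes the single-round certified everlasting zero-knowledge guarantee, using crucially that $\approx_s$ is closed under arbitrary unbounded pre- and post-processing. The only (cosmetic) differences are that you fold the simulated prefix into a composite QPT verifier $\widetilde{\cV}^*_j$ whereas the paper folds the real prefix into the auxiliary input state of a distinguisher with oracle access to the round-$(i{+}1)$ channel, and that the paper delegates computational zero-knowledge preservation to a citation of Goldreich--Oren rather than redoing the hybrid.
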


\ifnum\submission=1
\begin{proof}
We provide a proof in \cref{Sec:sequential_everlasting}.
\end{proof}
\else
\begin{proof}[Proof of \cref{thm:sequential_everlasting_zero_proof}]
It is easy to show that $\Sigma_{\Xi\mathsf{cd}}^N$ satisfies $(1-\negl(\lambda))$-completeness and $\negl(\lambda)$-soundness.
Moreover, as proven in \cite{GoldOre94}, the sequential repetition of a computational zero-knowledge proof 
preserves the computational zero-knowledge property.
Let us show that $\Sigma_{\Xi\mathsf{cd}}^N$ satisfies certified everlasting zero-knowledge.
For clarity, we describe how $\langle \cP(\witness^{\otimes Nk(|\statement|)}),\cV^*(\xi_1)\rangle (\statement)$ runs against any QPT verifier $\cV^*$ with an input $\xi_1$,
where $\witness$ is a witness and $\statement$ is the instance.
\begin{description}
\item[$\langle \cP(\witness^{\otimes Nk(|\statement|)}),\cV^*(\xi_1)\rangle (\statement)$:] $ $
\begin{enumerate}
    \item For $1\leq j\leq N$, $\cV^*$ and $\cP$ run 
    $\langle \cP_j(\witness^{\otimes k(|\statement|)}),\cV_j^{*}(\xi_j)\rangle(\statement)$ sequentially
    to get the outputs 
    \begin{align}
    \xi_{j+1}\seteq\mathsf{OUT_{\cV^*_j}}\langle \cP_j(\witness^{\otimes k(|\statement|)}),\cV_j^{*}(\xi_j)\rangle(\statement)
   \end{align} 
    and
    \begin{align}
    \mathsf{OUT_{\cP_j}}\langle \cP_j(\witness^{\otimes k(|\statement|)}),\cV_j^{*}(\xi_j)\rangle(\statement)=\top/\bot,
    \end{align}
    respectively.
    \item $\cV^*$ outputs $\xi_{N+1}$.
    \item $\cP$ outputs $\top$ if $\mathsf{OUT}_{\cP_j}\langle\cP_j(\witness^{\otimes k(|\statement|)}),\cV_j^*(\xi_j)\rangle=\top$ for all $j\in[N]$,
    and outputs $\bot$ otherwise.
\end{enumerate}
\end{description}

Since $\Sigma_{\Xi\mathsf{cd}}$ satisfies the certified everlasting zero-knowledge property,
for each $j\in[N]$ and any $\cV_j^*$ there exists a QPT algorithm (a simulator) $\cS_j(\statement,\cV_j^*,\cdot~)$ such that the following holds for any $\statement$ and $\witness$.
\begin{align}
    \mathsf{OUT'}_{\cP_j,\cV_j^*}\langle \cP_j(\witness^{\otimes k(|\statement|)}),\cV_j^*(\cdot)\rangle (\statement)\approx_s \cS_j(\statement,\cV^*_j,\cdot~).
\end{align}

We show that for any $\cV^*$ there exists a QPT algorithm (a simulator)
$\cS(\statement,\cV^*,\cdot~)$ such that the following holds for any $\statement$ and $\witness$.
\begin{align}
    \mathsf{OUT'}_{\cP,\cV^*}\langle \cP(\witness^{\otimes Nk(|\statement|)}),\cV^*(\cdot)\rangle (\statement)\approx_s
    \cS(\statement,\cV^*,\cdot~).
\end{align}
Let us define the simulator $\cS$ as follows.
\begin{description}
\item[The simulator $\cS(\statement,\cV^*,\xi_1)$:] $ $
\begin{enumerate}
     \item For $1\leq j\leq N$, $\cS$ runs $\cS_j(\statement,\cV^*_j,\cdot~)$ on $\xi_j$ to get $\cS_j(\statement,\cV^*_j,\xi_j)=(\bot,\bot)/(\top,\xi_{j+1})$ sequentially.
    If $\cS_j(\statement,\cV^*_j,\xi_j)=(\bot,\bot)$, then $\xi_{j+1}\seteq \bot$ for each $j\in[N]$.
    \item $\cS$ outputs $(\bot,\bot)$ if $\cS_j(\statement,\cV^*_j,\xi_j)=(\bot,\bot)$ for some $j\in[N]$, and outputs $(\top,\xi_{N+1})$ otherwise.
\end{enumerate}
\end{description}

We define the sequence of hybrids $\sfhyb{i}{}(\xi_1)$ as follows.
\begin{description}
\item[$\sfhyb{i}{}(\xi_1)$:] $ $
\begin{enumerate}
    \item For $1\leq j\leq i$, $\cV^*$ and $\cP$ run 
    $\langle \cP_j(\witness^{\otimes k(|\statement|)}),\cV_j^{*}(\xi_j)\rangle(\statement)$ sequentially to get the outputs
    \begin{align}
    \xi_{j+1}\seteq\mathsf{OUT_{\cV^*_j}}\langle \cP_j(\witness^{\otimes k(|\statement|)}),\cV_j^{*}(\xi_j)\rangle(\statement)
    \end{align}
    and 
    \begin{align}
    \mathsf{OUT}_{\cP_j}\langle \cP_j(\witness^{\otimes k(|\statement|)}),\cV_j^{*}(\xi_j)\rangle(\statement)=\top/\bot,
    \end{align}
    respectively.
    \item For $i+1\leq j\leq N$, $\cS$ runs $\cS_j(\statement,\cV^*_j,\cdot~)$ on $\xi_j$ to get
     $\cS_j(\statement,\cV^*_j,\xi_j)=(\bot,\bot)/(\top,\xi_{i+1})$ sequentially.
    If $\cS_j(\statement,\cV^*_j,\xi_{j})=(\bot,\bot)$, then $\xi_{j+1}\coloneqq \bot$ for each $j\in[N]$.
    \item The output of $\sfhyb{i}{}(\xi_1)$ is $(\bot,\bot)$ if $\mathsf{OUT}_{\cP_j}\langle \cP_j(\witness^{\otimes k(|\statement|)}),\cV_j^{*}(\xi_j)\rangle(\statement)=\bot$ for some $j\in[i]$ or $\cS_j(\statement,\cV^*_j,\xi_{j})=(\bot,\bot)$ for some $j\in\{i+1,\cdots N\}$.
    Otherwise, the output of $\sfhyb{i}{}(\xi_1)$ is $(\top,\xi_{N+1})$.
\end{enumerate}
\end{description}

$\sfhyb{0}{}(\cdot)$ and $\sfhyb{N}{}(\cdot)$ correspond to $\cS(\statement,\cV^*,\cdot~)$ and 
$\mathsf{OUT'}_{\cP,\cV^*}\langle \cP(\witness^{\otimes Nk(|\statement|)}),\cV^*(\cdot)\rangle (\statement)$, respectively.
Therefore, it suffices to prove that no distinguisher can distinguish $\sfhyb{i}{}(\cdot)$ from $\sfhyb{i+1}{}(\cdot)$ 
for any $i\in[N-1]$.
We assume that there exists a distinguisher $\cD'$ that distinguishes 
$\left(\sfhyb{i}{}(\cdot)\otimes I\right)\sigma$ from $\left(\sfhyb{i+1}{}(\cdot)\otimes I\right)\sigma$ for a 
certain state $\sigma$, and construct a distinguisher $\cD$
that breaks the certified everlasting zero-knowledge property of $\Sigma_{\Xi\mathsf{cd}}$.
$\cD$ can access to the channel $\mathsf{O}(\cdot)$, which is either $\cS_{i+1}(\statement,\cV^*_{i+1},\cdot~)$ 
or $\mathsf{OUT'}_{\cP_{i+1},\cV^*_{i+1}}\langle \cP_{i+1}(\witness^{\otimes k(|\statement|)}),\cV^*_{i+1}(\cdot)\rangle (\statement)$,
and guesses whether $\mathsf{O}(\cdot)$ is
$\cS_{i+1}(\statement,\cV^*_{i+1},\cdot~)$ or
$\mathsf{OUT'}_{\cP_{i+1},\cV^*_{i+1}}\langle \cP_{i+1}(\witness^{\otimes k(|\statement|)}),\cV^*_{i+1}(\cdot)\rangle (\statement)$.
Let us define $\cD$ as follows. 

\begin{description}
\item[The distinguisher $\cD(\xi_1)$:] $ $
\begin{enumerate}
    \item For $1\leq j\leq i$, $\cD$ 
    runs $\langle \cP_j(\witness^{\otimes k(|\statement|)}),\cV_j^{*}(\xi_j)\rangle(\statement)$ sequentially to get 
    $\xi_{j+1}\seteq\mathsf{OUT_{\cV^*_j}}\langle \cP_j(\witness^{\otimes k(|\statement|)}),\cV_j^{*}(\xi_j)\rangle(\statement)$ 
    and $\mathsf{OUT_{\cP_j}}\langle \cP_j(\witness^{\otimes k(|\statement|)}),\cV_j^{*}(\xi_j)\rangle(\statement)=\top/\bot$.
    \item 
    $\cD$ runs $\mathsf{O}(\xi_{i+1})$ to get $\mathsf{O}(\xi_{i+1})=(\bot,\bot)/(\top,\xi_{i+2})$.
    If $\mathsf{O}(\xi_{i+1})=(\bot,\bot)$, $\cD$ sets $\xi_{i+2}\coloneqq \bot$.
    \item For $i+2\leq j\leq N$, $\cD$ runs $\cS_j(\statement,\cV^*_j,\cdot~)$ on $\xi_j$ to get $\cS_{j}(\statement,\cV^{*}_{j},\xi_j)=(\bot,\bot)/(\top,\xi_{j+1})$ sequentially.
    If $\cS_{j}(\statement,\cV^{*}_{j},\xi_j)=(\bot,\bot)$, $\cD$ sets $\xi_{j+1}\coloneqq \bot$.
    \item 
    $\cD$ outputs $(\bot,\bot)$ if $\mathsf{OUT_{\cP_j}}\langle \cP_j(\witness^{\otimes k(|\statement|)}),\cV_j^{*}(\xi_j)\rangle(\statement)=\bot$ for some $j\in[i]$,
    $\mathsf{O}(\xi_{i+1})=(\bot,\bot)$
    or $\cS_j(\statement,\cV^*_j,\xi_j)=(\bot,\bot)$ for some $j\in\{i+2,\cdots,N\}$, and outputs $(\top,\xi_{N+1})$ otherwise.
    \item $\cD$ sends the output of $\cD$ to $\cD'$.
    \item If $\cD'$ outputs $b$, $\cD$ outputs $b$.
\end{enumerate}
\end{description}

We can see that $\cD$ generates $\left(\sfhyb{i}{}(\cdot)\otimes I\right)\sigma$ when $\mathsf{O}(\cdot)$ is $\cS_{i+1}(\statement,\cV^*_{i+1},\cdot~)$ and $\cD$ takes $\sigma$ as input.
Similarly, we can see that $\cD$ generates $\left(\sfhyb{i+1}{}(\cdot)\otimes I\right)\sigma$ when $\mathsf{O}(\cdot)$ is $\mathsf{OUT'}_{\cP_{i+1},\cV^*_{i+1}}\langle \cP_{i+1}(\witness^{\otimes k(|\statement|)}),\cV^*_{i+1}(\cdot)\rangle (\statement)$ and $\cD$ takes $\sigma$ as input.
Therefore, if $\cD'$ distinguishes $\left(\sfhyb{i}{}(\cdot)\otimes I\right)\sigma$ 
from $\left(\sfhyb{i+1}{}(\cdot)\otimes I\right)\sigma$,
then 
$\cD$ can distinguish $\cS_{i+1}(\statement,\cV^*_{i+1},\cdot~)$ from
$\mathsf{OUT'}_{\cP_{i+1},\cV^*_{i+1}}\langle \cP_{i+1}(\witness^{\otimes k(|\statement|)}),\cV^*_{i+1}(\cdot)\rangle (\statement)$.
This contradicts the certified everlasting zero-knowledge property of $\Sigma_{\Xi\mathsf{cd}}$, which completes the proof.
\end{proof}

\fi

\ifnum\submission=1
\else
\section*{Acknowledgement}
TM is supported by 
the JST Moonshot R\verb|&|D JPMJMS2061-5-1-1,
JST FOREST,
MEXT Q-LEAP, 
and the Grant-in-Aid for Scientific Research (B) No.JP19H04066 of JSPS.
\fi

\newcommand{\etalchar}[1]{$^{#1}$}

%\ifnum\llncs=1
%\bibliographystyle{alpha} 
%\bibliography{abbrev3,crypto,reference}
%\else
%\bibliographystyle{alpha} 
%\bibliography{abbrev3,crypto,reference}
%\fi

\ifnum\cameraready=1
\else
\appendix

	\ifnum\llncs=1
	\newpage
	 	\setcounter{page}{1}
 	{
	\noindent
 	\begin{center}
	{\Large SUPPLEMENTAL MATERIALS}
	\end{center}
 	}
	\setcounter{tocdepth}{2}
	%%%%% LNCS appendix part %%%%%%
	\section{Proof of \texorpdfstring{\cref{thm:computationally_hiding_2}}{Theorem~\ref{thm:computationally_hiding_2}}}\label{Sec:Computational_hiding}

\section{Proof of \texorpdfstring{\cref{prop:abort}}{Proposition~\ref{prop:abort}}}\label{proof:abort}
\begin{proof}[Proof of \cref{prop:abort}]
We prove the proposition by contradiction.
Let $p$ be the probability that $\cS^{(1)}$ does not abort.
Assume that the probability $p$ satisfies $|p-\frac{1}{m}|\geq \frac{1}{q(\lambda)}$ for a polynomial $q$.
Then, we can construct an adversary $\cB$ that breaks the computational hiding of $\Sigma_{\mathsf{ccd}}$. 
Let us describe how $\cB$ works below.
\begin{enumerate}
    \item $\cB$ generates $c\leftarrow[m]$ and $x,z\lrun\{0,1\}^n$.
    \item $\cB$ sends $m_0\coloneqq \{x_i,z_i\}_{i\in S_c}$ and $m_1\coloneqq0^{10}$ to the challenger.
    $\cB$ receives commitments from the challenger which is either $\{\cment_i(x_i),\cment_i(z_i)\}_{i\in S_c}$ or $\{\cment_i(0),\cment_i(0)\}_{i\in S_c}$.
    \item $\cB$ generates $\{\cment_i(0),\cment_i(0)\}_{i\in \overline{S}_c}$.
    \item $\cB$ generates $X^xZ^z\sigma(c)Z^zX^x$.
    $\cB$ appends commitments and $\xi$ to the quantum state in the ascending order.
    If the commitments for $i\in S_c$ are $\{\cment_i(x_i),\cment_i(z_i)\}_{i\in S_c}$,
    $\cB$ obtains $X^xZ^z\sigma(c)Z^zX^x\otimes \cment(x^{S_c})\otimes\cment(z^{S_c})\otimes \xi$.
    If the commitments for $i\in S_c$ are $\{\cment_i(0),\cment_i(0)\}_{i\in S_c}$,
    $\cB$ obtains $X^xZ^z\sigma(c)Z^zX^x\otimes \cment(0^n)\otimes\cment(0^n)\otimes \xi$.
    \item $\cB$ runs $V_1^*$ on it to obtain $(c',\{\cert_{i,x}',\cert_{i,z}'\}_{i\in \overline{S}_{c'}})$. 
    $\cB$ outputs 0 when $c\neq c'$.
    $\cB$ outputs 1 when $c=c'$.
\end{enumerate}

When $\cB$ receives $\{\cment_i(x_i),\cment_i(z_i)\}_{i\in S_c}$ from the challenger,
it outputs 1 with probability $p$
since it simulates $\cS^{(1)}$.
When $\cB$ receives $\{\cment_i(0),\cment_i(0)\}_{i\in S_c}$ from the challenger, on the other hand,
it outputs 1 with probability $\frac{1}{m}$,
because $(X^xZ^z\sigma(c) Z^zX^x)\otimes \cment(0^n)\otimes\cment(0^n)\otimes \xi$ is independent of $c$.
(Note that $\sigma(c)$ is one-time padded by $x,z$.)
Therefore if there exists some polynomial $q$ such that $|p-\frac{1}{m}|\geq \frac{1}{q(\lambda)}$, 
$\cB$ can break the computational hiding of $\Sigma_{\mathsf{ccd}}$
from (the computational hiding version of) \cref{lemma:eachbit}.
\end{proof}

\section{Proof of \texorpdfstring{\cref{lem:computational_zero}}{Lemma~\ref{lem:computational_zero}}}\label{proof:computational_zero}
\begin{proof}[Proof of \cref{lem:computational_zero}]
This proof is similar to the proof of Lemma~\ref{lem:everlasting_zero}.
For a subset $S_c\subseteq[n]$ and $x,z\in\bit^n$, let us define $x^{S_c}\seteq (x^{S_c}_1,x^{S_c}_2,\cdots, x^{S_c}_n)$ and $z^{S_c}\seteq (z^{S_c}_1,z^{S_c}_2,\cdots,z^{S_c}_n)$,
where $x^{S_c}_{i}=x_{i}$ and $z^{S_c}_{i}=z_{i}$ for $i\in S_{c}$,
and $x^{S_c}_{i}=z^{S_c}_{i}=0$ for $i\notin S_{c}$. 

For clarity, we describe how the interactive algorithm $\langle \cP(\witness^{\otimes k(|\statement|)}),\cV^*(\xi)\rangle (\statement)$ runs against a QPT verifier $\cV^*$ with an input $\xi$,
where $\witness$ is the witness and $\statement$ is the instance.
\begin{description}
\item[$\langle \cP(\witness^{\otimes k(|\statement|)}),\cV^*(\xi)\rangle (\statement)$:] $ $
\begin{enumerate}
    \item $\cP$ generates $x,z\leftarrow \{0,1\}^n$, and computes 
    \begin{align}
    &(\cment_i(x_i),\dment_i(x_i),\ck_i(x_i))\leftarrow\Commit(1^\lambda,x_i) \\ 
    &(\cment_i(z_i),\dment_i(z_i),\ck_i(z_i))\leftarrow\Commit(1^\lambda,z_i)
    \end{align}
    for all $i\in [n]$.
    $\cP$ sends $\msg_1\seteq(X^xZ^z\rho_\hist Z^zX^x)\otimes \cment(x)\otimes\cment(z)$ to $\cV^*$. 
    \item $\cV^*$ appends $\xi$ to the received state, and
   runs a QPT circuit $V_1^*$ on it to obtain $(c,\{\cert_{i,x}',\cert_{i,z}'\}_{i\in\overline{S}_c})$.
   $\cV^*$ sends $\msg_2\seteq(c,\{\cert_{i,x}',\cert_{i,z}'\}_{i\in\overline{S}_c})$ to $\cP$.
    \item $\cP$ sends $\msg_3\seteq\{\dment_i(x_i),\dment_i(z_i)\}_{i\in S_c}$ to $\cV^*$.
    \item $\cV^*$ appends $\msg_3$ to its state, and runs a QPT circuit $V_2^*$ on it. $\cV^*$ outputs its state $\xi'$.
    \end{enumerate}
\end{description}

Next let us define a simulator $\cS^{(1)}$ as follows.

\begin{description}
\item [The simulator $\cS^{(1)}(\statement,\cV^*,\xi)$:] $ $
\begin{enumerate}
    \item Pick $c\leftarrow [m]$ and $x,z\leftarrow \{0,1\}^n$. 
    Compute 
    \begin{align}
   & (\cment_i(x^{S_c}_i),\dment_i(x^{S_c}_i),\ck_i(x^{S_c}_i))\leftarrow\Commit(1^\lambda,x^{S_c}_i)\\
   & (\cment_i(z^{S_c}_i),\dment_i(z^{S_c}_i),\ck_i(z^{S_c}_i))\leftarrow\Commit(1^\lambda,z^{S_c}_i)
    \end{align} 
    for all $i\in [n]$.
    \item Generate $(X^xZ^z\sigma(c) Z^zX^x) \otimes \cment(x^{S_c})\otimes \cment(z^{S_c})\otimes \xi$, where $\sigma(c)\seteq \rho_\simulator^{\statement,S_c}\otimes\left(\prod_{i\in \overline{S}_c}|0\rangle\langle0|_i\right)$.
    Run $V_1^*$ on the state to obtain $(c',\{\cert_{i,x}',\cert_{i,z}'\}_{i\in\overline{S}_{c'}})$.
    \item If $c'\neq c$, abort and output a fixed state $\eta$ and the flag state $\mathsf{fail}$.
    \item Append $\{\dment_i(x^{S_c}_i),\dment_i(z^{S_c}_i)\}_{i\in S_c}$ to its quantum state, and run $V_2^*$ on the state.
    $\cS$ outputs the output state and the flag state $\mathsf{success}$.
\end{enumerate}
\end{description}

Let us also define other two simulators $\cS^{(2)}$ and $\cS^{(3)}$ as follows.
\begin{description}
\item [The modified simulator $\cS^{(2)}(\statement,\witness^{\otimes k(|\statement|)},\cV^*,\xi)$:] $ $
   It is the same as $\cS^{(1)}$ except that
   $\sigma(c)$ is replaced with $\rho_{\hist}$. 
\end{description}

\begin{description}
\item [The simulator $\cS^{(3)}(\statement,\witness^{\otimes k(|\statement|)},\cV^*,\xi)$:] $ $
$\cS^{(3)}(\statement,\witness^{\otimes k(|\statement|)},\cV^*,\cdot~)$ is the channel that postselects the output of
\\$\cS^{(2)}(\statement,\witness^{\otimes k(|\statement|)},\cV^*,\cdot~)$ on the non-aborting state. More precisely, if we
write $\cS^{(2)}(\statement,\witness^{\otimes k(|\statement|)},\cV^*,\rho_{in})=p \rho_{out}\otimes \mathsf{success}+(1-p)\eta\otimes \mathsf{fail}$, 
where $p$ is the non-aborting probability,
$\cS^{(3)}(\statement,\witness^{\otimes k(|\statement|)},\cV^*,\rho_{in})= \rho_{out}$.
\end{description}

\cref{lem:computational_zero} is shown from the following \cref{prop:abort_comp,prop:indistinguishable_S_S'_comp,prop:indistinguishable_S'_V^*_comp} (whose proofs will be given later)
and quantum rewinding lemma(\cref{lemma:rewinding}), which is used to reduce the probability that $\cS^{(1)}$ aborts to $\negl(\lambda)$.
In fact, from \cref{prop:abort_comp,lemma:rewinding}, there exists a quantum circuit $\cS^{(0)}$ of size at most 
$O(m\,{\rm poly}(n){\rm size}(\cS^{(1)}))$
such that the probability $\cS^{(0)}$ aborts is $\negl(\lambda)$,
and the output quantum states of $\cS^{(0)}$ and $\cS^{(1)}$ are $\negl(\lambda)$-close when they do not abort.
From \cref{prop:indistinguishable_S_S'_comp,prop:indistinguishable_S'_V^*_comp},
$\cS^{(0)}$ is $\negl(\lambda)$-close to the run of the real protocol, which completes the proof.
\end{proof}

\begin{proposition}\label{prop:abort_comp}
If $\Sigma_{\mathsf{ccd}}$ is computational hiding, then the probability that $\cS^{(1)}$ does not abort is $\frac{1}{m}\pm\negl(\lambda)$.
\end{proposition}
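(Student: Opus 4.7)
The plan is to prove this by contradiction, along essentially the same lines as the proof of \cref{prop:abort}. The crucial observation is that the abort decision in $\cS^{(1)}$ (for the computational zero-knowledge setting defined in \cref{proof:computational_zero}) depends only on whether $c' = c$ after running $V_1^*$; the subsequent steps that distinguish this simulator from the one in \cref{prop:abort} (namely, the absence of a $\Cert$ call) happen \emph{after} the abort test. So the analysis of the non-abort probability can be carried out verbatim.

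Concretely, suppose for contradiction that there exists a polynomial $q$ such that $|p-\frac{1}{m}|\geq \frac{1}{q(\lambda)}$, where $p$ is the probability that $\cS^{(1)}$ does not abort. I would construct a QPT adversary $\cB$ against computational hiding of $\Sigma_{\mathsf{ccd}}$ that samples $c\leftarrow[m]$ and $x,z\leftarrow\{0,1\}^n$, submits the pair $(m_0,m_1)\seteq(\{x_i,z_i\}_{i\in S_c},\;0^{10})$ to the hiding challenger (so it receives commitments on the indices in $S_c$ that are either to the true values or to zero), generates commitments to zero on its own for $i\in\overline{S}_c$, assembles $(X^xZ^z\sigma(c)Z^zX^x)\otimes \cment\otimes\xi$, runs $V_1^*$ to obtain $(c',\ldots)$, and outputs $1$ iff $c=c'$.

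When the challenger commits to the true values, $\cB$ exactly simulates the first two steps of $\cS^{(1)}$, so it outputs $1$ with probability $p$. When the challenger commits to zero, the entire input to $V_1^*$ is independent of $c$: all commitments are to zero, and $\sigma(c)$ is one-time padded by the uniformly random $x,z$, making $X^xZ^z\sigma(c)Z^zX^x$ maximally mixed (and in particular independent of $c$). Hence in this case $\cB$ outputs $1$ with probability exactly $\frac{1}{m}$. Therefore $\cB$'s distinguishing advantage is $|p-\frac{1}{m}|\geq \frac{1}{q(\lambda)}$, which by (the computational-hiding analogue of) \cref{lemma:eachbit} contradicts the computational hiding of $\Sigma_{\mathsf{ccd}}$.

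The proof is essentially routine once one notices the structural identity between the two simulators up to the point of the abort test. The only minor care needed is to confirm that \cref{lemma:eachbit} (which is stated for certified everlasting hiding) has an analogous statement for computational hiding, which follows by the same standard hybrid argument; no genuine obstacle arises in this step.
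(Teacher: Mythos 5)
Your proposal is correct and matches the paper's approach: the paper proves \cref{prop:abort_comp} by simply noting it is identical to the proof of \cref{prop:abort}, whose reduction to computational hiding (via the computational-hiding analogue of \cref{lemma:eachbit}, using that $X^xZ^z\sigma(c)Z^zX^x$ is one-time padded and the zero-commitments make the input to $V_1^*$ independent of $c$) is exactly the adversary $\cB$ you describe. Your observation that the two simulators coincide up to the abort test is precisely why the paper can reuse the argument verbatim.
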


\begin{proposition}\label{prop:indistinguishable_S_S'_comp}
$\cS^{(1)}(\statement,\cV^*,\cdot~)\approx_{s}\cS^{(2)}(\statement,\witness^{\otimes k(|\statement|)},\cV^*,\cdot~)$
for any $\statement\in A_{\yes}\cap\bit^\lambda$ and any $\witness\in R_A(\statement)$.
\end{proposition}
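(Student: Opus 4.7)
The plan is to argue that the two simulators produce statistically close output states by isolating the sole substantive difference between them: $\cS^{(1)}$ feeds the circuit $V_1^*$ the state $X^xZ^z\sigma(c)Z^zX^x$ with $\sigma(c)=\rho_{\simulator}^{\statement,S_c}\otimes\left(\bigotimes_{i\in\overline{S}_c}|0\rangle\langle 0|_i\right)$, while $\cS^{(2)}$ feeds it $X^xZ^z\rho_{\hist}Z^zX^x$. Everything else (the commitments $\cment(x^{S_c})\otimes\cment(z^{S_c})$, the auxiliary $\xi$, the choice of $c$, the subsequent circuits $V_1^*$ and $V_2^*$, the abort check, the decommitments) is a fixed quantum channel acting on this state, so it suffices to show the two input states are statistically indistinguishable; trace distance cannot increase under a common subsequent channel.

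The key observation is that $x^{S_c}$ and $z^{S_c}$ depend only on $\{x_i,z_i\}_{i\in S_c}$, so the bits $\{x_i,z_i\}_{i\in\overline{S}_c}$ used in the Pauli mask $X^xZ^z$ are uniform and never revealed anywhere else in the view. Conditioning on $c$ and on $\{x_i,z_i\}_{i\in S_c}$, the Pauli twirl applied on $\overline{S}_c$ gives
\begin{align}
\frac{1}{4^{|\overline{S}_c|}}\sum_{x_{\overline{S}_c},z_{\overline{S}_c}}\!\!(X^{x_{\overline{S}_c}}Z^{z_{\overline{S}_c}})\,\tau\,(Z^{z_{\overline{S}_c}}X^{x_{\overline{S}_c}})=\Tr_{\overline{S}_c}(\tau)\otimes\frac{I}{2^{|\overline{S}_c|}}
\end{align}
for any $n$-qubit state $\tau$. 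Since $\Tr_{\overline{S}_c}(\sigma(c))=\rho_{\simulator}^{\statement,S_c}$, after averaging over $\{x_i,z_i\}_{i\in\overline{S}_c}$ the effective states delivered to the rest of the circuit become (up to a common conjugation by $X^{x_{S_c}}Z^{z_{S_c}}$ on $S_c$) $\rho_{\simulator}^{\statement,S_c}\otimes I/2^{|\overline{S}_c|}$ for $\cS^{(1)}$ and $\Tr_{\overline{S}_c}(\rho_{\hist})\otimes I/2^{|\overline{S}_c|}$ for $\cS^{(2)}$.

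These two states are $\negl(|\statement|)$-close in trace distance by the simulatable completeness clause of \cref{def:k-SimQMA}, which guarantees exactly $\|\Tr_{\overline{S}_c}(\rho_{\hist})-\rho_{\simulator}^{\statement,S_c}\|_{\tr}\leq\negl(|\statement|)$. Averaging the original inputs against $\{x_i,z_i\}_{i\in\overline{S}_c}$ is itself a quantum channel (a classical mixture of unitaries), so the trace distance between the original (unaveraged) joint states is at most the trace distance after this averaging is moved outside; propagating through the remaining common channel $V_2^*\circ(\text{append decommitments})\circ(\text{abort check})\circ V_1^*$ preserves this bound. This yields $\cS^{(1)}(\statement,\cV^*,\cdot)\approx_s\cS^{(2)}(\statement,\witness^{\otimes k(|\statement|)},\cV^*,\cdot)$ as required.

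The only subtlety is presentational rather than technical: one has to check that absolutely no information about $\{x_i,z_i\}_{i\in\overline{S}_c}$ leaks into the commitments, decommitments, or keys used elsewhere in the simulator, because only then is the Pauli twirl legitimate. This is immediate from the definition $x_i^{S_c}=z_i^{S_c}=0$ for $i\notin S_c$ and from the fact that decommitments $\{\dment_i(x_i^{S_c}),\dment_i(z_i^{S_c})\}_{i\in S_c}$ are indexed only over $S_c$. Consequently, no serious obstacle is anticipated; the argument is essentially identical to that of \cref{prop:indistinguishable_S_S'_ever}, with the same reduction to local simulatability.
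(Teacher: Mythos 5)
Your proposal is correct and follows essentially the same route as the paper, which disposes of this proposition by noting it is "clear from the local simulatability and the definition of $x^{S_c}$ and $z^{S_c}$": the only difference between the two simulators is $\sigma(c)$ versus $\rho_{\hist}$, the pad bits on $\overline{S}_c$ appear nowhere else, and local simulatability bounds the trace distance of the reduced states on $S_c$. Your write-up merely makes explicit the Pauli-twirl and data-processing steps that the paper leaves implicit.
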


\begin{proposition}\label{prop:indistinguishable_S'_V^*_comp}
If $\Sigma_{\mathsf{ccd}}$ is computational hiding,
$\cS^{(3)}(\statement,\witness^{\otimes k(|\statement|)},\cV^*,\cdot~)\approx_{c}\mathsf{OUT}_{\cV^*}\langle \cP(\witness^{\otimes k(|\statement|)}),\cV^*(\cdot)\rangle (\statement)$.
\end{proposition}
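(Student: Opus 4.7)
The plan is to prove this by contradiction, constructing a reduction to the (bit-wise) computational hiding of $\Sigma_{\mathsf{ccd}}$. The key observation is that $\cS^{(2)}$ (and hence $\cS^{(3)}$) differs from the real interaction only in that commitments to $\{x_i,z_i\}_{i\in\overline{S_c}}$ are replaced by commitments to $0$'s; the underlying quantum state $X^xZ^z \rho_{\hist} Z^zX^x$ is identical in $\cS^{(2)}$ and the real protocol. Since the decommitments sent in $\msg_3$ only involve indices in $S_c$ (whose commitments $\cB$ can generate itself), $\cB$ never needs the decommitments or keys for the indices it hands off to the challenger. This is exactly why computational hiding (rather than certified everlasting hiding) suffices here, and why the reduction is simpler than that in \cref{prop:indistinguishable_S'_V^*_ever}.

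Assume toward contradiction that there exist a QPT distinguisher $\cD$ and a state $\sigma$ achieving non-negligible advantage between $\left(\cS^{(3)}(\statement,\witness^{\otimes k(|\statement|)},\cV^*,\cdot)\otimes I\right)\sigma$ and $\left(\mathsf{OUT}_{\cV^*}\langle \cP(\witness^{\otimes k(|\statement|)}),\cV^*(\cdot)\rangle(\statement)\otimes I\right)\sigma$. Construct $\cB$ playing the multi-bit computational hiding game as follows: sample $c\leftarrow[m]$ and $x,z\leftarrow\{0,1\}^n$; submit $m_0\seteq\{x_i,z_i\}_{i\in\overline{S_c}}$ and $m_1\seteq 0^{2|\overline{S_c}|}$; receive the challenge commitments $\{\cment_i^*\}_{i\in\overline{S_c}}$; compute the remaining commitments for $i\in S_c$ honestly (retaining the corresponding decommitments); generate $\rho_{\hist}$ from $\witness$; assemble $(X^xZ^z\rho_{\hist}Z^zX^x)\otimes \cment(\cdot)\otimes \cment(\cdot)\otimes\sigma$; run $V_1^*$ to get $(c',\cdot)$; abort (outputting a random bit) if $c'\neq c$; otherwise append $\{\dment_i(x_i),\dment_i(z_i)\}_{i\in S_c}$, run $V_2^*$ to obtain $\xi'$, feed $\xi'$ (together with the untouched part of $\sigma$) into $\cD$, and output $\cD$'s guess.

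When the challenger encrypts $m_0$, conditioned on non-abort (an event of probability exactly $1/m$, since the joint distribution of the first message is independent of $c$ in that case), $\cB$'s post-selected view is exactly $\mathsf{OUT}_{\cV^*}\langle\cP,\cV^*(\sigma)\rangle(\statement)$. When the challenger encrypts $m_1$, $\cB$ simulates $\cS^{(2)}$, so conditioned on non-abort (an event of probability $1/m\pm\negl(\lambda)$ by an argument analogous to \cref{prop:abort_comp}), $\cB$'s view is exactly $\cS^{(3)}$. Then a standard conditional-probability computation shows that $\cB$'s advantage in the hiding game is at least $(1/m)$ times $\cD$'s advantage minus $\negl(\lambda)$, which is still non-negligible. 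By (the computational-hiding version of) \cref{lemma:eachbit}, this contradicts the computational hiding of $\Sigma_{\mathsf{ccd}}$.

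The main technical point to verify carefully will be the post-selection analysis: one must show that the two conditional distributions ``output of $\cD$ given non-abort'' in the two hypotheses of the hiding game correspond, up to $\negl(\lambda)$, to the two channels being distinguished, rather than to some other mixture. This follows because the non-abort event depends only on data independent of whether $m_0$ or $m_1$ was encrypted --- in the $m_0$ case by a symmetry argument on $(c,x,z)$ paralleling \cite[Lemma~5.6]{FOCS:BroGri20}, and in the $m_1$ case by \cref{prop:abort_comp} itself. With this in hand, the reduction is routine and the conclusion follows.
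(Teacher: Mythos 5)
Your proposal is correct and follows essentially the same route as the paper's proof: a reduction to the multi-bit computational hiding game (\cref{lemma:eachbit}, computational-hiding version) in which the reduction commits to $\{x_i,z_i\}_{i\in\overline{S}_c}$ versus all zeros, generates the $S_c$-commitments and $X^xZ^z\rho_{\hist}Z^zX^x$ itself, aborts on $c'\neq c$, and uses the fact that the non-abort probability is $\frac{1}{m}$ in the real case and $\frac{1}{m}\pm\negl(\lambda)$ in the simulated case. Your added remark that the decommitments in $\msg_3$ only involve $S_c$ (so the reduction never needs the challenger's decommitments) is exactly the reason the paper's reduction goes through as well.
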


\begin{proof}[Proof of \cref{prop:abort_comp}]
This proof is the same as the proof of \cref{prop:abort}.
\end{proof}

\begin{proof}[Proof of \cref{prop:indistinguishable_S_S'_comp}]
This proof is the same as the proof of \cref{prop:indistinguishable_S_S'_ever}.
\end{proof}

\begin{proof}[Proof of \cref{prop:indistinguishable_S'_V^*_comp}]
We prove the proposition by contradiction.
We construct an adversary $\cB$ that breaks the security of the computationally hiding of $\Sigma_{\mathsf{ccd}}$ by assuming 
the existence of a distinguisher $\cD$ that distinguishes two states $\delta_0$ and $\delta_1$,
\begin{align}
&\delta_0\seteq(\mathsf{OUT}_{\cV^*}\langle \cP(\witness^{\otimes k(|\statement|)}),\cV^*(\cdot)\rangle (\statement)\otimes I)\sigma\\
&\delta_1\seteq(\cS^{(3)}(\statement,\witness^{\otimes k(|\statement|)},\cV^*,\cdot~)\otimes I)\sigma,
\end{align}
with a certain state $\sigma$.
Let us describe how $\cB$ works.
\begin{enumerate}
    \item $\cB$ generates $c\leftarrow[m]$ and $x,z\lrun\{0,1\}^n$.
    \item $\cB$ sends $m_0\coloneqq \{x_i,z_i\}_{i\in\overline{S}_c}$ and $m_1\coloneqq0^{2n-10}$ to the challenger.
    $\cB$ receives commitments from the challenger which is either $\{\cment_i(x_i),\cment_i(z_i)\}_{i\in \overline{S}_c}$ or $\{\cment_i(0),\cment_i(0)\}_{i\in \overline{S}_c}$.
    \item $\cB$ computes 
    \begin{align}
 &   (\cment_i(x_i),\dment_i(x_i),\ck_i(x_i))\leftarrow\Commit(1^\lambda,x_i)\\
 &   (\cment_i(z_i),\dment_i(z_i),\ck_i(z_i))\leftarrow\Commit(1^\lambda,z_i)
    \end{align}
    for $i\in S_c$ by itself.
    \item $\cB$ generates $X^xZ^z\rho_{\hist}Z^zX^x$.  
    $\cB$ appends commitments and $\sigma$ to the quantum state.
    If the commitments for ${i\in \overline{S}_c}$ are $\{\cment_i(x_i),\cment_i(z_i)\}_{i\in \overline{S}_c}$,
    $\cB$ obtains $X^xZ^z\rho_{\hist}Z^zX^x\otimes \cment(x)\otimes\cment(z)\otimes \sigma$.
    If the commitments for ${i\in \overline{S}_c}$ are $\{\cment_i(0),\cment_i(0)\}_{i\in \overline{S}_c}$,
    $\cB$ obtains $X^xZ^z\rho_{\hist}Z^zX^x\otimes \cment(x^{S_c})\otimes\cment(z^{S_c})\otimes \sigma$.
    \item $\cB$ runs $V_1^*$ on it to obtain $(c',\{\cert_{i,x}',\cert_{i,z}'\}_{i\in \overline{S}_{c'}})$. $\cB$ aborts when $c\neq c'$.
    \item $\cB$ appends $\{\dment_i(x_i),\dment_{i}(z_i)\}_{i\in S_c}$ to the post-measurement state and runs $V_2^*$ on it.
    \item $\cB$ passes the output state to $\cD$.
    \item When $\cD$ outputs $b$, $\cB$ outputs $b$.
\end{enumerate}

When $\cB$ receives $\{\cment_i(x_i),\cment_i(z_i)\}_{i\in \overline{S}_c}$ from the challenger and it does not abort,
it simulates \\$\mathsf{OUT}_{\cV^*}\langle \cP(\witness^{\otimes k(|\statement|)}),\cV^*(\cdot)\rangle(\statement)$.
Because $(X^xZ^z\rho_{\hist}Z^zX^x)\otimes \cment(x)\otimes\cment(z)\otimes \sigma$ is independent of $c$,
the probability that $\cB$ does not abort is $\frac{1}{m}$.
Therefore, $\cB$ can simulate ${\mathsf{OUT}}_{\cV^*}\langle \cP(\witness^{\otimes k(|\statement|)}),\cV^*(\cdot)\rangle(\statement)$
with probability $\frac{1}{m}$.

 When $\cB$ receives $\{\cment_i(0),\cment_i(0)\}_{i\in\overline{S}_c}$ from the challenger and it does not abort,
it simulates $\cS^{(3)}(\statement,\witness^{\otimes k(|\statement|)},\cV^*,\cdot~)$.
The probability that $\cB$ does not abort is $\frac{1}{m}\pm\negl(\lambda)$ from \cref{prop:abort_comp,prop:indistinguishable_S_S'_comp}.
Therefore, $\cB$ can simulate 
$\cS^{(3)}(\statement,\witness^{\otimes k(|\statement|)},\cV^*,\cdot~)$ with probability
$\frac{1}{m}\pm\negl(\lambda)$.

Therefore, if there exists the distinguisher $\cD$ that distinguishes 
$\delta_0$ and $\delta_1$,
$\cB$ can distinguish $\{\cment_i(x_i),\cment_i(z_i)\}_{i\in \overline{S}_c}$ from $\{\cment_i(0),\cment_i(0)\}_{i\in \overline{S}_c}$.
From (the computational hiding version of) \cref{lemma:eachbit},
this contradicts the computational hiding of $\Sigma_{\mathsf{ccd}}$.

\end{proof}

	\section{Proof of \texorpdfstring{\cref{thm:sequential_everlasting_zero_proof}}{Theorem~\ref{thm:sequential_everlasting_zero_proof}}}\label{Sec:sequential_everlasting}
	
  % !TEX root = main.tex
% !TEX spellcheck = en-US

\section{Commitment with Certified Everlasting Hiding and Sum-Binding}\label{app:sum_binding}
In this appendix, we define and construct commitment with certified
everlasting hiding and statistical sum-binding.

\subsection{Definition}
\begin{definition}[Commitment with Certified Everlasting Hiding and Sum-Binding (Syntax)]\label{def:commitment_with_cd_sumbinding}
Let $\lambda$ be the security parameter, and let $p$, $q$, $r$ and $s$ be some polynomials.
Commitment with certified everlasting hiding and sum-binding consists of a tuple of algorithms $(\Commit,\Verify,\Delete,\Cert)$ 
with message space $\Ms:=\{0,1\}$, commitment space $\Cs:=\cQ^{\otimes p(\lambda)}$,
decommitment space $\mathcal{D}:=\{0,1\}^{q(\lambda)}$, key space $\Ks:=\{0,1\}^{r(\lambda)}$
and deletion certificate space $\mathcal{E}:= \{0,1\}^{s(\lambda)}$.

\begin{description}
\item[$\Commit (1^{\lambda},b)\rightarrow (\cment,\dment,\ck)$:] 
The commitment algorithm takes as input a security parameter $1^{\lambda}$ and a message $b\in\{0,1\}$, and outputs a commitment $\cment\in\Cs$, a decommitment $\dment\in\mathcal{D}$, and a key $\ck\in\Ks$.
\item[$\Verify(\cment,\dment,b)\rightarrow \top~or~\bot $:]
The verification algorithm takes as input $\cment$, $\dment$ and $b$,
and outputs $\top$ or $\bot$.
\item[$\Delete(\cment)\rightarrow \cert$:]The deletion algorithm takes $\cment$ as input, and outputs a certificate $\cert\in \mathcal{E}$. 
\item[$\Cert(\cert,\ck)\rightarrow \top~or~\bot$:]The certification algorithm takes $\cert$ and $\ck$ as input,
and outputs $\top$ or $\bot$.
\end{description}
\end{definition}

\begin{definition}[Correctness]\label{def:correctness_bit_commitment} 
There are two types of correctness, namely, decommitment correctness and deletion correctness.
\begin{description}
\item[Decommitment correctness:]
There exists a negligible function $\negl$ such that for any $\lambda\in\mathbb{N}$ and $b\in\{0,1\}$,
\begin{align}
    \Pr[\Verify (\cment,\dment,b)=\top \mid (\cment,\dment,\ck)\leftarrow\Commit(1^{\lambda},b)]\geq1-\negl(\lambda).
\end{align}

\item[Deletion correctness:]
There exists a negligible function $\negl$ such that for any $\lambda \in\mathbb{N}$ and $b\in\{0,1\}$, 
\begin{align}
    \Pr[\Cert(\cert,\ck)=\top \mid (\cment,\dment,\ck)\leftarrow\Commit(1^\lambda,b),\cert\leftarrow\Delete(\cment)]\geq1-\negl(\lambda).
\end{align}
\end{description}
\end{definition}

\begin{definition}[$\epsilon$-Sum-Binding]\label{def:sum_binding}
For any $\cment$, $\dment$, and $\dment'$, it holds that 
\begin{align}
    \Pr[\Verify(\cment,\dment,0)=\top]+\Pr[\Verify(\cment,\dment',1)=\top]\leq 1+\epsilon.
\end{align}
We call $\epsilon$-sum-binding just sum-binding if $\epsilon$ is negligible.
\end{definition}

\begin{definition}[Computational Hiding]\label{def:computationally_hiding}
Let $\Sigma\coloneqq(\Commit,\Verify,\Delete,\Cert)$. 
Let us consider the following security experiment $\expa{\Sigma,\cA}{c}{hide}(\lambda,b)$ against
any QPT adversary $\mathcal{A}$.
\begin{enumerate}
\item The challenger computes $(\cment,\dment,\ck)\leftarrow \Commit(1^{\lambda},b)$, and sends $\cment$ to $\mathcal{A}$.
\item $\cA$ outputs $b'\in\{0,1\}$.
\item The output of the experiment is $b'$. 
\end{enumerate}
Computational hiding means that the following is satisfied for any QPT $\cA$.
\begin{align}
\advb{\Sigma,\cA}{c}{hide}(\lambda)\seteq
\left|\Pr[ \expa{\Sigma,\cA}{c}{hide}(\lambda,0)=1]-
\Pr[\expa{\Sigma,\cA}{c}{hide}(\lambda,1)=1]\right|\leq \negl(\lambda).
\end{align}
\end{definition}

\begin{definition}[Certified Everlasting Hiding]\label{def:everlasting_hiding}
Let $\Sigma\coloneqq(\Commit,\Verify,\Delete,\Cert)$. 
Let us consider the following security experiment $\expa{\Sigma,\cA}{ever}{hide}(\lambda,b)$ against
$\mathcal{A}=(\mathcal{A}_1,\mathcal{A}_2)$ consisting of any QPT adversary $\mathcal{A}_1$ and any unbounded adversary $\mathcal{A}_2$.

\begin{enumerate}
\item The challenger computes $(\cment,\dment,\ck)\leftarrow \Commit(1^{\lambda},b)$, and sends $\cment$ to $\mathcal{A}_1$.
\item At some point, $\mathcal{A}_1$ sends $\cert$ to the challenger, and sends its internal state to $\mathcal{A}_2$.
\item The challenger computes $\Cert(\cert,\ck)$. If the output is $\top$, then the challenger outputs $\top$,
and sends $(\dment,\ck)$ to $\cA_2$.
Else, the challenger outputs $\bot$, and sends $\bot$ to $\cA_2$.
\item $\cA_2$ outputs $b'\in\{0,1\}$.
\item If the challenger outputs $\top$, then the output of the experiment is $b'$. Otherwise, the output of the experiment is $\bot$.

\end{enumerate}

We say that it is certified everlasting hiding if the following is satisfied for any $\cA=(\cA_1,\cA_2)$. 
\begin{align}
\advb{\Sigma,\cA}{ever}{hide}(\lambda)\coloneqq
\left|\Pr[ \expa{\Sigma,\cA}{ever}{hide}(\lambda,0)=1]-
\Pr[\expa{\Sigma,\cA}{ever}{hide}(\lambda,1)=1]\right|\leq \negl(\lambda).
\end{align}
\end{definition}

\subsection{Construction}

Though the construction is essentially the same as that in \cref{sec:bit_comm_construction}, we give the full description for clarity.
Let $\lambda$ be the security parameter, and let $p$, $q$, $r$, $s$, $t$ and $u$ be some polynomials.
We construct a bit commitment with certified everlasting hiding and sum-binding,
$\Sigma_{\mathsf{ccd}}=(\Commit,\Verify,\Delete,\Cert)$, with message space $\Ms=\{0,1\}$, 
commitment space $\Cs=\cQ^{\otimes p(\lambda)}\times \{0,1\}^{q(\lambda)}\times \{0,1\}^{r(\lambda)}$,
decommitment space $\mathcal{D}=\{0,1\}^{s(\lambda)}\times \{0,1\}^{t(\lambda)}$,
key space $\Ks=\{0,1\}^{r(\lambda)}$ 
and deletion certificate space $\mathcal{E}=\{0,1\}^{u(\lambda)}$
from the following primitives:
\begin{itemize}
\item Secret-key encryption with certified deletion 
$\Sigma_{\mathsf{skcd}}=\mathsf{SKE}.(\keygen,\Enc,\Dec,\Delete,\Verify)$, with plaintext space $\Ms=\{0,1\}$, ciphertext space $\Cs=\cQ^{\otimes p(\lambda)}$, key space $\Ks=\{0,1\}^{r(\lambda)}$, and deletion certificate space $\mathcal{E}=\{0,1\}^{u(\lambda)}$.
\item
Classical non-interactive commitment, $\Sigma_{\mathsf{com}}=\algo{Classical}.\Commit$,
with plaintext space $\{0,1\}^{s(\lambda)}$, randomness space $\{0,1\}^{t(\lambda)}$, and commitment space $\{0,1\}^{q(\lambda)}$.
\item
A hash function $H$ from $\{0,1\}^{s(\lambda)}$ to $\{0,1\}^{r(\lambda)}$ modeled as a quantumly-accessible random oracle.
\end{itemize}
The construction is as follows.
\begin{description}
\item[$\Commit(1^{\lambda},b)$:] $ $
\begin{itemize}
    \item Generate $\mathsf{ske.sk}\leftarrow\mathsf{SKE}.\keygen(1^{\lambda})$, $R\leftarrow\{0,1\}^{s(\lambda)}$, $R'\leftarrow\{0,1\}^{t(\lambda)}$, and a hash function $H$ from $\{0,1\}^{s(\lambda)}$ to $\{0,1\}^{r(\lambda)}$.
    \item Compute $\ske.\ct\leftarrow \SKE.\Enc(\ske.\sk,b)$, $f\leftarrow \algo{Classical}.\Commit(R;R')$, and $h\seteq H(R)\oplus\ske.\sk$.
    \item Output $\cment\seteq (\ske.\ct,f,h)$, $\dment\seteq (R,R')$, and $\ck\coloneqq\ske.\sk$.
\end{itemize}
\item[$\Verify(\cment,\dment,b)$:] $ $
\begin{itemize}
    \item Parse $\cment=(\ske.\ct,f,h)$ and $\dment=(R,R')$.
    \item Compute $\ske.\sk'\coloneqq H(R)\oplus h$.
    \item Compute $b'\leftarrow\SKE.\Dec(\ske.\sk',\ske.\ct)$.
    \item Output $\top$ if $f=\Classical.\Commit(R;R')$ and $b'=b$, and output $\bot$ otherwise.
\end{itemize}
\item[$\Delete(\cment)$:] $ $
\begin{itemize}
    \item Parse $\cment=(\ske.\ct,f,h)$. 
    \item Compute $\ske.\cert\leftarrow \SKE.\Delete(\ske.\ct)$.
    \item Output $\cert\coloneqq\ske.\cert$.
\end{itemize}
\item[$\Cert(\cert,\ck)$:] $ $
\begin{itemize}
    \item Parse $\cert=\ske.\cert$ and $\ck= \ske.\sk$.
    \item Output $\top/\bot\leftarrow \SKE.\Verify(\ske.\sk,\ske.\cert)$.
\end{itemize}
\end{description}

\paragraph{Correctness.}
The decommitment and deletion correctness easily follow from the correctness of $\Sigma_{\mathsf{skcd}}$.

\paragraph{Security.}
We prove the following three theorems.

\begin{theorem}\label{thm:sum_binding}
If $\Sigma_{\mathsf{com}}$ is perfect binding, then $\Sigma_{\mathsf{ccd}}$ is sum-binding.
\end{theorem}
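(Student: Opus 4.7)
The plan is to show that once the commitment $\cment=(\ske.\ct,f,h)$ is fixed, the bit $b'$ that $\Verify$ can possibly accept is essentially pinned down, so the two acceptance events in the sum-binding definition are mutually exclusive.

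First, I would invoke the perfect binding of $\Sigma_{\mathsf{com}}=\Classical.\Commit$: for the fixed classical part $f$ of $\cment$, there is at most one $R^{*}$ such that $f=\Classical.\Commit(R^{*};R'^{*})$ for some $R'^{*}$. If no such $R^{*}$ exists, then the first check inside $\Verify$ fails for every $\dment$, both probabilities in the sum are $0$, and the bound is trivial. Otherwise, for any decommitment $\dment=(R,R')$ that makes $\Verify$ accept, we must have $R=R^{*}$; in particular, both $\dment$ and $\dment'$ in the sum-binding definition must use the same $R^{*}$ in their first component.

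Next, I would observe that because $R=R^{*}$ is now forced and $h$ is part of $\cment$, the value $\ske.\sk':=H(R^{*})\oplus h$ computed inside $\Verify$ is a deterministic function of $\cment$ alone, and is the \emph{same} for the $(\cment,\dment,0)$ and $(\cment,\dment',1)$ invocations. Consequently, both invocations feed the identical pair $(\ske.\sk',\ske.\ct)$ to $\SKE.\Dec$, and the bit $b'\leftarrow\SKE.\Dec(\ske.\sk',\ske.\ct)$ is drawn from one and the same distribution $\mathcal{D}_{\cment}$ depending only on $\cment$ (and it is in fact deterministic for the scheme of \cite{TCC:BroIsl20}, but we do not need this).

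Finally, I would conclude
\begin{align*}
\Pr[\Verify(\cment,\dment,0)=\top]+\Pr[\Verify(\cment,\dment',1)=\top]
\leq \Pr_{b'\leftarrow\mathcal{D}_{\cment}}[b'=0]+\Pr_{b'\leftarrow\mathcal{D}_{\cment}}[b'=1]\leq 1,
\end{align*}
yielding sum-binding with $\epsilon=0$. Honestly, there is no serious obstacle here: the whole argument is driven by the perfect binding of $\Classical.\Commit$, which collapses the space of admissible decommitments to a unique $R^{*}$, and the remaining steps only use that $\SKE.\Dec$ is a function of its inputs. The only thing to be careful about is to remember that the random oracle $H$ is fixed across the two $\Verify$ calls (it is the same oracle that was used at commit time), so $H(R^{*})$ is the same quantity in both invocations; this is precisely what forces the two decoded bits to share their distribution.
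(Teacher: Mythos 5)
Your proposal is correct and follows essentially the same route as the paper: perfect binding of $\Classical.\Commit$ pins the opened randomness to a unique $\tilde R$, so both verifications decrypt $\ske.\ct$ under the same key $h\oplus H(\tilde R)$, and the two acceptance events correspond to disjoint decryption outcomes whose probabilities sum to at most $1$. The paper writes this as a chain of inequalities rather than a prose argument, but the content is identical.
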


\begin{theorem}\label{thm:everlasting_hiding}
If $\Sigma_{\mathsf{com}}$ is unpredictable and $\Sigma_\mathsf{skcd}$ is OT-CD secure, then $\Sigma_{\mathsf{ccd}}$ is certified everlasting hiding.
\end{theorem}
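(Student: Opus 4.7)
The plan is to follow essentially the same hybrid argument used for Theorem~\ref{thm:everlasting_hiding_2}, since the only syntactic difference in this sum-binding variant is that the message space is $\{0,1\}$ and the verification procedure is packaged as a single algorithm rather than the pair $(\Verify_1,\Verify_2)$; neither change affects the security reductions. I will introduce three games $\expa{\Sigma_{\mathsf{ccd}},\cA}{ever}{hide}(\lambda,b)$, $\sfhyb{1}{}(b)$, $\sfhyb{2}{}(b)$ with exactly the same semantics as in the proof of Theorem~\ref{thm:everlasting_hiding_2}: in $\sfhyb{1}{}(b)$ the oracle given to $\cA_1$ is replaced by the reprogrammed function $H_{R\to H'}$ while the challenger still uses $H$ to form $h=H(R)\oplus\ske.\sk$ and $\cA_2$ still queries $H$; in $\sfhyb{2}{}(b)$ the bit string $h$ is sampled uniformly, $\cA_1$ is given $H'$ and $\cA_2$ is given $H'_{R\to h\oplus \ske.\sk}$.

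The transition from the real experiment to $\sfhyb{1}{}(b)$ is handled by the One-Way to Hiding Lemma (\cref{lemma:one-way_to_hiding}) together with the unpredictability of $\Sigma_{\mathsf{com}}$. Any distinguishing advantage yields, via an inefficient wrapper $\widetilde{\cA}$ whose input contains the full truth table of $H$, an extractor that recovers $R$ from $f=\Classical.\Commit(R;R')$ by measuring a uniformly chosen query of $\cA_1$ before the deletion. Crucially, the measurement happens while $\cA_1$ is still polynomial-time, so the reduction to unpredictability is a QPT attack even though $\cA_2$ is unbounded; the compressed oracle technique of Zhandry lets this reduction simulate the random oracle efficiently.

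The transition from $\sfhyb{1}{}(b)$ to $\sfhyb{2}{}(b)$ is purely information-theoretic: the joint distribution of $(h,\text{oracle}_{\cA_1},\text{oracle}_{\cA_2})$ is identical on both sides. Indeed, $H_{R\to H'}$ is independent of $H(R)$ and therefore of $h=\ske.\sk\oplus H(R)$, so conditioned on any value $h=r$ the oracle seen by $\cA_1$ is a uniform function, matching $H'$; and once $h=r$ is fixed, the oracle $H$ seen by $\cA_2$ takes the value $r\oplus\ske.\sk$ at $R$ and is independent and uniform elsewhere, which is precisely the law of $H'_{R\to h\oplus\ske.\sk}$ conditioned on $H'$ away from $R$.

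Finally, the indistinguishability of $\sfhyb{2}{}(0)$ and $\sfhyb{2}{}(1)$ reduces to OT-CD security of $\Sigma_{\mathsf{skcd}}$, and this is the step I expect to be the main obstacle because of the careful bookkeeping required between the event that a certificate is accepted and the output of the experiment. The reduction $\cB$ samples $R,R'$ and a uniform $h$, forwards the challenger's ciphertext as $\ske.\ct$, and simulates both random oracles by itself; when $\cA_1$ outputs $\ske.\cert$, $\cB$ relays it, and upon receiving $\ske.\sk$ from the OT-CD challenger it can complete the simulation of $\cA_2$'s reprogrammed oracle $H'_{R\to h\oplus\ske.\sk}$ and the decommitment $(R,R',\ske.\sk)$. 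The same chain of equalities as in the proof of Proposition~\ref{prop:hyb_2} then shows that $\cB$'s OT-CD advantage equals $|\Pr[\sfhyb{2}{}(0)=1]-\Pr[\sfhyb{2}{}(1)=1]|$, since both sides output $\bot$ exactly when the certificate is rejected and otherwise output the bit chosen by $\cA_2$. Combining the three steps closes the proof.
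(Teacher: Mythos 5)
Your proposal is correct and takes essentially the same approach as the paper: the paper's own proof of this theorem simply defers to the proof of \cref{thm:everlasting_hiding_2}, and your three-hybrid argument (one-way to hiding plus unpredictability, the information-theoretic oracle re-distribution, and the OT-CD reduction) is exactly that proof, with the correct observation that the restriction to a one-bit message space and the repackaged verification algorithm play no role in the hiding argument.
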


\begin{theorem}\label{thm:computationally_hiding}
If $\Sigma_{\mathsf{com}}$ is unpredictable and $\Sigma_{\mathsf{skcd}}$ is OT-CD secure, then $\Sigma_{\mathsf{ccd}}$ is computationally hiding.
\end{theorem}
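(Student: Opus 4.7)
The plan is to mirror the hybrid-argument strategy already used for \cref{thm:computationally_hiding_2}, adapting it to the single-bit message case of \cref{thm:computationally_hiding}. Since the construction here is literally the same as in \cref{sec:bit_comm_construction} restricted to $\cM=\{0,1\}$, the proof carries over almost verbatim; the only reason to rewrite it is to make the appendix self-contained.

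Concretely, I would define three hybrids indexed by the challenge bit $b$. Let $\sfhyb{0}{}(b)\seteq\expa{\Sigma_{\mathsf{ccd}},\cA}{c}{hide}(\lambda,b)$ be the real experiment, in which $\cA$ receives $(\ske.\ct,f,h)$ with $\ske.\ct\lrun\SKE.\Enc(\ske.\sk,b)$, $f\seteq\Classical.\Commit(R;R')$, and $h\seteq H(R)\oplus\ske.\sk$, and may query the random oracle $H$. In $\sfhyb{1}{}(b)$, the oracle given to $\cA$ is replaced by $H_{R\to H'}$, i.e., $H$ reprogrammed at the single point $R$ using an independent uniformly random function $H'$, while the challenger still uses $H(R)$ to form $h$. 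In $\sfhyb{2}{}(b)$, the value $h$ is drawn uniformly at random and independently of everything else.

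The three transitions proceed as follows. First, to move from $\sfhyb{0}{}(b)$ to $\sfhyb{1}{}(b)$, I would apply the one-way to hiding lemma (\cref{lemma:one-way_to_hiding}) exactly as in \cref{prop:exp=hyb_1}: any non-negligible distinguishing advantage yields an extractor that, when measuring a uniformly chosen query of $\cA$, outputs $R$ with non-negligible probability; simulating $H$ efficiently via Zhandry's compressed oracle technique then contradicts the unpredictability of $\Sigma_{\cment}$. Second, $\sfhyb{1}{}(b)$ and $\sfhyb{2}{}(b)$ are identically distributed because $H(R)$ is a uniformly random independent string in $\sfhyb{1}{}(b)$ (the reprogrammed oracle reveals no information about $H(R)$), so $h=H(R)\oplus\ske.\sk$ is uniform. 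Third, to bound $|\Pr[\sfhyb{2}{}(0){=}1]-\Pr[\sfhyb{2}{}(1){=}1]|$, I would build an adversary $\cB$ against OT-CD security of $\Sigma_{\mathsf{skcd}}$: $\cB$ forwards a single-bit challenge pair $(0,1)$, receives $\ske.\ct_b$, samples $R,R'$ and a uniform $h$, computes $f\seteq\Classical.\Commit(R;R')$, simulates the reprogrammed oracle $H'$ internally, and returns $\cA$'s guess. Since $\cB$ never needs $\ske.\sk$ (it does not have to send a deletion certificate; computational hiding does not involve the key-reveal phase), this reduction is immediate and tighter than the everlasting one.

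The main obstacle, exactly as in the earlier proof, is handling the random oracle correctly in the first step: one must be careful that the one-way to hiding reduction only assumes efficiency of $\cA$ (which is a QPT adversary here, so there is no subtlety about ``before deletion'' as in \cref{thm:everlasting_hiding}), and that $\cB$ efficiently simulates quantum access to $H$ via the compressed oracle technique. Once the three propositions analogous to \cref{prop:exp=hyb_1_comp,prop:hyb_1=hyb_2_comp,prop:hyb_2_comp} are in place, combining them by the triangle inequality gives $\advb{\Sigma_{\mathsf{ccd}},\cA}{c}{hide}(\lambda)\le\negl(\lambda)$, which is the desired statement.
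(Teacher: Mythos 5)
Your proposal is correct and matches the paper's argument: the paper proves \cref{thm:computationally_hiding} by simply invoking the proof of \cref{thm:computationally_hiding_2}, which is exactly the three-hybrid structure you describe (reprogram the oracle at $R$ via one-way to hiding and unpredictability of $\Sigma_{\cment}$, replace $h$ by a uniform string via a perfect indistinguishability argument, then reduce to OT-CD security of $\Sigma_{\mathsf{skcd}}$). Your side remark that the reduction in the last step never needs $\ske.\sk$ is also consistent with the paper's treatment.
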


\begin{proof}[Proof of \cref{thm:sum_binding}]
What we have to prove is that for any $\cment$, $\dment$, and $\dment'$, it holds that 
\begin{align}
    \Pr[\Verify(\cment,\dment,0)=\top]+\Pr[\Verify(\cment,\dment',1)=\top]\leq 1+\negl(\lambda).
\end{align}
Let $\dment=(R_0,R_0')$, $\dment'=(R_1,R_1')$,
and $\cment=(\ske.\ct,f,h)$.
Then,
\begin{align}
&    \Pr[\Verify(\cment,\dment,0)=\top]+\Pr[\Verify(\cment,\dment',1)=\top]\\ 
&    = \Pr[0\la \SKE.\Dec(h\oplus H(R_0),\ske.\CT)\wedge f=\Classical.\Commit(R_0;R_0')]\\
&    + \Pr[1\la \SKE.\Dec(h\oplus H(R_1),\ske.\CT)\wedge f=\Classical.\Commit(R_1;R_1')]\\
&    \le \Pr[0\la \SKE.\Dec(h\oplus H(\tilde{R}),\ske.\CT)\wedge f=\Classical.\Commit(\tilde{R};R_0')]\\
&    + \Pr[1\la \SKE.\Dec(h\oplus H(\tilde{R}),\ske.\CT)\wedge f=\Classical.\Commit(\tilde{R};R_1')]\\
&    \le \Pr[0\la \SKE.\Dec(h\oplus H(\tilde{R}),\ske.\CT)]
    + \Pr[1\la \SKE.\Dec(h\oplus H(\tilde{R}),\ske.\CT)]\\
&    = \Pr[0\la \SKE.\Dec(h\oplus H(\tilde{R}),\ske.\CT) \vee
    1\la \SKE.\Dec(h\oplus H(\tilde{R}),\ske.\CT)]\\
&\le1,
\end{align} 
where we have used perfect binding of $\Sigma_{\mathsf{com}}$ in the second inequality. 

\if0
Since $\Classical.\Commit$ is perfectly binding and $\Verify$ outputs $\bot$ if $f\neq \Classical.\Commit(R;R')$, for any $\cment=(\ske.\ct,f,h)$ there exists uniquely determined $R$ such that verification algorithm $\Verify$ outputs $\top$.
Therefore, without loss of generality we can assume that the element of $\dment$ contains the unique $R$ mentioned above. 
This is because the verification algorithm always outputs $\bot$ when $\dment$ does not contain $R$ mentioned above and thus does not contribute to the probability $\Pr[\Verify(\cment,\dment,b)=\top]$.
Since $R$ is uniquely determined, $\ske.\sk'=h\oplus H(R)$ is also uniquely determined from $\cment$. 
If we define $p_b:=\Pr[\SKE.\Dec(\ske.\sk',\ske.\ct)\to b]$, it is obvious that $p_{b}$ satisfies the following equation
\begin{align}
    p_{0}+p_{1}\leq1.
\end{align}
Obliviously from our construction it holds that $p_{b}=\Pr[\Verify(\cment,\dment,b)=\top]$ and thus our construction satisfies sum-binding.
\fi
\end{proof}

\begin{proof}[Proof of Theorem~\ref{thm:everlasting_hiding}]
It is the same as that of \cref{thm:everlasting_hiding_2}.
\end{proof}

\begin{proof}[Proof of Theorem~\ref{thm:computationally_hiding}]
It is the same as that of \cref{thm:computationally_hiding_2}.
\end{proof}

  \section{Proof of \texorpdfstring{\cref{lemma:eachbit}}{Lemma~\ref{lemma:eachbit}}}\label{sec:proof_of_bit}
Let us consider the following hybrids for $j\in\{0,1,...,n\}$.
\begin{description}
\item[$\sfhyb{j}{}$:] $ $
\begin{enumerate}
\item $\cA_1$ generates $(m^0,m^1)\in \bit^n\times\bit^n$ and sends it to the challenger.
\item The challenger computes 
\begin{align}
(\cment_i(m_i^1),\dment_i(m_i^1),\ck_i(m_i^1))\leftarrow \Commit(1^{\lambda},m^1_i) 
\end{align}
for $i\in[j]$ and
\begin{align}
(\cment_i(m_i^0),\dment_i(m_i^0),\ck_i(m_i^0))\leftarrow \Commit(1^{\lambda},m^0_i)
\end{align}
for each $i\in\{j+1,...,n\}$,
and sends $\{\cment_i(m_i^1)\}_{i\in[j]}$ and $\{\cment_i(m_i^0)\}_{i\in\{j+1,...,n\}}$ to $\mathcal{A}_1$.
Here, $m_i^b$ is the $i$-th bit of $m^b$.
\item At some point, $\mathcal{A}_1$ sends $\{\cert_i\}_{i\in[n]}$ to the challenger, 
and sends its internal state to $\mathcal{A}_2$.
\item The challenger computes $\Cert(\cert_i,\ck_i(m_i^1))$ for each $i\in[j]$
and
$\Cert(\cert_i,\ck_i(m_i^0))$ for each $i\in\{j+1,...,n\}$.
If the outputs are $\top$ for all $i\in[n]$, then the challenger outputs $\top$,
and sends 
$\{\dment_i(m_i^1),\ck_i(m_i^1)\}_{i\in[j]}$ 
and
$\{\dment_i(m_i^0),\ck_i(m_i^0)\}_{i\in\{j+1,...,n\}}$ 
to $\cA_2$.
Else, the challenger outputs $\bot$, and sends $\bot$ to $\cA_2$.
\item $\cA_2$ outputs $b'\in\{0,1\}$.
\item If the challenger outputs $\top$, then the output of the experiment is $b'$. Otherwise, the output of the experiment is $\bot$.
\end{enumerate}
\end{description}
It is clear that
$\sfhyb{0}{}=\expb{\Sigma,\cA}{bit}{ever}{hide}(\lambda,0)$ and $\sfhyb{n}{}=\expb{\Sigma,\cA}{bit}{ever}{hide}(\lambda,1)$. 
%$\sfhyb{0}{}(0)=\sfhyb{0}{}(1)$,
%and
%$\sfhyb{j}{}(0)=\sfhyb{j+1}{}(0)$
%for all $j\in\{0,1,...,n-1\}$.
Furthermore, we can show  
\begin{align}
\left|\Pr[\sfhyb{j}{}=1]-\Pr[\sfhyb{j+1}{}=1]\right|\le\negl(\lambda)
\end{align}
for each $j\in\{0,1,...,n-1\}$. (Its proof is given below.)
From these facts, we obtain \cref{lemma:eachbit}.

Let us show the remaining one. To show it, let us assume that
$\left|\Pr[\sfhyb{j}{}=1]-\Pr[\sfhyb{j+1}{}=1]\right|$ is non-negligible.
Then, we can construct an adversary $\cB$ that can break the certified everlasting hiding of
$\Sigma_{\mathsf{ccd}}$ as follows.
\begin{enumerate}
\item
$\cB$ receives $(m^0,m^1)$ from $\cA_1$, and computes
\begin{align}
(\cment_i(m_i^1),\dment_i(m_i^1),\ck_i(m_i^1))\leftarrow \Commit(1^{\lambda},m^1_i)
\end{align}
for $i\in[j]$
and
\begin{align}
(\cment_i(m_i^0),\dment_i(m_i^0),\ck_i(m_i^0))\leftarrow \Commit(1^{\lambda},m^0_i)
\end{align}
for $i\in\{j+2,...,n\}$.
\item
$\cB$ sends $(m_{j+1}^0,m_{j+1}^1)$ to the challenger of 
$\expa{\Sigma_{\mathsf{ccd}},\cB}{ever}{hide}(\lambda,b')$,
and receives $\cment_{j+1}(m_{j+1}^{b'})$ from the challenger.
\item
$\cB$ sends 
$\{\cment_i(m_i^1)\}_{i\in[j]}$,
$\cment_{j+1}(m_{j+1}^{b'})$, and
$\{\cment_i(m_i^0)\}_{i\in\{j+2,...,n\}}$,
to $\cA_1$.
\item
$\cA_1$ sends $\{\cert_i\}_{i\in[n]}$ to $\cB$, and sends its internal state to $\cA_2$.
\item
$\cB$ sends $\cert_{j+1}$ to the challenger of
$\expa{\Sigma_{\mathsf{ccd}},\cB}{ever}{hide}(\lambda,b')$,
and receives 
$(\dment_{j+1}(m_{j+1}^{b'}),\ck_{j+1}(m_{j+1}^{b'}))$ or $\bot$ from the challenger.
If $\cB$ receives $\bot$ from the challenger, it outputs $\bot$ and aborts.
\item
$\cB$ sends all $\dment_i$ and $\ck_i$ to $\cA_2$.
\item
$\cA_2$ outputs $b''$.
\item
$\cB$ computes $\Cert$ for all $\cert_i$, and outputs $b''$ if all results are $\top$.
Otherwise, $\cB$ outputs $\bot$.
\end{enumerate}
It is clear that $\Pr[\cB\to1\mid b'=0]=\Pr[\sfhyb{j}{}=1]$ and
$\Pr[\cB\to1\mid b'=1]=\Pr[\sfhyb{j+1}{}=1]$.
By assumption,
$|\Pr[\sfhyb{j}{}=1]-\Pr[\sfhyb{j+1}{}=1]|$
is non-negligible, and therefore
$|\Pr[\cB\to1\mid b'=0]-\Pr[\cB\to1\mid b'=1]|$ is non-negligible,
which contradict the certified everlasting hiding of $\Sigma_{\mathsf{ccd}}$.

	\else
	%%%%%% full version appendix part %%%%%

	\fi
\fi

\ifnum\cameraready=1
\else
\ifnum\submission=1
\newpage
\setcounter{tocdepth}{1}
\tableofcontents
\else
\fi
\fi

\end{document}